\definecolor{darkred}  {rgb}{0.5,0,0}
\definecolor{darkblue} {rgb}{0,0,0.5}
\definecolor{darkgreen}{rgb}{0,0.5,0}
\crefname{lemma}{Lemma}{Lemmas}
\crefname{proposition}{Proposition}{Propositions}
\crefname{definition}{Definition}{Definitions}
\crefname{theorem}{Theorem}{Theorems}
\crefname{conjecture}{Conjecture}{Conjectures}
\crefname{corollary}{Corollary}{Corollaries}
\crefname{section}{Section}{Sections}
\crefname{appendix}{Appendix}{Appendices}
\crefname{figure}{Figure}{Figures}
\crefname{equation}{Eq.}{Eqs.}
\crefname{table}{Table}{Tables}
\crefname{claim}{Claim}{Claims}
\crefname{algorithm}{Algorithm}{Algorithms}
\newtheorem{theorem}{Theorem}
\newtheorem{lemma}[theorem]{Lemma}
\newtheorem{definition}[theorem]{Definition}
\newtheorem{corollary}[theorem]{Corollary}
\newtheorem*{conjecture*}{Conjecture}
\newtheorem{claim}[theorem]{Claim}
\theoremstyle{definition}
\newtheorem{algorithm}[theorem]{Algorithm}
\newcommand{\ket}[1]{|#1\rangle}
\newcommand{\bra}[1]{\langle#1|}
\newcommand{\tth}[0]{\textsuperscript{th}}
\DeclareMathOperator*{\argmin}{arg\,min}
\DeclareMathAlphabet{\matheu}{U}{eus}{m}{n}
\newcommand{\sop}[1]{{\mathcal #1}}
\newcommand{\braket}[2]{\langle{#1}|{#2}\rangle}
\newcommand{\eps}{\epsilon}
\newcommand{\edgeL}{\ell}
\newcommand{\Ohm}{c}
\newcommand{\norm}[1]{\left\| #1 \right\|}
\newcommand{\abs}[1]{\left| #1 \right|}
\def\tO{\widetilde{\mathrm{O}}}
\begin{document}

%+Title
\title{Quantum Algorithms for Connectivity and Related Problems}
\author[1]{Michael Jarret}
\affil[1]{Perimeter Institute}
\author[2]{Stacey Jeffery}
\affil[2]{Qusoft, CWI}
\author[3]{Shelby Kimmel}
\affil[3]{Middlebury College}
\author[2]{Alvaro Piedrafita}

\date{}
%\date{\today}

\maketitle
%-Title

\begin{abstract}
An important family of span programs, $st$-connectivity span programs, have
been used to design quantum algorithms in various contexts, including a number
of graph problems and formula evaluation problems. The complexity of the
resulting algorithms depends on the largest \emph{positive witness size} of
any 1-input, and the largest \emph{negative witness size} of any 0-input.
Belovs and Reichardt first showed that the positive witness size is exactly
characterized by the effective resistance of the input graph, but only rough
upper bounds were known previously on the negative witness size.
We show that the negative witness size in an $st$-connectivity span program is
exactly characterized by the \emph{capacitance} of the input graph. This gives
a tight analysis for algorithms based on $st$-connectivity span programs on
any set of inputs.

We use this analysis to give a new quantum algorithm for estimating the
capacitance of a graph. We also describe a new quantum algorithm for deciding if a graph is
connected, which improves the previous best quantum algorithm for this problem
if we're promised that either the graph has at least $\kappa>1$ components, or the graph is connected and has small \emph{average resistance}, which is upper bounded by the diameter. We also give an alternative algorithm for deciding if a graph is
connected that can be better than our first algorithm when the maximum degree is small. Finally, using ideas from our second connectivity algorithm, we give an algorithm for estimating the \emph{algebraic connectivity} of a graph, the second largest eigenvalue of the Laplacian. 
\end{abstract}

%%%%%%%%%%%%%%%%%%%%%%%%%%%%%%%%%%%%%%%%%%%%%%%%%%%%%%%%%%%%%%%%%%%%%%%%%%%%%%%
%%%%%%%%%%%%%%%%%%%%%%%%%%%%%%%%%%%%%%%%%%%%%%%%%%%%%%%%%%%%%%%%%%%%%%%%%%%%%%%

\section{Introduction}

Span programs are an algebraic model of computation first developed by
Karchmer and Wigderson \cite{KW93} to study classical logspace complexity, and
introduced to the study of quantum algorithms by Reichardt and Sp\v{a}lek
\cite{RS12}.  In \cite{R01,Rei09}, Reichardt used the concept of span programs
to prove that the general adversary bound gives a tight lower bound on the
quantum query complexity of any given decision problem, thus showing the deep
connection between span programs and quantum query algorithms.

Given a span program, a generic transformation compiles it into a quantum
algorithm, whose query complexity is analyzed by taking the geometric mean of
two quantities: the largest \emph{positive witness size} of any 1-input; and
the largest \emph{negative witness size} of any 0-input. Thus, in order to
analyze the query complexity of an algorithm obtained in this way, it is
necessary to characterize, or at least upper bound, these quantities. Moreover, there is always a span program based algorithm with asymptotically optimal quantum query complexity \cite{R01,Rei09}.

The relationship between quantum query algorithms and span programs is potentially a powerful tool, but this correspondence alone is not a
recipe for finding such an algorithm, and actually producing an optimal (or even good) span
program for a given problem is generally difficult. Despite this difficulty, a
number of span programs have been found for important problems such as $k$-distinctness
\cite{Bel12}, formula evaluation \cite{RS12,reichardt2010span}, and
$st$-connectivity \cite{BR12}. The latter span program is of particular
importance, as it has been applied to a number of graph problems
\cite{cade2016time}, to generic formula evaluation problems \cite{JK2017}, and
underlies the learning graph framework \cite{Bel11}. The $st$-connectivity based
algorithms are also of interest because, unlike with generic span program
algorithms, it is often possible to analyze not only query complexity, but
also the time complexity.

While span program algorithms are universal for quantum query algorithms, it can also be fruitful to analyze the unitaries used in these algorithms in ways that are different from how they appear in the standard span program algorithm. For example, Ref.~\cite{IJ15} presents an algorithm to estimate span program witness sizes based on techniques from the standard span program algorithm. We will take a similar approach in this paper, deriving new algorithms based on unitaries from the span program algorithm for $st$-connectivity.

The problems of $st$-connectivity and connectivity will be considered in this
paper. For a family of undirected graphs $G$ on $N$ edges, for
$N\in\mathbb{N}$, and vertex set containing $s$ and $t$, the problem
$st$-$\textsc{conn}_G$ is the following: Given $x\in\{0,1\}^{E(G)}$, decide if
there is a path from $s$ to $t$ in $G(x)$, where $G(x)$ is the subgraph of $G$
obtained by including an edge $e$ if $x_e=1$\footnote{We consider more
complicated ways of associating edges with input variables in
\cref{sec:graphPrelim}, but the basic idea is captured by this simpler
picture.}. Similarly, the problem of $\textsc{conn}_G$ is the following: Given
$x\in\{0,1\}^{E(G)}$, determine if every vertex in $G(x)$ is connected to
every other vertex in $G(x)$.

\subsection{Contributions}

In all of the following problems, we assume we have access to a black box
unitary $O_x$ that tells us about the presence or absence of edges in a graph
$G$, and the query complexity refers to the number of uses of $O_x$ to solve a problem with high probability. 

\paragraph{Characterizing the negative witness of $st$-connectivity span programs} 
An important span program for solving solving $st$-connectivity in subgraphs
of complete graphs without edge weights was presented in Ref.~\cite{BR12}.
When generalized to subgraphs of arbitrary weighted graphs\footnote{Assigning
positive weights to edges does not change whether or not a graph is
$st$-connected, but rather, the weights should be considered as parameters of
the span program that affect its complexities.}, this span program was applied to develop the learning graph framework \cite{Bel11},
and quantum algorithms for formula evaluation~\cite{JK2017}.

In Ref.~\cite{BR12}, Belovs and Reichardt gave a tight characterization of the
positive witness size as the \emph{effective resistance} between $s$ and $t$
in the input graph. However, for the negative witness size of an input in
which $s$ and $t$ are not connected, they gave only a rough upper bound of
$n^2$, which is refined in \cite{Bel11} to the total weight of an $st$-cut,
which is still not a tight bound. In Ref.~\cite{JK2017}, it was shown that
when the parent graph is planar and $s$ and $t$ are on the same face, the
negative witness can be characterized exactly as the effective resistance of a
graph related to the planar dual of the parent graph. 
In particular, this allowed for a tight analysis of $st$-connectivity-based span program algorithms for
formula evaluation in \cite{JK2017}.

In this work, we bring the story to its conclusion, by showing that the
negative witness size of the $st$-connectivity span program is exactly
characterized by the \emph{effective capacitance} of the input graph (\cref{thm:negwit-capacitance}). At a
high-level, this well-studied electrical network quantity is a measure of the
potential difference that the network could store between the component
containing $s$ and the component containing $t$. The more, shorter paths
between these two components in the graph $G\setminus G(x)$, the greater the
capacitance. This characterization tells us
that quantum algorithms can quickly decide $st$-connectivity on graphs that
are promised to have either small effective resistance or small effective
capacitance.

\paragraph{Quantum algorithm for estimating $st$-capacitance} As one immediate
application, we get a new quantum algorithm for estimating the capacitance of
an input graph $G(x)$ to multiplicative error $\varepsilon$, with query complexity
$\widetilde{O}(\varepsilon^{-3/2}\sqrt{C_{s,t}(G(x))p})$, where $C_{s,t}(G(x))$
is the $st$-capacitance of $G(x)$, and $p$ is the length of the longest self-
avoiding $st$-path in $G$ (\cref{cor:estimating-capacitance-query}). This follows from Ref.~\cite{IJ15}, which shows that
given any span program, there is a quantum algorithm that, on input $x$,
outputs an estimate of the witness size of $x$.

\paragraph{New quantum algorithm for connectivity} 
We use this tighter analysis of the negative witness to analyze a new algorithm for graph
connectivity. This problem was first studied in the context of quantum
algorithms by D\"urr, H{\o}yer, Heiligman and Mhalla \cite{durr2006quantum},
who gave an optimal $\widetilde{O}(n^{3/2})$ upper bound on the time
complexity. An optimal span-program-based quantum algorithm was later
presented by {\={A}}ri{\c{n}}{\v{s}} \cite{Arins2016}, whose algorithm also
uses only $O(\log n)$ space. 

Since a graph is connected if and only if every pair of vertices $\{u,v\}$ are
connected, we propose an algorithm that uses the technique of \cite{Nisan:1995:SLC:225058.225101,JK2017} to
convert the conjunction of $\binom{n}{2}$ $st$-connectivity span programs into a
single $st$-connectivity span program: take $n(n-1)/2$ copies of $G(x)$, one for
each pair of distinct vertices $\{u,v\}$ with $u<v$, and call $u$ the source
and $v$ the sink of this graph. Connect these graphs in series, in any
 order, by identifying the sink of one to the source of the next.
Call the source of the first graph $s$, and the sink of the last graph $t$.
See \cref{fig:G'} for an example when $G$ is a triangle. In this way we
have created a graph (which we denote $\mathcal{G}(x)$) that is $st$-connected if and only if
$G(x)$ is connected. In other words, for any $x\in\{0,1\}^{E(G)}$,
$\textsc{conn}_G(x)=st$-$\textsc{conn}_{\mathcal{G}}(x)$.

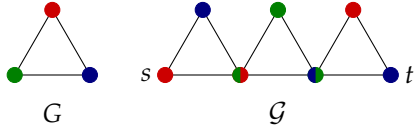
\begin{figure}[h]
\centering
\begin{tikzpicture}

\draw (0,0)--(1,0)--(.5,.866)--(0,0);
\filldraw[black!50!green] (0,0) circle (.1);
\filldraw[black!20!red] (.5,.866) circle (.1);
\filldraw[black!50!blue] (1,0) circle (.1);

\node at (.5,-.5) {$G$};

\draw (2,0)--(5,0)--(4.5,.866)--(4,0)--(3.5,.866)--(3,0)--(2.5,.866)--(2,0);
\filldraw[black!50!green] (3.5,.866) circle (.1);
\filldraw[black!20!red] (4.5,.866) circle (.1);
\filldraw[black!50!blue] (2.5,.866) circle (.1);

\filldraw[black!20!red] (2,0) circle (.1);
\filldraw[black!50!blue] (5,0) circle (.1);

\filldraw[black!20!red] (3,0) circle (.1);
\filldraw[black!50!green] (4,0) circle (.1);

\filldraw[black!50!green] (3,.1) arc (90:270:.1);
\filldraw[black!50!blue] (4,.1) arc (90:270:.1);

\node at (3.5,-.5) {$\mathcal{G}$};
\node at (1.75,0) {$s$};
\node at (5.25,0) {$t$};

\end{tikzpicture}
\caption{The graph $\mathcal{G}$ is $st$-connected if and only of $G$ is connected.}\label{fig:G'}
\end{figure}

For a graph $G(x)$, define the average resistance as 
$R_{\mathrm{avg}}(G(x))=\frac{1}{n(n-1)}\sum_{s,t:s\neq t}R_{s,t}(G(x))$. We
consider the case where we are promised that if $G(x)$ is connected, then
$R_{\mathrm{avg}}(G(x))\leq R$, and if $G(x)$ is not connected it has at least
$\kappa >1$ components. By analyzing the effective resistance and capacitance
of ${\cal G}$, we show that when $G$ is a subgraph of a complete graph,
meaning it has no multi-edges, $\textsc{conn}_{G}$ under this promise can be
solved in query complexity ${O}(n\sqrt{R/\kappa})$ 
(\cref{thm:connectivity-algorithm}), and time complexity $\widetilde{O}(n\sqrt{R/\kappa}\mathsf{U})$,
where $\mathsf{U}$ is the cost of implementing one step of a quantum walk on
$G$ (\cref{cor:connectivity-algorithm}). For the case when $G$ has multi-edges, we get an upper bound of $O(n^{3/4}\sqrt{Rd_{\max}(G)}/\kappa^{1/4})$
on the query complexity, where $d_{\max}(G)$ is the maximum degree of any vertex in the graph.

In the worst case, when $R=n$ and $\kappa=2$, our algorithm for the case when
$G$ is a subgraph of the complete graph achieves the optimal upper bound of
$\widetilde{O}(n^{3/2})$. Like the algorithm of Ref.~\cite{Arins2016}, our
algorithm uses only $O(\log n)$ space. It is also the first connectivity algorithm that applies to the the
case where $G$ is not necessarily the complete graph, although the other
algorithms can likely be adapted to the more general case.

The algorithm of {\={A}}ri{\c{n}}{\v{s}} 
 can be seen as similar to ours, except that rather than connecting copies of
 $G(x)$ for each $\{u,v\}$ pair, his algorithm only considers pairs $\{1,v\}$
 for $v\neq 1$. In contrast, our algorithm is symmetric in the vertex set,
 which makes a detailed analysis more natural.

\paragraph{Alternative quantum algorithms for connectivity} In
\cref{SpecConnectivity}, we present an alternative approach to deciding graph
connectivity. It is based on phase estimation of a particular unitary that is
also used in the  $st$-connectivity span program,
but applied to a different initial state.

We first show that the quantum query complexity of deciding $\textsc{conn}_G$
is $O(\sqrt{nd_{\max}(G)/(\kappa\lambda)})$, when we're promised that if
$G(x)$ is connected, the second smallest eigenvalue of the Laplacian of
$G(x)$, $\lambda_2(G(x))$, is at least $\lambda$, and otherwise, $G(x)$ has at
least $\kappa>1$ connected components (\cref{cor:query}). In the unweighted worst case, 
$\lambda_2(G(x))\geq 2/n^2$ and $d_{\max}=n-1$, which gives a sub-optimal
$O(n^2)$ algorithm. However, for some classes of inputs, this algorithm
performs better than our first algorithm. Neglecting constants, and using the
fact that $R_{\mathrm{avg}}(G(x))\leq 1/\lambda_2(G(x))$, out first algorithm
has query complexity (in the case of no multi-edges)
\begin{equation}
T_1=n\sqrt{R/\kappa}\leq n/\sqrt{\kappa\lambda}
\end{equation}
whereas our second algorithm has query complexity 
\begin{equation}
T_2=\sqrt{nd_{\max}(G)}/\sqrt{\kappa\lambda}.
\end{equation}
When $G$ is a complete graph, our second algorithm can only be worse, since in
that case $d_{\max}(G)=n-1$. However, when $G$ is a Boolean hypercube so that
$d_{\max}=\log n$, our second algorithm may be significantly better.

However, thus far we have only been considering the query complexity of our
second algorithm. This algorithm also requires an initial state of a particular
form, and while this state is independent of the input, it may generally not
be time efficient to produce such a state. We are able to give time-efficient
versions of our second algorithm in two contexts.

First, in \cref{thm:any-G}, we show that for any $G$, under the promise that
if $G(x)$ is connected, then $\lambda_2(G(x))\geq \lambda$, and otherwise
$G(x)$ has at least $\kappa>1$ connected components, we can solve
$\textsc{conn}_G$ in time complexity
\begin{equation}
\widetilde{O}\left(\sqrt{\frac{nd_{\mathrm{avg}}(G)}{\kappa\lambda_2(G)}}\left(\mathsf{S}+\sqrt{\frac{d_{\mathrm{max}}(G)}{\lambda}}\mathsf{U}\right)\right),
\end{equation}
where $\mathsf{U}$ is the complexity of implementing a step of a quantum walk
on $G$, $\mathsf{S}$ is the cost generating a quantum state corresponding
to the stationary distribution of a random walk on $G$, and $d_{\mathrm{avg}}(G)$ is the average degree of the vertices of $G$. This time complexity might generally be
significantly worse than the query complexity, but has the advantage of
applying to all $G$.

Second, in \cref{thm:Cayley-graph}, we give an improved time complexity for
the case where $G$ is a Cayley graph of degree $d$, showing an upper bound of
\begin{equation}
\widetilde{O}\left(\sqrt{\frac{nd}{\kappa\lambda}}\mathsf{U}+\sqrt{\frac{nd}{\kappa\lambda_2(G)}}\mathsf{\Lambda} \right),
\end{equation}
where, as above, $\mathsf{U}$ is the cost of implementing a step of a quantum
walk on $G$, and $\mathsf{\Lambda}$ is the cost of computing the eigenvalues
of $G$; that is, given $g\in V(G)$, computing $\lambda_g$, the corresponding
eigenvalue. For example, this gives an upper bound of
$\widetilde{O}(n/\sqrt{\lambda\kappa})$ when $G$ is a complete graph, and
$\widetilde{O}(\sqrt{n/(\lambda\kappa)})$ when $G$ is a Boolean hypercube.

We remark that our alternative connectivity algorithms apply for any choice of
edge weights on $G$: $d_{\mathrm{avg}}(G)$ and $d_{\max}(G)$ should be
interpreted as the average and maximum \emph{weighted} degrees in $G$, and
$\lambda_2(G)$ and $\lambda_2(G(x))$ the second-smallest eigenvalue of the
\emph{weighted} Laplacian of $G$ and $G(x)$ respectively. The choice of
weights may also impact the costs $\mathsf{U}$ and $\mathsf{S}$.

\paragraph{Estimating the algebraic connectivity}
We give an algorithm to estimate the \emph{algebraic connectivity} of $G(x)$, $\lambda_2(G(x))$, when
$G$ is a complete graph. The algebraic connectivity is closely related to the inverse of the mixing time, which is known to be small for many interesting families of graphs such as expander graphs. We give a protocol that with probability
at least 2/3 outputs an estimate of $\lambda_2(G(x))$ up to multiplicative error
$\varepsilon$ in time complexity
$\widetilde{O}\left(\frac{1}{\varepsilon}\frac{n}{\sqrt{\lambda_2(G(x))}}\right)$ (\cref{thm:Connectivity-estimation}).

\subsection{Open Problems}
Our work suggests several directions for new research. Since $st$-connectivity is fairly ubiquitous, it seems that our approach may, in turn, help analyze applications of $st$-connectivity. Additionally, we provide two algorithms for deciding connectivity, in \cref{sec:connectivity} and \cref{SpecConnectivity}. At least naively, it seems like our two algorithms are incomparable, even though they are based on similar unitaries. It would be worthwhile to understand whether the two approaches are fundamentally different. Another open question is to determine how to set the weights of edges in our graphs; these weights can have a significant effect on query complexity. Finally, it would be interesting to see whether one can extend our algorithm for estimating algebraic connectivity to accept more general parent graphs than the complete graph.

\subsection{Organization}

The rest of this paper is organized as follows. In \cref{sec:Preliminaries},
we introduce the necessary background on which we build our results, including
basic notation (\cref{sec:LinAlg}), graph theory (\cref{sec:graphPrelim}),
quantum algorithms (\cref{sec:alg}) and span programs (\cref{sec:prelimSpan}).
In \cref{sec:ConnectSP}, we show that the negative witness size of an
$st$-connectivity span program is the effective capacitance of the graph, then
in \cref{sec:Applications}, we give two applications of this observation: The
first is a quantum algorithm for estimating the effective capacitance of a
graph (\cref{sec:estimatingCapacitance}); and the second is our first quantum
algorithm for deciding connectivity, as a composition of $st$-connectivity
span programs (\cref{sec:connectivity}). Finally, in \cref{SpecConnectivity},
we give our second algorithm for deciding connectivity, based on estimating
the second-smallest eigenvalue of the Laplacian, and also give an algorithm
for estimating the algebraic connectivity of a graph.

\section{Preliminaries}\label{sec:Preliminaries}
\subsection{Linear Algebra Notation}\label{sec:LinAlg}

For a subspace $V$ of some inner product space, we let $\Pi_V$ denote the
orthogonal projector onto $V$.

For a linear operator $A$, we will let $\sigma_{\min}(A)$ denote its smallest
non-zero singular value, and $\sigma_{\max}(A)$ its largest singular value. We
let $\ker A$ denote the kernel of $A$, $\mathrm{row}(A)$ denote the rowspace
of $A$, and $\mathrm{col}(A)$ the columnspace of $A$. We let $A^+$ denote the
Moore-Penrose pseudoinverse of $A$. If $A$ has singular value decomposition
$A=\sum_i\sigma_i\ket{a_i}\bra{b_i}$, (for left singular vectors
$\{\ket{a_i}\}$ and right singular vectors $\{\ket{b_i}\}$) then $A^+=\sum_i
1/\sigma_i \ket{b_i}\bra{a_i}$. Then we have $AA^+=\Pi_{\mathrm{col}(A)}$ and
$A^+A=\Pi_{\mathrm{row}(A)}$.

For a unitary $U$ with eigenvalues $e^{i\theta_1},\dots,e^{i\theta_N}$ for $\theta_1,\dots,\theta_N\in (-\pi,\pi]$, let
$\Delta(U)=\min\{|\theta_i|:\theta_i\neq 0\}$ denote the \emph{phase gap} of
$U$.

\subsection{Graph Theory}\label{sec:graphPrelim}

\paragraph{Multigraphs} We will consider multigraphs, which may have multiple edges between a pair of vertices. Thus, to each edge, we associate a unique identifying
label $\ell$. We refer to each edge in the graph using its endpoints and the
label $\edgeL$, as, for example: $(\{u,v\},\edgeL)$. The label $\edgeL$
uniquely specifies the edge, but we include the endpoints for convenience. Let
$\overrightarrow{E}(G)=\{(u,v,\edgeL ):(\{u,v\},\edgeL )\in E(G)\}$ be the
directed edges of~$G$. Furthermore, for any set of edges $E$, we let
$\overrightarrow{E} =\{(u,v,\edgeL ):(\{u,v\},\edgeL )\in E\}$ represent the
corresponding set of directed edges. We will sometimes write $(u,v,\ell)$ for an undirected edge, but when talking about undirected edges, we have $(u,v,\ell) = (v,u,\ell)$.

We will be concerned with certain subgraphs of a graph $G$, associated with
bit strings of length $N$. We denote by $G(x)$ the subgraph associated with
the string $x\in \{0,1\}^N$. In particular, each edge in $G$ is associated
with a variable $x_i$ or its negation $\overline{x_i}$, called a
\emph{literal}, and is included in $G(x)$ if and only if the associated
literal evaluates to 1. Here $x_i$ is the $i\tth$ bit of $x$. For example, an
edge $(u,v,\edgeL)$ associated with the literal $\overline{x_i}$ is in $G(x)$
if and only if $x_i$ takes value $0$, as in \cref{fig:Gx}. Precisely how this
association of edges and literals is chosen depends on the problem of
interest, so we will leave the description implicit, and often assume for simplicity that there is a one-to-one mapping between the edges and positive literals.

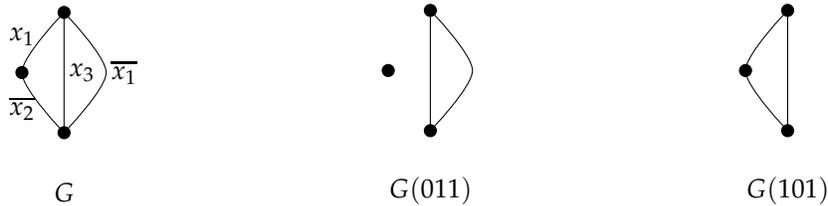
\begin{figure}[ht]
\centering
\begin{tikzpicture}[scale = .95]
\node at (0,0) {\begin{tikzpicture}[scale=.8]
\filldraw (0,1) circle (.1);
\filldraw (0,-1) circle (.1);
\filldraw (-.7,0) circle (.1);

\draw (0,1)--(0,-1);
\draw plot [smooth] coordinates{(0,1) (-.7,0) (0,-1)};
\draw plot [smooth] coordinates{(0,1) (.7,0) (0,-1)};

\node at (-.7,.6) {$x_1$};
\node at (-.7,-.6) {$\overline{x_2}$};
\node at (.3,0) {$x_3$};
\node at (1,0) {$\overline{x_1}$};

\node at (0,-2) {$G$};
\end{tikzpicture}};
\node at (5,0) {\begin{tikzpicture}[scale=.8]
\filldraw (0,1) circle (.1);
\filldraw (0,-1) circle (.1);
\filldraw (-.7,0) circle (.1);

\draw (0,1)--(0,-1);
\draw plot [smooth] coordinates{(0,1) (.7,0) (0,-1)};

\node at (0,-2) {$G(011)$};
\end{tikzpicture}};
\node at (10,0) {\begin{tikzpicture}[scale=.8]
\filldraw (0,1) circle (.1);
\filldraw (0,-1) circle (.1);
\filldraw (-.7,0) circle (.1);

\draw (0,1)--(0,-1);
\draw plot [smooth] coordinates{(0,1) (-.7,0) (0,-1)};

\node at (0,-2) {$G(101)$};
\end{tikzpicture}};
\end{tikzpicture}
\caption{Example of how each edge in $G$ is associated with a bit of $x$ and a
value of that bit. For edges labeled by $x_i$ we include the
edge in $G(x)$ if $x_i=1$, while for edged labeled by $\overline{x_i}$, we
include the edge in $G(x)$ if $x_i=0$.}\label{fig:Gx}
\end{figure}

\paragraph{Networks} A \emph{network} $\sop N=(G,c)$ consists of a graph $G$
combined with a positive real-valued \emph{weight} function
$\Ohm:E(G)\longrightarrow\mathbb{R}^+$. Since $c$ is a map on undirected
edges, we can easily extend it to a map on directed edges such that $c(u,v,\ell)
= c(v,u,\ell)$, and we overload our notation accordingly. We will often assume
that some $c$ is implicit for a graph $G$ and let
\begin{equation}
{\cal A}_G=\sum_{(u,v,\edgeL )\in E(G)}c(u,v,\ell)(\ket{u}\bra{v}+\ket{v}\bra{u})
\end{equation}
denote its weighted adjacency matrix. Note that ${\cal A}_G$ only depends on
the total weight of edges from $u$ to $v$, and is independent of the number of
edges across which this weight is distributed. Let
$d_G(u)=\sum_{v,\edgeL:(u,v,\edgeL)\in E(G)}c(u,v,\edgeL)$ denote the weighted
degree of $u$ in $G$, under the implicit weight function $c$, and let
$d_{\max}(G)=\max_{u\in V(G)}d_G(u)$. Let
\begin{equation}
{\cal D}_G = \sum_{u\in V(G)}d_G(u)\ket{u}\bra{u}
\end{equation}
denote the weighted degree matrix, and let 
\begin{equation}
L_G = {\cal D}_G - {\cal A}_G
\end{equation}
denote the Laplacian of $G$. The Laplacian is always positive semidefinite, so
its eigenvalues are real and non-negative. For $\ket{\mu}=\sum_{u\in
V(G)}\ket{u}$, it is always the case that $L_G\ket{\mu} = 0$, so the smallest
eigenvalue of $L_G$ is 0. Let $\lambda_2(G)$ denote the second smallest
eigenvalue of $L_G$, including multiplicity. This value is called the
\emph{algebraic connectivity} or the \emph{Fiedler value} of $G$, and it is
non-zero if and only if $G$ is connected.

\paragraph{Electric networks} Consider a graph $G$ with specially labeled
vertices $s$ and $t$ that are connected in $G$. One can consider a fluid that
enters a graph $G$ at $s$, flows along the edges of the graph, and exits the
graph at $t$. The fluid can spread out along some number of the $st$-paths in
$G$. An \emph{$st$-flow} is any linear combination of $st$-paths. More
precisely:
\begin{definition}[Unit $st$-flow]\label{def:st-flow}
Let $G$ be an undirected graph with $s,t\in V(G)$, and $s$ and $t$ connected.
Then a \emph{unit $st$-flow} on $G$ is a function
$\theta:\overrightarrow{E}(G)\rightarrow\mathbb{R}$ such that:
\begin{enumerate}
\item For all $(u,v,\edgeL)\in \overrightarrow{E}(G)$, $\theta(u,v,\edgeL )=-\theta(v,u,\edgeL)$;
\item $\sum_{v,\edgeL:(s,v,\edgeL)\in \overrightarrow{E}(G)}\theta(s,v,\edgeL)=\sum_{v,\edgeL:(v,t,\edgeL)\in \overrightarrow{E}(G)}\theta(v,t,\edgeL)=1$; and 
\item for all $u\in V(G)\setminus\{s,t\}$, $\sum_{v,\edgeL:(u,v,\edgeL)\in \overrightarrow{E}(G)}\theta(u,v,\edgeL)=0$. 
\end{enumerate}
\end{definition}

\begin{definition}[Unit Flow Energy]\label{def:unitFlowEnergy}
Given a graph $G$ with implicit weighting $c$ and a unit $st$-flow $\theta$ on $G(x)$, the \emph{unit flow energy} of $\theta$ on $E'\subseteq E(G(x))$, is
\begin{align}
J_{E'}(\theta)=\frac{1}{2}\sum_{e\in {\overrightarrow{E'}}}\frac{\theta(e)^2}{c(e)}.
\end{align}

\end{definition}

\begin{definition}[Effective resistance] Let $G$ be a graph with implicit weighting $c$ and $s,t\in V(G)$.
If $s$ and $t$ are connected in $G(x)$, the \emph{effective resistance} of
$G(x)$ between $s$ and $t$ is $R_{s,t}(G(x)) = \min_{\theta}
J_{E(G(x))}(\theta)$, where $\theta$ runs over all unit $st$-unit flows of
$G(x)$. If $s$ and $t$ are not connected in $G(x)$, $R_{s,t}(G(x))=\infty.$
\label{def:effRes}
\end{definition}

Intuitively, $R_{s,t}$ characterizes ``how connected'' the vertices $s$ and
$t$ are in a network. The more, shorter paths connecting $s$ and $t$, and the
more weight on those paths, the smaller the effective resistance.
 
The effective resistance has many applications. For example,
$R_{s,t}(G)\left(\sum_{e\in E(G)}c(e)\right)$ is equal to the \emph{commute
time} between $s$ and $t$, or the expected time a random walker starting from
$s$ takes to reach $t$ and then return to $s$
\cite{CRRST96}. If $\sop N=(G,c)$ models an electrical
network in which each edge $e$ of $G$ is a $1/c(e)$-valued resistor and a
potential difference is applied between $s$ and $t$, then $R_{s,t}(\sop N)$
corresponds to the resistance of the network, which determines the ratio of
current to voltage in the circuit (see \cite{DS84}). Thus, the values $c(e)$
can be interpreted as conductances.

For a connected graph $G$, we can define the \emph{average resistance} by:
$$R_{\mathrm{avg}}(G):=\frac{1}{n(n-1)}\sum_{s,t\in V:s\neq t}R_{s,t}(G).$$

Now that we have a measure of the connectedness of $s$ and $t$ in a graph $G$,
we next introduce a measure of how disconnected $s$ and $t$ are, in the case that
we are considering a subgraph $G(x)$ of $G$ where $s$ and $t$ are not
connected.

\begin{definition}[Unit $st$-potential]
Let $G$ be an undirected weighted graph with $s,t\in V(G)$, and $s$ and $t$
connected. For $G(x)$ such that $s$ and $t$ are not connected, a \emph{unit
$st$-potential} on $G(x)$ is a function $\sop V:V(G)\rightarrow\mathbb{R}^+$
such that $\sop V(s)=1$ and $\sop V(t)=0$ and $\sop V(u)=\sop V(v)$ if
$(u,v,\edgeL)\in E(G(x)).$
\label{def:dualPotential}
\end{definition}

Note that this is a different definition from the typical potential function.
Usually, if we have a flow from a vertex $s$ to a vertex $t$, we define the
potential difference between $u$ and $v$ for an edge $(u,v,\edgeL)$ to be the
amount of flow across that edge divided by the weight of the edge. In our
definition, the potential difference across all edges in $E(G(x))$ is zero,
and we have potential difference across edges that are in $E(G)\setminus
E(G(x))$.

A unit $st$-potential is a witness of the disconnectedness of $s$ and $t$ in
$G(x)$, in the sense that it is a generalization of the notion of an $st$-cut.
(An $st$-cut is a unit potential that only takes values $0$ and $1$.)

\begin{definition}[Unit Potential Energy]\label{def:unitPotentEnergy}
Given a graph $G$ with implicit weighting $c$ and a unit $st$-potential $\sop
V$ on $G(x)$, the \emph{unit potential energy} of $\sop V$ on $E'\subseteq
E(G)$ is defined
\begin{align}
\sop J_{E'}(\sop V)=\frac{1}{2}\sum_{(u,v,\edgeL )\in {\overrightarrow{E'}}}(\sop V(u)-\sop V(v))^2c(u,v,\edgeL ).
\end{align}
\end{definition}

\begin{definition}[Effective capacitance] Let $G$ be a graph with implicit weighting $c$ and $s,t\in V(G)$.
If $s$ and $t$ are not connected in $G(x)$, the \emph{effective capacitance}
between $s$ and $t$ of $G(x)$ is $C_{s,t}(G(x)) = \min_{\sop V} \sop
J_{E(G)}(\sop V)$, where $\sop V$ runs over all unit $st$-potentials on
$G(x)$. If $s$ and $t$ are connected, $C_{s,t}(G(x))=\infty.$
\label{def:effCap}
\end{definition}

In physics, capacitance is a measure of how well a system stores electric
charge. The simplest capacitor is a set of separated metal plates at a fixed
distance. To see how a capacitor works, we imagine one terminal of a battery attached to each plate. The battery acts as a sort of pump that moves negative charges from one plate to the other, against their natural tendencies. Because like charges repel, as the plates become increasingly polarized, it
requires more energy to move additional charges. In other
words, as additional charge accumulates, the voltage difference between the plates increases and, correspondingly, the voltage that it takes to overcome the differential increases. At equilibrium, the voltage difference between the plates will be equal to the voltage difference of the terminals of the battery. The ratio of the amount of charge moved to the voltage difference
created between the plates is a property of the capacitor itself and depends only on its geometry. This ratio is called its effective capacitance.

Now consider a capacitor that corresponds to the graph $G(x)$ in which a
$0$-resistance wire is connected between vertices whenever there is an edge{}
in $G(x)$, and a $c(e)$-unit capacitor is connected between vertices whenever
there is an edge $e\in E(G)\setminus E(G(x))$. If $s$ and $t$ are not
connected in $G(x)$, it is as though $s$ and $t$ are on separate ``plates''
(with some complicated geometry) that can accumulate charge relative to each
other. Then the effective capacitance given in \cref{def:effCap} is precisely
the ratio of charge (accumulated on the plates corresponding to $s$ and
$t$) to voltage (on those plates) that is achieved when electrical energy is
stored in this configuration. We make the connection between \cref{def:effCap}
and the standard definition of effective capacitance, as well as effective
conductance, more explicitly in \cref{ap:effCapDef}.

\cref{def:effRes,def:effCap} may seem unwieldy for actually calculating the
effective resistance and effective capacitance, so we now recall that when
calculating effective resistance, $R_{s,t}$ (respectively effective
capacitance $C_{s,t}$), one can use the rule that for edges in series, or more
generally, graphs connected in series, resistances add (resp. inverse
capacitances add). Edges in parallel, or more generally, graphs connected in
parallel, follow the rule that inverse resistances add (resp. capacitances
add). That is:

\begin{claim}\label{claim:parallel_series}
Let two networks $(G_1,c_1)$ and $(G_2,c_2)$ each have connected nodes $s$ and
$t$. Let $G(x_1)$ and $G(x_2)$ be subgraphs of $G_1$ and $G_2$ respectively.
Then we consider a new graph $G$ by identifying the $s$ nodes and the $t$
nodes of $G_1$ and $G_2$ (i.e. connecting the graphs in parallel) and define
$c:E(G)\rightarrow\mathbb{R}^+$ by $c(e)=c_1(e)$ if $e\in E(G_1)$ and
$c(e)=c_2(e)$ if $e\in E(G_2)$. Similarly, we set $G(x)$ to be the subgraph of
$G$ that includes the corresponding edges $e$ such that $e\in E(G_1(x_1))$ or
$e\in E(G_2(x_2))$. Then
\begin{align}\label{eq:parallel}
\frac{1}{R_{s,t}(
G(x))}=\frac{1}{R_{s,t}(
G_1(x))}+\frac{1}{R_{s,t}(
G_2(x))},\qquad
C_{s,t}(G(x))=C_{s,t}(G_1(x_1))+C_{s,t}(G_2(x_2))
\end{align}
If we create a new graph $G$ by identifying the $t$ node of $G_1$ with the $s$
node of $G_2$, relabeling this node $v\not\in\{s,t\}$ (i.e. connecting the
graphs in series) and define $c$ and $G(x)$ as before, then
\begin{align}\label{eq:series}
R_{s,t}(G(x))=R_{s,t}(G_1(x))+R_{s,t}(G_2(x)),\qquad
\frac{1}{C_{s,t}(G(x))}=\frac{1}{C_{s,t}(G_1(x_1))}+\frac{1}{C_{s,t}(G_2(x_2))}.
\end{align}
\end{claim}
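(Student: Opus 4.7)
The plan is to derive all four equalities directly from the variational definitions in \cref{def:effRes,def:effCap}. In each case, a unit $st$-flow or unit $st$-potential on $G(x)$ restricts uniquely to compatible objects on $G_1(x_1)$ and $G_2(x_2)$; conversely, any compatible pair of pieces glues back into a unit flow or potential on $G(x)$. The energies decompose additively across the two pieces, and the only remaining work is to solve a one-parameter optimization over how flow or potential is split between the two subnetworks.

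For parallel resistance (left equation of \cref{eq:parallel}) and its dual, series capacitance (right equation of \cref{eq:series}), such a free parameter appears. In the parallel case, $V(G_1)\cap V(G_2)=\{s,t\}$, so flow conservation forces the restriction $\theta_i$ of a unit $st$-flow $\theta$ on $G(x)$ to be an $\alpha_i$-multiple of a unit $st$-flow on $G_i(x_i)$, with $\alpha_1+\alpha_2=1$. After independently optimizing the shape of each piece, one is left minimizing $\alpha_1^2 R_{s,t}(G_1(x_1))+\alpha_2^2 R_{s,t}(G_2(x_2))$ subject to $\alpha_1+\alpha_2=1$, and a one-line Lagrange-multiplier calculation yields the reciprocal-sum formula. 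In the series-capacitance case, $V(G_1)\cap V(G_2)=\{v\}$, and a unit $st$-potential $V$ on $G(x)$ takes some value $\alpha:=V(v)$ at the identified vertex. The rescaled restrictions $(V|_{V(G_1)}-\alpha)/(1-\alpha)$ and $V|_{V(G_2)}/\alpha$ are then unit potentials between $s$ and $v$ on $G_1(x_1)$ and between $v$ and $t$ on $G_2(x_2)$ respectively (capacitance is symmetric in its two distinguished nodes, so the labeling does not matter). The quadratic scaling of $\sop{J}$ gives total energy $(1-\alpha)^2 C_{s,t}(G_1(x_1))+\alpha^2 C_{s,t}(G_2(x_2))$ after optimizing the shapes, and differentiating in $\alpha$ yields the inverse-sum formula.

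For series resistance (left equation of \cref{eq:series}) and parallel capacitance (right equation of \cref{eq:parallel}) there is no free splitting parameter. For series resistance, summing flow conservation at all vertices of $V(G_1)\setminus\{v\}$ in a unit $st$-flow $\theta$ on $G(x)$ shows that the net inflow into $v$ along edges of $G_1$ equals $1$, so the restriction $\theta_1$ is already a unit flow between $s$ and $v$ on $G_1(x_1)$, and likewise $\theta_2$ is a unit flow on $G_2(x_2)$. The energies add over the disjoint edge sets, and the two minimizations are independent. For parallel capacitance, the restriction of a unit $st$-potential on $G(x)$ to each $V(G_i)$ is itself a unit $st$-potential on $G_i(x_i)$, since $s$ and $t$ are the only shared vertices and their values are already fixed; the energies add and the two minimizations are again independent.

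The main obstacle is bookkeeping: one must verify that the restriction and gluing operations are genuine bijections between unit flows/potentials on $G(x)$ and compatible pairs on $(G_1(x_1),G_2(x_2))$, so that both directions of each $\min$ equality hold, and one must check that the conventions $R_{s,t}=\infty$ when $s$ and $t$ are disconnected and $C_{s,t}=\infty$ when they are connected make all four formulas hold in the degenerate cases (one branch in parallel being disconnected, one piece in series being connected, both pieces in series being disconnected, and so on). Each such case reduces to an elementary sanity check against \cref{def:effRes,def:effCap}, so I expect no substantial additional work beyond the core variational arguments above.
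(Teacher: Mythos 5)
The paper states \cref{claim:parallel_series} without proof, treating it as a standard fact from electrical network theory (it prefaces the claim with ``we now recall that\ldots''), so there is no paper proof to compare against. Your derivation from the variational definitions is the standard first-principles argument, and it is correct: the restriction-and-gluing bijection between unit flows/potentials on $G(x)$ and compatible pairs on $G_1(x_1)$, $G_2(x_2)$ is the right structural observation, and the one-parameter optimizations over $\alpha_1+\alpha_2=1$ (parallel resistance) and $\alpha=\sop V(v)$ (series capacitance) give exactly the reciprocal-sum formulas. The series-resistance step where summing conservation over $V(G_1)\setminus\{v\}$ forces net inflow $1$ at $v$ along $G_1$-edges, and the parallel-capacitance step where the restriction is automatically a unit potential, are both correct.

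Two small points worth being explicit about if you write this up. First, \cref{def:dualPotential} constrains $\sop V$ to take values in $\mathbb{R}^+$, so the rescalings $(\sop V|_{V(G_1)}-\alpha)/(1-\alpha)$ and $\sop V|_{V(G_2)}/\alpha$ are only guaranteed non-negative if $\sop V$ takes values in $[0,1]$ and $\alpha\in(0,1)$; you should note that the optimum of $(1-\alpha)^2C_1+\alpha^2C_2$ occurs at $\alpha^\ast = C_1/(C_1+C_2)\in[0,1]$, and that a truncation argument shows the optimal unit potential always lies in $[0,1]$, so the non-negativity constraint is never binding. Second, in the parallel-flow decomposition you implicitly divide by $\alpha_i$; this needs $\alpha_i\neq 0$, but when $R_{s,t}(G_i(x_i))<\infty$ the Lagrange optimum has $\alpha_i>0$, and when $R_{s,t}(G_i(x_i))=\infty$ the only viable choice is $\alpha_i=0$, which the degenerate-case discussion handles. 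With these caveats spelled out, the argument is complete.
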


\subsection{Quantum Algorithms}\label{sec:alg}

We will consider problems parametrized by a parent graph $G$, by which we more
precisely mean a family of graphs $\{G_n\}_{n\in\mathbb{N}}$ where $G_n$ is a
graph on $n$ vertices. We will generally drop the subscript $n$. 

A graph is connected if there is a path between every pair of vertices. For a
family of graphs $G$, and a set $X\subset \{0,1\}^{N}$, let
\textsc{conn}$_{G,X}$ denote the \emph{connectivity problem}, defined by
$\textsc{conn}_{G,X}(x)=1$ if $G(x)$ is connected (see \cref{sec:graphPrelim} for description of $G(x))$, and
$\textsc{conn}_{G,X}(x)=0$ if $G(x)$ is not connected, for all $x\in X$.

Similarly, for $s,t\in V(G)$, defined $st$-\textsc{conn}$_{G,X}$ by
$st$-\textsc{conn}$_{G,X}(x)=1$ if there is a path from $s$ to $t$ in $G(x)$,
and $st$-\textsc{conn}$_{G,X}(x)=0$ otherwise, for all $x\in X$.

We will consider \textsc{conn} and $st$-\textsc{conn} in the edge-query input model, meaning that we
have access to a standard quantum oracle $O_x$, defined
$O_x\ket{i}\ket{b}=\ket{i}\ket{b\oplus x_i}$, where $x_i$ is the $i\tth$ bit
of $i$. Since every edge of $G$ is associated with an input variable, as
described in \cref{sec:graphPrelim}, for any edge in $G$, we can check if it
is also present in $G(x)$ using one query to $O_x$.

Let $f_N:X_N\rightarrow\{0,1\}$, $X_N\subseteq [q]^N$, for $N\in\mathbb{N}$, be
a family of functions. An algorithm \emph{decides $f$ with bounded error} if
for any $x\in X_N$, the algorithm outputs $f(x)$ with probability at least
$2/3$. Let $f_N:X_N\rightarrow\mathbb{R}_{\geq 0}$, $X_N\subseteq [q]^N$, for
$N\in\mathbb{N}$, be a family of functions. An algorithm \emph{estimates $f$
to relative accuracy $\eps$ with bounded error} if for any $x\in X_N$, on
input $x$, the algorithm outputs $\tilde{f}$ such that $|\tilde{f}-f(x)|\leq
\eps f(x)$ with probability at least $2/3$. Since all algorithms discussed
here will have bounded error, we will omit this description. We will generally
talk about a function $f:X\rightarrow\{0,1\}$, leaving the parametrization
over $n$ implicit. 

In the remainder of this section, we describe the quantum algorithmic building
blocks of the algorithms introduced in \cref{SpecConnectivity}.

\begin{theorem}[Phase Estimation \cite{kit95,CEMM98}]\label{thm:phase-estimation}
Let $U$ be a unitary with eigenvectors $\ket{\theta}$ satisfying
$U\ket{\theta}=e^{i\theta}\ket{\theta}$ and assume $\theta\in[-\pi,\pi]$. For
any $\Theta\in(0,\pi)$ and $\varepsilon\in(0,1)$, there exists a quantum
algorithm that makes $O(\frac{1}{\Theta}\log \frac{1}{\varepsilon})$ calls to
$U$ and, on input $\ket{\theta_j}$ outputs a state $\ket{\theta_j}\ket{w}$
such that if $\theta_j=0$, then $\ket{w}=\ket{0}$ and if $|\theta_j|\geq
\Theta$, $|\braket{0}{w}|^2\leq \varepsilon$. If $U$ acts on $s$ qubits, the
algorithm uses $O(s+\log\frac{1}{\Theta})$ space.
\end{theorem}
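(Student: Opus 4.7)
The plan is to implement standard Kitaev--Cleve--Ekert--Macchiavello--Mosca phase estimation and then amplify its success probability by repetition. First, I would allocate a $t$-qubit ancilla register (with $t$ to be chosen), Hadamard it into a uniform superposition, apply controlled-$U^{2^k}$ from the $k$-th ancilla qubit onto the eigenstate register for $k = 0,\dots,t-1$, and finish with an inverse quantum Fourier transform on the ancilla. This uses $O(2^t)$ applications of $U$ and $t$ additional qubits on top of the $s$-qubit workspace of $U$.

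On eigenvector input $\ket{\theta_j}$, a direct computation shows the output is $\ket{\theta_j}\otimes\sum_\ell \alpha_\ell \ket{\ell}$ with $\alpha_\ell = \tfrac{1}{2^t}\sum_{k=0}^{2^t-1} e^{ik(\theta_j - 2\pi\ell/2^t)}$. The analysis then hinges on two observations. First, when $\theta_j = 0$ every controlled-$U^{2^k}$ acts trivially on $\ket{\theta_j}$, so the ancilla remains in the uniform superposition and the inverse QFT returns $\ket{0}$ \emph{exactly}; there is no error in the zero-phase branch. Second, a standard geometric-series computation yields $|\alpha_0| = \tfrac{1}{2^t}|\sin(2^{t-1}\theta_j)/\sin(\theta_j/2)| \leq \pi/(2^t\Theta)$ for $|\theta_j|\in[\Theta,\pi]$, using $\sin(x)\geq 2x/\pi$ on $[0,\pi/2]$. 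Choosing $t = \lceil\log_2(2\pi/\Theta)\rceil$ makes $|\alpha_0|\leq 1/2$ for every eigenvalue with $|\theta_j|\geq\Theta$, so a single phase estimation already achieves constant detection error at query cost $O(1/\Theta)$ and space $O(s+\log(1/\Theta))$.

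To push the failure probability down to $\varepsilon$, I would run $K = \lceil\log_2(1/\sqrt{\varepsilon})\rceil$ copies of this procedure, letting $\ket{w}$ be the concatenated ancilla register and designating ``$\ket{0}$'' to mean the all-zeros string on this register. For $\theta_j = 0$ the state is exactly $\ket{\theta_j}\ket{0}^{\otimes Kt}$; for $|\theta_j|\geq\Theta$ the amplitude on the all-zeros string is $\alpha_0^K$ with $|\alpha_0|^K \leq 2^{-K} \leq \sqrt{\varepsilon}$, so $|\braket{0}{w}|^2 \leq \varepsilon$, and the total query cost is $O((1/\Theta)\log(1/\varepsilon))$ as claimed.

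The main obstacle is matching the claimed space bound $O(s+\log(1/\Theta))$, since naive parallel repetition consumes $O(\log(1/\Theta)\log(1/\varepsilon))$ ancilla qubits. I would address this by executing the $K$ rounds sequentially on a single shared ancilla block and one flag qubit: each round runs phase estimation, flips the flag whenever the ancilla is nonzero, and then applies the inverse of phase estimation to clean the ancilla on the flag-zero branch. Because $U$ preserves the eigenspace of $\ket{\theta_j}$, the dynamics restricted to this eigenspace reduce to a unitary on the ancilla-plus-flag register alone, so the ``flag-zero, ancilla-zero'' amplitude contracts by a factor of $|\alpha_0|^2$ per round and yields the same geometric suppression with only $t+1$ extra qubits. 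The delicate point I would have to verify is that cross-round interference on the $|\theta_j|\geq\Theta$ branch does not inflate this amplitude above $\sqrt{\varepsilon}$; the remaining ingredients (the sinc bound, the QFT analysis, and circuit bookkeeping) are routine.
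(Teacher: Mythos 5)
The paper states this as a black-box result from \cite{kit95,CEMM98} and does not prove it, so there is no internal proof to compare against. Your reconstruction of the core mechanism is correct: the amplitude formula $\alpha_\ell$, the exactness of the $\theta_j=0$ branch, the geometric-series bound $|\alpha_0|\leq\pi/(2^t\Theta)$ for $|\theta_j|\geq\Theta$, the choice $t=\lceil\log_2(2\pi/\Theta)\rceil$, and parallel repetition with $K\approx\tfrac12\log_2(1/\varepsilon)$ copies to drive $|\braket{0}{w}|^2$ below $\varepsilon$ all check out. This parallel version is also how the paper itself invokes phase estimation inside the proof of \cref{thm:gapped-phase-estimation}, where the $P$-register explicitly has $O(\log\varphi^{-1}\log\varepsilon^{-1})$ qubits, so nothing downstream relies on simultaneously achieving the $\varepsilon$-boosted error and the $O(s+\log(1/\Theta))$ space.

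The gap is precisely the one you flagged as ``delicate,'' and as written it is fatal. The single-flag sequential loop cannot implement the required OR-accumulation because the controlled flag-flip has XOR, not OR, semantics. Restrict attention to the $(t+1)$-qubit ancilla-plus-flag register (legitimate on an eigenvector, as you note), write $V\ket{0}=\alpha\ket{0}+\beta\ket{\perp}$ with $\braket{0}{\perp}=0$, $\beta=\sqrt{1-|\alpha|^2}$, and take one round to be: apply $V$, flip the flag controlled on the ancilla being nonzero, apply $V^\dagger$. Starting from $\ket{0}_a\ket{0}_f$, after round one the state is
\begin{equation*}
\alpha\, V^\dagger\ket{0}_a\ket{0}_f \;+\; \beta\, V^\dagger\ket{\perp}_a\ket{1}_f .
\end{equation*}
In round two, $V$ brings this back to $\alpha\ket{0}_a\ket{0}_f+\beta\ket{\perp}_a\ket{1}_f$; the controlled flag-flip then returns the second branch to flag $0$ (since $\ket{\perp}$ has no $\ket{0}$-component), and the final $V^\dagger$ reassembles $V^\dagger V\ket{0}_a\ket{0}_f=\ket{0}_a\ket{0}_f$. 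The protocol is therefore \emph{periodic with period two}: after any even number of rounds the all-zero amplitude is exactly $1$, not $|\alpha_0|^{\Theta(K)}$. Conditioning $V$ and/or $V^\dagger$ on the flag being $0$ does not repair this, because the root cause is that the controlled-NOT setting the flag is its own inverse, so flag-$1$ amplitude can and does migrate back to flag-$0$. There is no unitary single-qubit ``sticky'' OR-accumulator; to record ``ever saw a nonzero ancilla'' coherently you need at least a $\Theta(\log K)=\Theta(\log\log(1/\varepsilon))$-qubit counter, or the $Kt$-qubit parallel output register you already described. Your argument establishes the query and error bounds, but the claimed space bound is not achieved by the scheme you propose.
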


We will use the following corollary of \cref{thm:phase-estimation}, which is a slight generalization of an algorithm introduced in \cite{CKS17}, also called Gapped Phase Estimation.

\begin{theorem}[Gapped Phase Estimation]\label{thm:gapped-phase-estimation}
Let $U$ be a unitary with eigenvectors $\ket{\theta}$ satisfying
$U\ket{\theta}=e^{i\pi\theta}\ket{\theta}$ and assume $\theta\in[-1,1]$. Let
$\varphi\in(0,1)$, let $\eps>0$ and let $\delta\in(0,1-\varphi]$. Then, there exists a unitary procedure
$GPE(\varphi,\eps,\delta)$ making $O(\varphi^{-1}\log \eps^{-1})$ queries to $U$
that on input $\ket{0}_C\ket{0}_P\ket{\theta}$ prepares a state $(\beta_0\ket{0}_C\ket{\gamma_0}_P+\beta_1\ket{1}_C\ket{\gamma_1}_P)\ket{\theta}$ where
$\ket{\gamma_0}$ and $\ket{\gamma_1}$ are some unit vectors, $\beta_0^2+\beta_1^2=1$ and such that
\begin{itemize}
\item if $|\theta|\leq \delta$, then $|\beta_1|\leq\eps$, and
\item if $\delta+\varphi\leq |\theta|$, then $|\beta_0|\leq\eps$.
\end{itemize}
The registers $C$ and $P$ have 1 and $O(\log(\varphi^{-1})\log(\eps^{-1}))$ qubits respectively. And in addition to the queries to $U$,  the algorithm uses $O\left(\log\frac{1}{\eps}(\log^2\frac{1}{\varphi}+\log\log\frac{1}{\eps})\right)$ elementary gates.
\end{theorem}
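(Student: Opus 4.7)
The plan is to reduce to standard phase estimation (\cref{thm:phase-estimation}), followed by a reversible threshold comparison, boosted by a median-of-estimates trick so that the bad-branch amplitudes are $O(\eps)$ rather than merely constant.

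First, I would apply $m$-bit phase estimation with $m=O(\log\varphi^{-1})$ in each of $k=O(\log\eps^{-1})$ parallel sub-registers of $P$, using $O(\varphi^{-1})$ queries to $U$ per copy for a total query cost of $O(\varphi^{-1}\log\eps^{-1})$. Because $\ket{\theta}$ is an eigenvector of $U$, each copy leaves $\ket{\theta}$ intact and produces a superposition over estimates $\tilde\theta$ whose squared amplitude on $\{|\tilde\theta-\theta|\leq\varphi/3\}$ is at least a constant $p\geq 8/\pi^2>1/2$. The total width of $P$ used so far is $O(\log\varphi^{-1}\log\eps^{-1})$, matching the statement.

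Next, I would coherently compute the median $\tilde\theta_{\mathrm{med}}$ of the $k$ estimates into a further portion of $P$ via a reversible selection circuit, and then XOR the Boolean $[\,|\tilde\theta_{\mathrm{med}}|>\delta+\varphi/2\,]$ into the flag qubit $C$. A Chernoff bound applied coherently to the $k$ parallel trials with success probability $\geq p>1/2$ shows that, for $k=\Theta(\log\eps^{-1})$, the amplitude on the ``good'' subspace $\{|\tilde\theta_{\mathrm{med}}-\theta|\leq\varphi/3\}$ is at least $\sqrt{1-\eps^2}$. Since $\ket{\theta}$ is never touched, it factors out cleanly, and the output has the required form $(\beta_0\ket{0}_C\ket{\gamma_0}_P+\beta_1\ket{1}_C\ket{\gamma_1}_P)\ket{\theta}$. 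The two bounds then follow by the triangle inequality: when $|\theta|\leq\delta$, every good branch satisfies $|\tilde\theta_{\mathrm{med}}|\leq\delta+\varphi/3<\delta+\varphi/2$, so the flag is $0$ there and $|\beta_1|\leq\eps$; symmetrically, when $|\theta|\geq\delta+\varphi$, every good branch satisfies $|\tilde\theta_{\mathrm{med}}|\geq\delta+2\varphi/3>\delta+\varphi/2$, so the flag is $1$ there and $|\beta_0|\leq\eps$. The hypothesis $\delta\leq 1-\varphi$ guarantees the threshold $\delta+\varphi/2$ lies in $[-1,1]$, so no wrap-around arises.

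On gate count, each QFT contributes $O(\log^2\varphi^{-1})$ elementary gates and there are $k=O(\log\eps^{-1})$ copies, while the coherent median and the comparator contribute the remaining $O(\log\eps^{-1}\log\log\eps^{-1})$ term, reproducing the stated bound. The main obstacle is precisely the amplification step: a single round of phase estimation gives only constant probability of landing within $\varphi/3$ of $\theta$, so without taking the median of $\Theta(\log\eps^{-1})$ independent copies the bad-branch amplitude would be a constant rather than $\eps$. Ensuring the Chernoff argument runs \emph{coherently}---that the good and bad subspaces of $P$ stay orthogonal through the reversible median so that the controlled-X comparator transfers amplitude cleanly into the flag qubit $C$---is the only nontrivial piece of the analysis; everything else is routine bookkeeping.
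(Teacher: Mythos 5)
Your approach is structurally the same as the paper's: run $k = \Theta(\log\eps^{-1})$ independent copies of standard phase estimation to precision $\Theta(\varphi)$, aggregate the results coherently, write a Boolean into register $C$, and invoke Chernoff to push the bad-branch amplitude down to $\eps$. The query count, register widths, and gate-count bookkeeping all match. The one place where you depart from the paper is the aggregation step, and it introduces a subtle but genuine gap.

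The paper first applies a threshold to each copy separately, writing $[\,|\bar\theta_i| > \delta + \varphi/2\,]$ into a flag $C_i$, and then takes the majority of the $k$ flags. You instead compute the median $\tilde\theta_{\mathrm{med}}$ of the raw signed estimates and then test $[\,|\tilde\theta_{\mathrm{med}}| > \delta + \varphi/2\,]$. When there is no wrap-around these two are equivalent (median exceeds a threshold iff a majority do), but the phase estimates live on the circle $\theta \in [-1,1)$ with $\theta = 1$ identified with $\theta = -1$, and the hypothesis $\delta \le 1 - \varphi$ permits $|\theta|$ to be as large as $1$. When $|\theta| > 1 - \varphi/3$, the good estimates (those within $\varphi/3$ of $\theta$ in circular distance) can split between values just below $+1$ and values just above $-1$; the median of these signed values can then land anywhere in $[-1,1]$, including near $0$, which would flip the flag. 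The paper's formulation is robust to this because $|\bar\theta_i|$ is insensitive to which side of $\pm 1$ a wrapped estimate lands on, and the per-copy thresholding commits to the correct bit before any aggregation. The fix on your side is minimal --- take the median of $\{|\bar\theta_i|\}$ rather than of $\{\bar\theta_i\}$, or just adopt the paper's threshold-then-majority ordering --- and then the rest of your argument goes through as stated. (As a minor aside, the paper's ordering is also slightly cheaper in gates: the majority is computed over $k$ single bits, whereas a reversible median needs to sort $k$ numbers of $O(\log \varphi^{-1})$ bits each, which adds a $\log\varphi^{-1}$ factor to the $k\log k$ term; this doesn't affect the query complexity but your stated gate bound implicitly charges as if you had $k$ bits.)
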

\begin{proof}
Standard phase estimation \cite{kit95,CEMM98} (but see, in particular \cite[Appendix C]{CEMM98}) on input $\ket{\theta}$ with precision $\varphi/4$ prepares a state $\ket{\tilde\theta}\ket{\theta}$ such that upon measuring $\ket{\tilde\theta}$, with probability at least $c$ for some $c>1/2$, we measure some $\bar\theta$ that is within $\varphi/2$ of $\theta$, meaning that if $|\theta|\leq \delta$, then $|\bar\theta|<\delta+\varphi/2$, and if $|\theta|\geq \delta+\varphi$, then $|\bar\theta|> \delta+\varphi/2$.

Let $k=c'\log\frac{1}{\eps^2}$ for some constant $c'$. Repeating phase estimation $k$ times produces a state $\ket{\tilde{\theta}}_{P_1}\dots\ket{\tilde{\theta}}_{P_k}$ such that if we were to measure the state, we would obtain a string of estimates $\bar{\theta}_1,\dots,\bar{\theta}_{k}$. If $|\theta|\leq \delta$, each estimate would have absolute value less than $\delta+\varphi/2$ with probability at least $c$, and if $|\theta|\geq \delta+\varphi$, each estimate would have absolute value greater than $\delta+\varphi/2$ with probability at least $c$. 

Instead of measuring, assume that we have extra registers $\ket{0}_{C_1}\dots\ket{0}_{C_k}$. Apply to every pair of registers $P_iC_i$ the unitary that maps $\ket{\bar\theta}_{P_i}\ket{0}_{C_i}$ to $\ket{\bar\theta}_{P_i}\ket{0}_{C_i}$ if $|\bar\theta|\leq \delta+\varphi/2$, and to $\ket{\bar \theta}_{P_i}\ket{1}_{C_i}$ otherwise. This unitary can be done with $O(\log^2\frac{1}{\varphi})$ elementary gates. Then we unitarily do majority voting of the values of registers $C_1,\dots,C_k$ and encode the result in a register $C$. This step requires only $O(\log\frac{1}{\eps}\log\log\frac{1}{\eps})$ elementary gates. By a standard Chernoff bound, for appropriately chosen $c'$, this majority will be 0 with amplitude at least $\sqrt{1-\eps^2}$ 
if $|\theta|\leq \delta$, and will be 1 with amplitude at least $\sqrt{1-\eps^2}$ if $|\theta|\geq \delta+\varphi$. 

Grouping all registers $P_1,\dots P_k,C_1,\dots,C_k$ into one label $P$ we have that the state produced will be $(\beta_0\ket{0}_C\ket{\gamma_0}_P+\beta_1\ket{1}_C\ket{\gamma_1}_P)\ket{\theta}$ for some unit vectors $\ket{\gamma_0}$ and $\ket{\gamma_1}$, with $|\beta_1|\leq \eps$ whenever $|\theta|\leq \delta$, and $|\beta_0|\leq \eps$ whenever $|\theta|\geq \delta+\varphi$. 

The number of elementary gates used in every call to standard phase estimation with precision $\varphi/2$ is $O(\log^2\varphi)$, in addition to $O(\frac{1}{\varphi}\log\frac{1}{\eps})$ calls to $U$, from which the result follows.
\end{proof}

\begin{theorem}[Amplitude Estimation \cite{brassard2002quantum}]\label{thm:amplitude-estimation}
Let $\cal A$ be a quantum algorithm that, on input $x$,  outputs $\sqrt{p(x)}\ket{0}\ket{\Psi_0(x)}+\sqrt{1-p(x)}\ket{1}\ket{\Psi_1(x)}$. Then there exists a quantum algorithm that estimates $p(x)$ to precision $\eps$ using $O\left(\frac{1}{\eps\sqrt{p(x)}}\right)$ calls to $\cal A$. 
\end{theorem}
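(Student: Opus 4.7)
The plan is to follow the standard Grover-based construction for amplitude estimation and reduce it to phase estimation (\cref{thm:phase-estimation}). Writing $\mathcal{A}\ket{0} = \sqrt{p(x)}\ket{0}\ket{\Psi_0(x)} + \sqrt{1-p(x)}\ket{1}\ket{\Psi_1(x)}$, set $\theta\in[0,\pi/2]$ with $\sin\theta=\sqrt{p(x)}$, and define the Grover-like iterate
\[
Q := -\bigl(I - 2\,\mathcal{A}\ket{0}\bra{0}\mathcal{A}^\dagger\bigr)\bigl(I - 2(\ket{0}\bra{0}\otimes I)\bigr),
\]
which costs two queries to $\mathcal{A}$ and $\mathcal{A}^\dagger$ per application. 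First I would verify that the two-dimensional subspace $V := \mathrm{span}\{\ket{0}\ket{\Psi_0(x)},\ket{1}\ket{\Psi_1(x)}\}$ is invariant under $Q$ and contains $\mathcal{A}\ket{0}$, by checking that each of the two factors preserves $V$.

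Next I would compute $Q$ restricted to $V$. On $V$, each factor acts as a reflection about a specific axis (the second about $\ket{1}\ket{\Psi_1(x)}$, the first about the line orthogonal to $\mathcal{A}\ket{0}$), and the composition together with the overall $-1$ yields a rotation by angle $2\theta$. Hence $Q|_V$ has eigenvalues $e^{\pm 2i\theta}$ with some eigenvectors $\ket{\theta_\pm}$, and $\mathcal{A}\ket{0}$ decomposes as $\tfrac{1}{\sqrt{2}}(e^{i\varphi}\ket{\theta_+} + e^{-i\varphi}\ket{\theta_-})$ for some phase $\varphi$. I then run phase estimation on $Q$ with initial state $\mathcal{A}\ket{0}$, using $M$ applications of $Q$, to extract an estimate $\tilde\theta$ satisfying $|\tilde\theta - \theta| = O(1/M)$ with constant probability, and output $\tilde p := \sin^2\tilde\theta$.

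For the error analysis I would use the bound
\[
|\tilde p - p(x)| = |\sin^2\tilde\theta - \sin^2\theta| \leq 2\sin\theta\,|\tilde\theta - \theta| + |\tilde\theta - \theta|^2 = O\!\left(\frac{\sqrt{p(x)}}{M} + \frac{1}{M^2}\right).
\]
Choosing $M = \Theta\bigl(1/(\eps\sqrt{p(x)})\bigr)$ makes the first term $O(\eps\, p(x))$; since $\eps\leq 1$ this same $M$ satisfies $1/M^2 \leq \eps^2 p(x) = O(\eps\, p(x))$, so the total error is $O(\eps\, p(x))$, giving relative precision $\eps$ at a query cost of $O(1/(\eps\sqrt{p(x)}))$.

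The main obstacle is that $M$ cannot be set directly from the unknown $p(x)$. The standard workaround, which I would invoke, is exponential doubling: run phase estimation with $M = 1,2,4,\dots$, use the intermediate outputs to maintain a coarse estimate of $p(x)$, and halt once that estimate certifies that the relative-error target has been met. This adds only a logarithmic overhead absorbed into the $O(\cdot)$, and bounded error is preserved by standard amplification (running constantly many copies and taking the median of the final estimates). A secondary care point is verifying the linearization of $\sin^2$ in the regime where $p(x)$ is small, which is precisely where the $1/M^2$ correction could in principle dominate; the choice of $M$ above is exactly what suffices to keep it subdominant.
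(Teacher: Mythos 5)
The paper does not prove \cref{thm:amplitude-estimation}; it cites it directly from Brassard, H{\o}yer, Mosca, and Tapp. Your proposal is a correct reconstruction of the standard argument in that reference: the Grover-like iterate $Q$, the invariant two-dimensional subspace spanned by $\ket{0}\ket{\Psi_0(x)}$ and $\ket{1}\ket{\Psi_1(x)}$, the eigenphase $\pm 2\theta$ decomposition of $\mathcal{A}\ket{0}$, phase estimation, and the linearization $|\sin^2\tilde\theta - \sin^2\theta|\leq 2\sin\theta\,|\tilde\theta-\theta| + |\tilde\theta-\theta|^2$. Your reading of ``precision $\eps$'' as relative error is the right one, since it is the interpretation under which the stated query bound $O(1/(\eps\sqrt{p(x)}))$ is what the analysis yields (additive error $\eps$ would instead give $O(\sqrt{p(x)}/\eps + 1/\sqrt{\eps})$), and your handling of the unknown $M$ via exponential doubling with median amplification is the standard and correct device. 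No gaps.
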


We will make use of the following corollary (see \cite{IJ15} for a proof).
\begin{corollary}\label{cor:amplitude-estimation}
Let $\cal A$ be a quantum algorithm that outputs $\sqrt{p(x)}\ket{0}\ket{\Psi_0(x)}+\sqrt{1-p(x)}\ket{1}\ket{\Psi_1(x)}$ on input $x$ such that either $p(x)\leq p_0$, or $p(x)\geq p_1$ for $p_1>p_0$. Then there exists a quantum algorithm that decides if $p(x)\leq p_0$ using $O\left(\frac{\sqrt{p_1}}{p_1-p_0}\right)$ calls to $\cal A$.
\end{corollary}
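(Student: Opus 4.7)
The plan is to use amplitude estimation (\cref{thm:amplitude-estimation}) to estimate the \emph{amplitude} $a(x)=\sqrt{p(x)}$ rather than the probability $p(x)$ itself, and then threshold the estimate. The key observation is that the phase-estimation-based implementation of amplitude estimation with $M$ applications of the Grover-type operator built from $\mathcal A$ yields an estimate $\tilde\theta$ of the angle $\theta$ with $\sin^2\theta=p(x)$ having additive error $O(1/M)$; consequently $\tilde a=\sin\tilde\theta$ satisfies $|\tilde a - a(x)|=O(1/M)$ with constant probability, independently of $p(x)$.

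First, I would run amplitude estimation with $M$ iterations to obtain such an $\tilde a$. The decision rule would be: output ``$p(x)\leq p_0$'' when $\tilde a\leq (\sqrt{p_0}+\sqrt{p_1})/2$, and output ``$p(x)\geq p_1$'' otherwise. This rule is correct as soon as $|\tilde a - a(x)|<(\sqrt{p_1}-\sqrt{p_0})/2$, so taking $M=\Theta(1/(\sqrt{p_1}-\sqrt{p_0}))$ suffices.

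To match the bound stated in the corollary, I would rewrite the gap in the denominator using
\begin{equation*}
\sqrt{p_1}-\sqrt{p_0}=\frac{p_1-p_0}{\sqrt{p_1}+\sqrt{p_0}}\geq\frac{p_1-p_0}{2\sqrt{p_1}},
\end{equation*}
so that $M=O(\sqrt{p_1}/(p_1-p_0))$ calls to $\mathcal A$ suffice. The constant success probability inherent to amplitude estimation can be boosted to $2/3$ by independent repetition and a majority vote, adding only an absorbed constant factor.

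The main subtlety is the choice to threshold the amplitude rather than the probability: if one instead tried to apply \cref{thm:amplitude-estimation} to estimate $p(x)$ to additive precision $(p_1-p_0)/2$, the number of calls would scale as $\sqrt{p(x)}/(p_1-p_0)$, which depends on the unknown $p(x)$ and does not yield an input-independent upper bound. Working with the amplitude estimate sidesteps this dependence, after which the algebraic manipulation above produces exactly the claimed query complexity.
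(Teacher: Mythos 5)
Your proof is correct, and the central idea---using the \emph{angle} estimate underlying amplitude estimation (additive error $O(1/M)$ in $\theta$ with $\sin^2\theta=p(x)$, independent of $p(x)$), then thresholding $\sin\tilde\theta$ and converting $\sqrt{p_1}-\sqrt{p_0}\geq(p_1-p_0)/(2\sqrt{p_1})$---is exactly the standard route to the stated bound. The paper itself gives no proof of this corollary, deferring to~\cite{IJ15}, so there is nothing in the text to compare against directly; however, the bound $O(\sqrt{p_1}/(p_1-p_0))$ is essentially only obtainable via the angle-estimate argument you give, so your approach should coincide with the cited one. You are also right that a blind black-box invocation of \cref{thm:amplitude-estimation} as stated (which bounds the number of calls in terms of the unknown $p(x)$) does not immediately yield the corollary; what you actually use is the Brassard--H{\o}yer--Mosca--Tapp guarantee $|\tilde\theta-\theta|\leq\pi/M$ with probability $\geq 8/\pi^2$, which is what \cref{thm:amplitude-estimation} is distilled from. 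Two small remarks: since $8/\pi^2>2/3$, the boosting step you mention is unnecessary (though harmless); and strictly speaking the BHMT routine returns $\sin^2\tilde\theta$, so one should either take a square root of the output or threshold $\sin^2\tilde\theta$ at $((\sqrt{p_0}+\sqrt{p_1})/2)^2$---a cosmetic change that does not affect the complexity.
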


\subsection{Span Programs and Witness Sizes}\label{sec:prelimSpan}

Span programs \cite{KW93} were first introduced to the study of
quantum algorithms by Reichardt and \v{S}palek \cite{RS12}. They have
since proven to be immensely important for designing quantum
algorithms in the query model.

\begin{definition}[Span Program]\label{def:span}
A span program $P=(H,U,\tau,A)$ on $\{0,1\}^N$ is made up of 
{\bf{(I)}} finite-dimensional inner product spaces $H=H_1\oplus \dots \oplus H_N$, and $\{H_{j,b}\subseteq H_j\}_{j\in [N],b\in \{0,1\}}$ such that $H_{j,0}+H_{j,1}=H_j$, {\bf{(II)}} a vector space $U$, {\bf{(III)}} a non-zero \emph{target vector} $\tau\in U$, and {\bf{(IV)}} a linear operator $A:H\rightarrow U$.
For every string $x\in \{0,1\}^N$, we associate the subspace $H(x):=H_{1,x_1}\oplus \dots\oplus H_{N,x_N}$, and an operator $A(x):=A\Pi_{H(x)}$. 
\end{definition}

%%%%%%%%%%%%%%%%%%%%%%%%%%%%%%

\begin{definition}[Positive and Negative Witness]\label{def:posNegWit}
Let $P$ be a span program on $\{0,1\}^N$ and let $x$ be a string $x\in \{0,1\}^N$.
Then we call $\ket{w}$ a \emph{positive witness for $x$ in $P$} if
$\ket{w}\in H(x)$, and $A\ket{w}=\tau$. We define the \emph{positive
witness size of $x$} as:
\begin{equation}
w_+(x,P)=w_+(x)=\min\{\norm{\ket{w}}^2: \ket{w}\in H(x),A\ket{w}=\tau\},
\end{equation}
if there exists a positive witness for $x$, and $w_+(x)=\infty$ otherwise.

Let $\mathcal{L}(U,\mathbb{R})$ denote the set of linear maps from $U$ to $\mathbb{R}.$ We call a linear map $\omega\in\mathcal{L}(U,\mathbb{R})$ a \emph{negative
witness for $x$ in $P$} if $\omega A\Pi_{H(x)} = 0$ and $\omega\tau =
1$. We define the \emph{negative witness size of $x$} as:
\begin{equation}
w_-(x,P)=w_-(x)=\min\{\norm{\omega A}^2:{\omega\in \mathcal{L}(U,\mathbb{R}), 
\omega A\Pi_{H(x)}=0, \omega\tau=1}\},
\end{equation}
if there exists a negative witness, and $w_-(x)=\infty$ otherwise.

If $w_+(x)$ is finite, we say that $x$ is \emph{positive} (wrt.\ $P$), 
and if $w_-(x)$ is finite, we say that $x$ is \emph{negative}. We let 
$P_1$ denote the set of positive inputs, and $P_0$ the set of negative 
inputs for $P$.  
\end{definition}

For a function $f:X\rightarrow \{0,1\}$, with $X\subseteq\{0,1\}^N$, we say
 $P$ \emph{decides} $f$ if $f^{-1}(0)\subseteq P_0$ and
$f^{-1}(1)\subseteq P_1$. Given a span program $P$ that decides $f$, one can use it to design a quantum algorithm whose output is $f(x)$ (with high probability), 
given access to the input $x\in X$ via queries of the form
$\mathcal{O}_x:\ket{i,b}\mapsto \ket{i,b\oplus x_i}$.

The following theorem is due to \cite{Rei09} (see \cite{IJ15} for a version with similar notation).
\begin{theorem}\label{thm:span-decision}
Let $U(P,x)=(2\Pi_{\ker A}-I)(2\Pi_{H(x)}-I)$. 
Fix $X\subseteq\{0,1\}^N$ and $f:X\rightarrow\{0,1\}$, and let $P$ be a span program on $\{0,1\}^N$ that decides $f$. 
Let $W_+(f,P)=\max_{x\in f^{-1}(1)}w_+(x,P)$ 
and $W_-(f,P)=\max_{x\in f^{-1}(0)}w_-(x,P)$. 
Then there is a bounded error quantum algorithm that decides $f$ by making $O(\sqrt{W_+(f,P)W_-(f,P)})$ calls to $U(P,x)$, and elementary gates. In particular, this algorithm has quantum query complexity $O(\sqrt{W_+(f,P)W_-(f,P)})$.
\end{theorem}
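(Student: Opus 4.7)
The plan is to apply phase estimation to $U(P,x)$ (with a small extension) on a fixed initial state that has significant support on the $+1$-eigenspace when $x$ is positive and no support there, with a spectral gap, when $x$ is negative. To set this up, I would first augment the span program with a one-dimensional ``phantom'' register: let $\tilde H = H \oplus \mathrm{span}\{\ket{\hat 0}\}$, set $\tilde H(x) = H(x) \oplus \mathrm{span}\{\ket{\hat 0}\}$ for every $x$, and define $\tilde A$ to agree with $A$ on $H$ and satisfy $\tilde A\ket{\hat 0} = -\tau$. Writing $\tilde U(P,x) = (2\Pi_{\ker \tilde A}-I)(2\Pi_{\tilde H(x)}-I)$, the trivial action on the phantom register means $\tilde U(P,x)$ can be implemented using $O(1)$ calls to $U(P,x)$ and $O(1)$ controlled operations; the reflection $2\Pi_{H(x)}-I$ in turn can be implemented with two queries to $\mathcal{O}_x$.

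The positive case is then immediate. A positive witness $\ket{w}$ for $x$ is, by definition, a vector in $H(x)$ satisfying $A\ket{w} = \tau = -\tilde A\ket{\hat 0}$, so $\ket{\hat 0} + \ket{w} \in \ker \tilde A$ while $\ket{w} \in \tilde H(x)$. The normalized vector $(\ket{\hat 0} + \ket{w})/\sqrt{1+\|\ket{w}\|^2}$ therefore lies in $\ker \tilde A \cap \tilde H(x)$ and is a $+1$-eigenvector of $\tilde U(P,x)$; projecting shows that $\ket{\hat 0}$ has squared overlap at least $1/(1+w_+(x)) \geq 1/(1+W_+)$ with this eigenspace.

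The main obstacle is the negative case, where I need a quantitative phase gap. For $x \in f^{-1}(0)$ with negative witness $\omega$, the defining relations $\omega A\Pi_{H(x)}=0$ and $\omega\tau = 1$ translate, via the extended map $\tilde A$, into a ``dual'' vector in $(\ker \tilde A)^\perp$ that is orthogonal to $\tilde H(x)$ and has nonzero inner product with $\ket{\hat 0}$. This rules out any component of $\ket{\hat 0}$ in $\ker \tilde A \cap \tilde H(x)$, so the initial state has zero projection onto the $+1$-eigenspace. To turn this qualitative statement into a quantitative gap, I would invoke Jordan's lemma to decompose $\tilde U(P,x)$ as a direct sum of rotations on invariant two-dimensional subspaces indexed by the principal angles between $\ker \tilde A$ and $\tilde H(x)$, and then use the norm identity $\|\omega \tilde A\|^2 = w_-(x)$ to lower-bound every nonzero principal angle supporting $\ket{\hat 0}$ by $\Omega(1/\sqrt{w_-(x)}) \geq \Omega(1/\sqrt{W_-})$. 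This geometric passage from witness size to phase gap is the technical heart of the argument.

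To conclude, I apply \cref{thm:phase-estimation} to $\tilde U(P,x)$ on the initial state $\ket{\hat 0}$ with precision $\Theta = \Omega(1/\sqrt{W_-})$ and accuracy $\varepsilon = O(1/\sqrt{1+W_+})$, at cost $O(\sqrt{W_-}\log(1+W_+))$ calls to $\tilde U(P,x)$. By the two preceding paragraphs, the measured phase is $0$ with probability $\geq \Omega(1/(1+W_+))$ on positive inputs, and with probability $O(\varepsilon^2) = O(1/(1+W_+))$ on negative inputs (for suitable constants). Standard amplitude amplification on this gap-$\Omega(1/W_+)$ distinguishing procedure uses $O(\sqrt{W_+})$ iterations, giving a total query complexity of $O(\sqrt{W_+(f,P)\,W_-(f,P)})$ calls to $U(P,x)$, and the same bound on queries to $\mathcal{O}_x$ since each $U(P,x)$ uses $O(1)$ queries.
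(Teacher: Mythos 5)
Your high-level architecture — phantom register $\ket{\hat 0}$ with $\tilde A\ket{\hat 0}=-\tau$, phase estimation, and analysis of the overlap of $\ket{\hat 0}$ with low-phase eigenvectors — is the right framework and is the one used in \cite{Rei09,IJ15}. The positive case is analyzed correctly: $(\ket{\hat 0}+\ket{w})/\sqrt{1+\|\ket{w}\|^2}$ is a $+1$-eigenvector of $\tilde U(P,x)$ with squared overlap at least $1/(1+W_+)$ with $\ket{\hat 0}$. However, the negative case contains a genuine gap, and the way you conclude the complexity bound would not give $O(\sqrt{W_+W_-})$ even if the gap were fixed.

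First, a small self-contradiction: the vector $\ket{\tilde\omega}:=\tilde A^\dagger\omega^\dagger$ that encodes the negative witness lies in $(\ker\tilde A)^\perp$ and is orthogonal to $H(x)$, but it is \emph{not} orthogonal to $\tilde H(x)=H(x)\oplus\mathrm{span}\{\ket{\hat 0}\}$; indeed $\braket{\hat 0}{\tilde\omega}=\omega\tilde A\ket{\hat 0}=-\omega\tau=-1\neq 0$, which is exactly what makes it useful. A vector orthogonal to $\tilde H(x)$ cannot have nonzero inner product with $\ket{\hat 0}\in\tilde H(x)$. What you actually get is $\Pi_{\tilde H(x)}\ket{\tilde\omega}=-\ket{\hat 0}$.

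The substantive error is the claim that this gives a \emph{phase gap} $\Omega(1/\sqrt{W_-})$. Jordan's lemma decomposes $\tilde U(P,x)$ into $2\times 2$ rotation blocks with angles $2\theta_j$ determined by the principal angles between $\ker\tilde A$ and $\tilde H(x)$. These angles are fixed by the subspaces alone; the negative witness constrains $\|\tilde\omega\|^2$ but places no lower bound on any individual $\theta_j$. Writing the block decomposition out, $\ket{\hat 0}=-\Pi_{\tilde H(x)}\ket{\tilde\omega}$ has component of norm $|\alpha_j|\sin\theta_j$ in block $j$, where $\alpha_j$ is the amplitude of $\ket{\tilde\omega}$ on that block. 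Nothing forbids a block with $\theta_j$ arbitrarily small and $\alpha_j\neq 0$, so $\ket{\hat 0}$ can have nonzero overlap with eigenvectors of arbitrarily small nonzero phase. What you actually get is the \emph{effective spectral gap lemma}: if $P_\Theta$ projects onto eigenvectors of $\tilde U(P,x)$ with phase in $(-\Theta,\Theta)$, then $\|P_\Theta\ket{\hat 0}\|\leq\tfrac{\Theta}{2}\|\tilde\omega\|=\tfrac{\Theta}{2}\sqrt{w_-(x)+1}$. This is an overlap bound, not a gap, and the distinction matters: phase estimation to precision $\Theta$ no longer suppresses the acceptance probability on negatives down to the accuracy parameter $\varepsilon$, but only down to $(\Theta/2)^2(W_-+1)+\varepsilon$.

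Plugging this in shows that your parameter choices and accounting do not close. With the unscaled phantom edge, to push the negative acceptance probability below, say, half the positive one $1/(1+W_+)$, you must take $\Theta=O\bigl(1/\sqrt{(1+W_+)(1+W_-)}\bigr)$, so each phase-estimation call already costs $O(\sqrt{W_+W_-})$; the $O(\sqrt{W_+})$ rounds of amplitude estimation then bring the total to $O(W_+\sqrt{W_-})$, which exceeds the claimed bound. The missing idea, used in the standard proof, is to \emph{rescale} the phantom edge: set $\tilde A\ket{\hat 0}=-\tau/\sqrt{C}$ with $C=\Theta(W_+)$. Then $\sqrt{C}\ket{\hat 0}+\ket{w}\in\ker\tilde A\cap\tilde H(x)$ gives a positive overlap $\geq\sqrt{C}/\sqrt{C+W_+}=\Omega(1)$, a constant, while the effective spectral gap lemma gives $\|P_\Theta\ket{\hat 0}\|\leq\tfrac{\Theta}{2}\sqrt{CW_-+1}$. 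Taking $\Theta=\Theta\bigl(1/\sqrt{W_+W_-}\bigr)$ with \emph{constant} accuracy $\varepsilon$ separates the two cases by a constant, so a constant number of repetitions of phase estimation suffices — no amplitude amplification — and the query complexity is $O(\sqrt{W_+W_-})$ as claimed.
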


Ref.~\cite{IJ15} defines the \emph{approximate positive and negative witness
sizes}, $\tilde{w}_+(x,P)$ and $\tilde{w}_-(x,P)$. These are similar to the
positive and negative witness sizes, but with the conditions $\ket{w}\in H(x)$
and $\omega A\Pi_{H(x)}=0$ relaxed.

\begin{definition}[Approximate Positive Witness]
For any span program $P$ on $\{0,1\}^N$ and $x\in\{0,1\}^N$, we define the
\emph{positive error of $x$ in $P$} as: 
\begin{equation}
e_+(x)=e_+(x,P):=\min\left\{
\norm{\Pi_{H(x)^\bot}\ket{w}}^2:A\ket{w}=\tau\right\}.
\end{equation}
We say $\ket{w}$ is an \emph{approximate positive witness} for
$x$ in $P$ if $\norm{\Pi_{H(x)^\bot}\ket{w}}^2=e_+(x)$ and $A\ket{w}=\tau.$
 We define the \emph{approximate positive witness size} as
\begin{equation}
\tilde{w}_+(x)=\tilde{w}_+(x,P):=\min\left\{\norm{\ket{w}}^2:A\ket{w}=\tau,
\norm{\Pi_{H(x)^\bot}\ket{w}}^2=e_+(x)\right\}.
\end{equation}
\end{definition}

\noindent  If $x\in P_1$, then $e_+(x)=0$. In that case, an
approximate positive witness for $x$ is a positive witness, and
$\tilde w_+(x)=w_+(x)$. For negative inputs,
the positive error is larger than 0. 

We can define a similar notion of approximate negative witnesses (see \cite{IJ15}).

\begin{theorem}[\cite{IJ15}]\label{theorem:span-est}
Let $U(P,x)=(2\Pi_{\ker A}-I)(2\Pi_{H(x)}-I)$.
Fix $X\subseteq\{0,1\}^N$ and $f:X\rightarrow \mathbb{R}_{\geq 0}$. Let $P$ be a
span program on $\{0,1\}^N$ such that for all $x\in X$, 
$f(x)=w_-(x,P)$ and define
$\widetilde{W}_+=\widetilde{W}_+(P)=\max_{x\in X}\tilde{w}_+(x,P)${}.
Then there exists a quantum algorithm that estimates $f$ to accuracy
$\eps$ and that uses $\tO\left(\frac{1}{\eps^{3/2}}\sqrt{w_-(x)\widetilde{W}_+}\right)$
calls to $U(P,x)$ and elementary gates. 
\end{theorem}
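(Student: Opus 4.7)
The plan is to reduce witness-size estimation to a family of decision problems parametrized by a threshold $\alpha$, each of which is solved by running the span-program decision algorithm of \cref{thm:span-decision} on a suitably modified span program, and then to combine the results via a geometric search together with amplitude estimation. The fundamental idea is that adding a small ``shunt'' input vector of weight parameter $\alpha$ to the target $\tau$ yields a new span program $P_\alpha$ whose positive and negative witnesses exist precisely in a window around the threshold $\alpha$, so the decision algorithm on $P_\alpha$ effectively tells us whether $w_-(x,P)\leq \alpha$ or not.

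First I would construct the family $\{P_\alpha\}_\alpha$ obtained from $P$ by augmenting $H$ with a one-dimensional space and extending $A$ so that $\tau$ becomes reachable with squared norm at most $\alpha$ via this auxiliary vector regardless of $x$. This guarantees $w_+(x,P_\alpha)\leq \alpha$ for every $x$. A short linear-algebra calculation using \cref{def:posNegWit} shows that $w_-(x,P_\alpha)$ is finite iff $w_-(x,P)\leq \alpha$, and more quantitatively $w_-(x,P_\alpha)$ scales like $1/(1-w_-(x,P)/\alpha)$ in the relevant regime. Thus the unitary $U(P_\alpha,x)$ has a phase gap that distinguishes the case $w_-(x,P)\leq \alpha(1-\varepsilon)$ from $w_-(x,P)\geq \alpha(1+\varepsilon)$, with the gap shrinking like $\sqrt{\varepsilon/(\alpha \widetilde{W}_+)}$.

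Next, I would apply \cref{thm:gapped-phase-estimation} (gapped phase estimation) to $U(P_\alpha,x)$, initialized on a state built from $\tau$ via $A^+$, with precision tuned to the above gap. This produces a flag qubit whose excitation amplitude equals, up to $O(\varepsilon)$, the indicator of $w_-(x,P)\leq\alpha$. Invoking \cref{cor:amplitude-estimation} on this flag qubit lets us decide the threshold question using $O(\sqrt{\alpha\widetilde{W}_+/\varepsilon})$ queries to $U(P,x)$, where $\widetilde{W}_+$ (not $W_+$) appears because the initial state $A^+\tau$ may have components outside $H(x)$ for negative inputs, precisely the quantity controlled by the approximate positive witness size. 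A standard geometric search over $\alpha \in \{(1+\varepsilon)^k\}$ then locates $w_-(x,P)$ to relative accuracy $\varepsilon$ using $\tO(\log(1/\varepsilon))$ threshold decisions, the dominant ones near $\alpha\approx w_-(x,P)$; this gives per-round cost $\tO(\sqrt{w_-(x)\widetilde{W}_+/\varepsilon})$ and, combined with the $\varepsilon^{-1}$ factor needed to confidently pin down the boundary round, an overall query complexity of $\tO(\varepsilon^{-3/2}\sqrt{w_-(x)\widetilde{W}_+})$.

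The main technical obstacle is the second step: establishing that the spectral gap of $U(P_\alpha,x)$ near phase $0$ is governed by exactly the product $\alpha\widetilde{W}_+$ (and not, say, the exact $W_+$), and that the relevant initial-state overlap with the $0$-phase eigenspace cleanly interpolates between $1$ (when $w_-(x)\ll\alpha$) and $0$ (when $w_-(x)\gg\alpha$). This requires a careful duality argument between the positive and negative witness programs, essentially the same duality that underlies the proof of \cref{thm:span-decision}, applied to the modified program $P_\alpha$ and tracked quantitatively rather than qualitatively. Getting the factor of $\widetilde{W}_+$ rather than $W_+$, and accounting for the $\varepsilon^{-1/2}$ overhead from the shrinking phase gap, is where most of the work sits.
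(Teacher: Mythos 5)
Your first step contains a genuine flaw that breaks the rest of the argument. You propose augmenting $H$ with a free one-dimensional subspace spanned by some $\ket{\mathrm{aux}}$, extending $A$ so that $A\ket{\mathrm{aux}}$ is proportional to $\tau$ with $\|\ket{\mathrm{aux}}\|^2\le\alpha$, and having $\ket{\mathrm{aux}}\in H(x)$ for every $x$ so that $w_+(x,P_\alpha)\le\alpha$ unconditionally. But this kills \emph{every} negative witness, not just those with $w_-(x,P)>\alpha$: by \cref{def:posNegWit}, a negative witness $\omega$ must satisfy $\omega A\Pi_{H(x)}=0$ and $\omega\tau=1$, and since $\ket{\mathrm{aux}}\in H(x)$ we get $\omega A\ket{\mathrm{aux}}\propto\omega\tau=1\neq 0$. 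Hence $w_-(x,P_\alpha)=\infty$ for all $x$, directly contradicting your claimed equivalence ``$w_-(x,P_\alpha)$ finite iff $w_-(x,P)\le\alpha$'' and the quantitative $1/(1-w_-(x,P)/\alpha)$ scaling built on it. The downstream phase-gap analysis and the factor of $\widetilde W_+$ are therefore resting on a false premise, even though your intuition that $\widetilde W_+$ (not $W_+$) should appear because $A^+\tau$ leaks outside $H(x)$ on negative inputs is the right one.

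The actual argument in \cite{IJ15} does no surgery on the span program. One runs phase estimation of the fixed unitary $U(P,x)$ directly, on an input-independent state proportional to $A^+\tau$, and invokes the effective spectral gap lemma of Lee--Mittal--Reichardt--\v{S}palek--Szegedy to relate the squared overlap of that state with the small-phase eigenspace of $U(P,x)$ to $w_-(x)$, with the precision--gap trade-off governed by $\widetilde W_+$. Amplitude estimation then estimates that overlap, and an iterative refinement of the phase-estimation precision (doubling until the estimate stabilizes) plays the role your geometric search over $\alpha$ was meant to play; the $\eps^{-3/2}$ arises from combining the $\eps^{-1}$ amplitude-estimation cost with the $\eps^{-1/2}$ refinement of the phase window. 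If you want to salvage a threshold-decision route, you would need a modification of $P$ that tightens the \emph{negative} side without giving the positive side a free ride---e.g.\ reasoning at the level of approximate negative witnesses rather than exact ones---but the free-shunt construction as written cannot work.
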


\paragraph{A span program for $st$-connectivity} An important example of a
span program is one for $st$-connectivity, first introduced in \cite{KW93}, and
used in \cite{BR12} to give a new
quantum algorithm for $st$-connectivity. We state this span program below,
somewhat generalized to include weighted graphs, and to allow the input to be
specified as a subgraph of some parent graph $G$ that is not necessarily the
complete graph. We allow a string $x\in\{0,1\}^N$ to specify a subgraph $G(x)$
of $G$ in a fairly general way, as described in \cref{sec:graphPrelim}. In
particular, for $i\in [N]$, let $\overrightarrow{E}_{i,1}\subseteq
\overrightarrow{E}(G)$ denote the set of (directed) edges associated with the
literal $x_i$, and $\overrightarrow{E}_{i,0}$ the set of edges associated with
the literal $\overline{x_i}$. Note that if $(u,v,\edgeL)\in
\overrightarrow{E}_{i,b}$ then we must also have $(v,u,\edgeL)\in
\overrightarrow{E}_{i,b}$, since $G(x)$ is an undirected graph. We assume $G$
has some implicit weighting function $c$.

Then we refer to the following span program as $P_{G}$:
\begin{align}
\forall i\in [N], b\in\{0,1\}:\; H_{i,b} &= \mathrm{span}\{\ket{e}:e\in \overrightarrow{E}_{i,b}\}\nonumber\\
U &= \mathrm{span}\{\ket{v}:v\in V(G)\}\nonumber\\
\tau &= \ket{s} - \ket{t}\nonumber\\
\forall e=(u,v,\ell)\in\overrightarrow{E}(G):\; A\ket{u,v,\edgeL} &= \sqrt{c(u,v,\edgeL)}(\ket{u} - \ket{v})
\label{eq:st-conn-span-program}
\end{align}
One can check that if $s$ and $t$ are connected, then if $\ket{w}$ represents
a weighted $st$-path or linear combination of weighted $st$-paths in $G(x)$, then $\ket{w}$ is
a positive witness for $x$. Furthermore, this is the only possibility for a
positive witness, so $x$ is a positive input for $P_{G}$ if and only if
$G(x)$ is $st$-connected, and in particular, $w_+(x,P_G)=\frac{1}{2}R_{s,t}(G(x))$ \cite{BR12}. Since the weights $c(e)$ are positive, the set of positive inputs of $P_{G}$ are independent of the choice of $c$, however, the witness
sizes will depend on $c$.

%%%%%%%%%%%%%%%%%%%%%%%%%%%%%%%%%%%%%%%%%%%%%%%%%%%%%%%%%%%%%%%%%%%%%%%%%%%%%%%%%%%
%%%%%%%%%%%%%%%%%%%%%%%%%%%%%%%%%%%%%%%%%%%%%%%%%%%%%%%%%%%%%%%%%%%%%%%%%%%%%%%%%%%
%%%%%%%%%%%%%%%%%%%%%%%%%%%%%%%%%%%%%%%%%%%%%%%%%%%%%%%%%%%%%%%%%%%%%%%%%%%%%%%%%%%
%%%%%%%%%%%%%%%%%%%%%%%%%%%%%%%%%%%%%%%%%%%%%%%%%%%%%%%%%%%%%%%%%%%%%%%%%%%%%%%%%%%

\section{Effective Capacitance and {\it st}-connectivity}\label{sec:ConnectSP}

In this section, we will prove the following theorem:
\begin{theorem}\label{thm:negwit-capacitance}
Let $P_G$ be the span program in Eq.~\eqref{eq:st-conn-span-program}. Then for any $x\in\{0,1\}^N$, $w_-(x,P_G)=2C_{s,t}(G(x))$.
\end{theorem}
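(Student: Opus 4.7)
The plan is to identify each linear functional $\omega \in \mathcal{L}(U,\mathbb{R})$ with a real-valued potential $V : V(G) \to \mathbb{R}$ via $\omega = \sum_{v \in V(G)} V(v)\bra{v}$, and then to translate the two negative-witness conditions into the defining conditions of a unit $st$-potential (\cref{def:dualPotential}). Under this identification, $\omega \tau = V(s) - V(t)$, so $\omega\tau = 1$ becomes $V(s)-V(t) = 1$; and $\omega A\ket{u,v,\ell} = \sqrt{c(u,v,\ell)}\,(V(u)-V(v))$ for each $(u,v,\ell) \in \overrightarrow{E}(G)$, so $\omega A \Pi_{H(x)} = 0$ translates to $V(u) = V(v)$ for every $(u,v,\ell) \in \overrightarrow{E}(G(x))$, i.e.\ $V$ is constant on each connected component of $G(x)$. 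If $s$ and $t$ lie in the same component of $G(x)$, no such $V$ exists, giving $w_-(x,P_G) = \infty = 2C_{s,t}(G(x))$, consistent with \cref{def:effCap}.

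Otherwise, using that $\{\ket{u,v,\ell} : (u,v,\ell) \in \overrightarrow{E}(G)\}$ is an orthonormal basis of $H$, a direct computation yields
\begin{equation}
\|\omega A\|^2 = \sum_{(u,v,\ell) \in \overrightarrow{E}(G)} c(u,v,\ell)\,(V(u)-V(v))^2 = 2\,\mathcal{J}_{E(G)}(V),
\end{equation}
where the factor of $2$ reflects that we sum over directed rather than undirected edges, compensating the $1/2$ in \cref{def:unitPotentEnergy}. Hence $w_-(x,P_G)$ equals twice the infimum of $\mathcal{J}_{E(G)}(V)$ over real-valued $V$ that are constant on components of $G(x)$ and satisfy $V(s)-V(t) = 1$.

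To match this infimum to $C_{s,t}(G(x))$, which additionally demands $V(s)=1$, $V(t)=0$ and $V \geq 0$, I would proceed in two steps. First, a uniform shift $V \mapsto V - V(t)$ preserves the energy, preserves constancy on components, and enforces the first two constraints. Second, the remaining $V \in [0,1]$ requirement can be imposed without loss by the pointwise clipping $V(u) \mapsto \max\{0,\min\{1,V(u)\}\}$: since this map is $1$-Lipschitz, every edge contribution $(V(u)-V(v))^2$ can only weakly decrease, while $V(s)$, $V(t)$, and constancy on components are untouched. This clipping / maximum-principle step is the main subtlety I expect, since it is what reconciles the unconstrained quadratic program arising from $\|\omega A\|^2$ with the boundary-plus-positivity formulation of \cref{def:effCap}; the rest of the argument is essentially an algebraic rewriting of the defining quadratic forms.
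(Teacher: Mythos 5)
Your proof is correct and takes essentially the same route as the paper's: the paper also converts between negative witnesses $\omega$ and potentials via $\omega_{\sop V}=\sum_{v}\sop V(v)\bra{v}$ in one direction and $\sop V_\omega(v)=\omega(\ket{v}-\ket{t})$ (your uniform shift) in the other, and arrives at $\|\omega A\|^2 = 2\,\sop J_{E(G)}(\sop V)$ by the same algebra. The only place you depart is the $1$-Lipschitz clipping step, which the paper in fact omits: the paper declares $\sop V_\omega$ a unit $st$-potential after checking $\sop V_\omega(s)=1$, $\sop V_\omega(t)=0$, and constancy on components of $G(x)$, but never verifies the non-negativity required by \cref{def:dualPotential}, which an arbitrary negative witness $\omega$ need not guarantee. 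Your observation that pointwise clipping to $[0,1]$ preserves the boundary values and component-constancy while weakly decreasing every edge term is exactly what closes that small gap (equivalently one could invoke the discrete maximum principle for the optimizer), so on this one point your write-up is more careful than the paper's.
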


Previously, the negative witness size of $P_{G}$ was
characterized by the size of a cut~\cite{RS12} or, in planar graphs, the
effective resistance of a graph related to the planar dual of $G(x)$
\cite{JK2017}.

We will prove \cref{thm:negwit-capacitance} shortly, but first, we mention the following corollary:

\begin{corollary}\label{thm:stconn}
Let $G$ be a multigraph with $s,t\in V(G)$. Then for \emph{any} choice of (non-negative, real-valued) implicit weight function, the bounded error quantum query complexity of evaluating $st$-\textsc{conn}$_{G,X}$ is
\begin{align}\label{eq:ourBound}
O\left(\sqrt{\max_{\substack{x\in X \\ st\textsc{-conn}_{G,X}(x)=1}}R_{s,t}(G(x))\times\max_{\substack{x\in X \\ st\textsc{-conn}_{G,X}(x)=0}}C_{s,t}(G(x))}\right).
\end{align}
\end{corollary}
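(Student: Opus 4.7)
The plan is to apply the generic span program query complexity bound (\cref{thm:span-decision}) directly to the $st$-connectivity span program $P_G$ defined in Eq.~\eqref{eq:st-conn-span-program}, and then substitute in the characterizations of the two witness sizes that we already have in hand.

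First I would note that, as remarked at the end of \cref{sec:prelimSpan}, for every positive weight function $c$, the span program $P_G$ decides $st$-\textsc{conn}$_{G,X}$: the positive inputs of $P_G$ are precisely those $x$ for which $G(x)$ is $st$-connected, and this characterization does not depend on $c$. Hence we may invoke \cref{thm:span-decision} with $f = st$-\textsc{conn}$_{G,X}$ and $P = P_G$, which gives a bounded-error quantum algorithm making
\begin{equation*}
O\!\left(\sqrt{W_+(f,P_G)\,W_-(f,P_G)}\right)
\end{equation*}
queries to $O_x$, where $W_+(f,P_G) = \max_{x\in X,\,f(x)=1} w_+(x,P_G)$ and $W_-(f,P_G) = \max_{x\in X,\,f(x)=0} w_-(x,P_G)$.

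Next I would substitute the two explicit formulas for the witness sizes. From \cite{BR12}, recalled at the end of \cref{sec:prelimSpan}, for every positive input $x$ we have $w_+(x,P_G) = \tfrac{1}{2}R_{s,t}(G(x))$. From \cref{thm:negwit-capacitance}, for every negative input $x$ we have $w_-(x,P_G) = 2C_{s,t}(G(x))$. Plugging these into the bound from \cref{thm:span-decision} yields
\begin{equation*}
O\!\left(\sqrt{\max_{\substack{x\in X\\ f(x)=1}}\tfrac{1}{2}R_{s,t}(G(x))\;\cdot\;\max_{\substack{x\in X\\ f(x)=0}}2C_{s,t}(G(x))}\right),
\end{equation*}
and the constant factor $\sqrt{\tfrac{1}{2}\cdot 2}=1$ is absorbed into the $O(\cdot)$. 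This is precisely \eqref{eq:ourBound}.

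There is no real obstacle here; the corollary is essentially a bookkeeping exercise combining \cref{thm:span-decision}, the Belovs--Reichardt characterization of $w_+$, and the new characterization $w_-(x,P_G) = 2C_{s,t}(G(x))$ given by \cref{thm:negwit-capacitance}. The only things worth being careful about are: (i) checking that $P_G$ as defined really does decide $st$-\textsc{conn}$_{G,X}$ for an arbitrary parent multigraph $G$ with an arbitrary positive implicit weight function $c$ (so that the corollary can truthfully claim the bound holds for \emph{any} such choice of $c$), and (ii) confirming that both maxima are finite on their respective subsets of $X$ (positive inputs have finite $R_{s,t}$ since $s,t$ are connected in $G(x)$, and negative inputs have finite $C_{s,t}$ since $s,t$ are disconnected in $G(x)$), so that the big-$O$ expression is well defined.
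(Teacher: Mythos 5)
Your proposal is correct and follows exactly the same route as the paper: invoke \cref{thm:span-decision} for $P_G$, then substitute $w_+(x,P_G)=\frac{1}{2}R_{s,t}(G(x))$ from \cite{BR12} and $w_-(x,P_G)=2C_{s,t}(G(x))$ from \cref{thm:negwit-capacitance}. The paper's proof is just a terser statement of the same three-step argument.
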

\begin{proof}
This follows from \cref{thm:negwit-capacitance} and the fact that $w_+(x,P_G)=\frac{1}{2}R_{s,t}(G(x))$, which is proven in \cite{BR12}, and generalized to the weighted case in \cite{JK2017}. Then \cref{thm:span-decision} gives the result.
\end{proof}

We emphasize that \cref{thm:stconn} holds for $R_{s,t}$ and $C_{s,t}$ defined with respect to any weight function, some of which may give a significantly better complexity for solving this problem.

\vskip10pt
We are now ready to prove \cref{thm:negwit-capacitance}, the main result of this section.

\begin{proof}[Proof of \cref{thm:negwit-capacitance}]
 First, we prove that any unit $st$-potential on $G(x)$ can be
transformed into a negative witness for $x$ in $P_{G}$ with witness size equal
to twice the unit potential energy of that potential. This shows that
$w_-(x,P_{G})\leq 2C_{s,t}(G(x))$.

Given a unit $st$-potential $\sop V:V(G)\rightarrow \mathbb{R}$ on $G(x)$, we consider $\omega_{\sop V}=\sum_{v\in V(G)}{\sop V}(v)\bra{v}$. Then because 
$\sop V(s)=1$ and ${\sop V}(t)=0$, we have $\omega_{\sop V}\tau=1$. Secondly,
\begin{align}
\omega_{\sop V}A\Pi_{H(x)}
&=\sum_{u'\in V(G)}{\sop V}(u')\bra{u'}\sum_{(u,v,\edgeL )\in \overrightarrow{E}(G(x))}\sqrt{c(u,v,\edgeL)}(\ket{u}-\ket{v})\bra{u,v,\edgeL }\nonumber\\
&=\sum_{(u,v,\edgeL )\in \overrightarrow{E}(G(x))}\sqrt{c(u,v,\edgeL)}({\sop V}(u)-{\sop V}(v))\bra{u,v,\edgeL }
=0,
\end{align}
where we've used the definition of unit $st$-potential, which states that $\sop V(u)-\sop V(v)=0$ when $(u,v,\edgeL)\in E(G(x)).$ Thus $\omega_{\sop V}$ is a valid negative witness for input $x$.

We have
\begin{align}
w_-(x,P_{G})\leq\min_{\sop V}\|\omega_{\sop V} A\|^2&=\min_{\sop V}\left\|\sum_{(u,v,\edgeL )\in \overrightarrow{E}(G)}\sqrt{c(u,v,\edgeL)}(\sop V(u)-\sop V(v))\bra{u,v,\edgeL }\right\|\nonumber\\
&=2\min_{\sop V}\sum_{(u,v,\edgeL )\in E(G)}(\sop V(u)-\sop V(v))^2c(u,v,\edgeL)
=2C_{s,t}(G(x)),\label{eqn:conductance_bound}
\end{align}
where the minimization is over unit $st$-potentials on $G(x).$

Next, we show that any negative witness $\omega$ for $P_{G}$ on input $x$  can
be transformed into a unit $st$-potential $\sop V_\omega$ on $G(x)$, with negative
witness size equal to twice the unit potential energy of $\sop V_\omega.$ This shows that
$w_-(x,P_{G})\geq 2C_{s,t}(G(x))$.

Given $\omega$, a negative witness for input $x$, let $\sop V_\omega(v)=\omega(\ket{v}-\ket{t})$ for $v\in
V(G).$ Then $V_\omega(s)=\omega(\ket{s}-\ket{t})=\omega\tau=1$, and
$V_\omega(t)=\omega(\ket{t}-\ket{t})=0$. Also for $(u,v,\edgeL)\in E(G(x))$,
we have
\begin{align}
 V_\omega(u)-V_\omega(v)&=\omega(\ket{u}-\ket{t})-\omega(\ket{v}-\ket{t})=\omega(\ket{u}-\ket{v})\nonumber\\
 &=\omega A \ket{(u,v,\edgeL)}=\omega A \Pi_{H(x)}\ket{(u,v,\edgeL)}=0,
\end{align}
because $\omega A \Pi_{H(x)}=0.$ Thus, $V_\omega$ is a $st$-unit potential for $G(x)$. 

Then 
\begin{align}
w_-(x,P_{G})=&\min_{\omega}\|\omega A\|^2=\min_{\omega}\sum_{(u,v,\edgeL )\in \overrightarrow{E}(G)}(\omega(\ket{u}-\ket{v}))^2c(u,v,\edgeL)\nonumber\\
=&2\min_{\omega}\sum_{(u,v,\edgeL )\in E(G)}(\sop V_\omega(u)-\sop V_\omega(v))^2c(u,v,\edgeL)\geq 2C_{s,t}(G(x)),
\end{align}
where the minimization is over negative witnesses. 
Since $w_-(x,P_{G})\geq 2C_{s,t}(G(x))$ and $w_-(x,P_{G})\leq 2C_{s,t}(G(x))$, we must have $w_-(x,P_{G})= 2C_{s,t}(G(x))$.
\end{proof}

%%%%%%%%%%%%%%%%%%%%%%%%%%%%%%%%%%%%%%%%%%%%%%%%%%%%%%%%%%%%%%%%%%%%%%%%%%%%%%%%%%%
%%%%%%%%%%%%%%%%%%%%%%%%%%%%%%%%%%%%%%%%%%%%%%%%%%%%%%%%%%%%%%%%%%%%%%%%%%%%%%%%%%%
%%%%%%%%%%%%%%%%%%%%%%%%%%%%%%%%%%%%%%%%%%%%%%%%%%%%%%%%%%%%%%%%%%%%%%%%%%%%%%%%%%%
%%%%%%%%%%%%%%%%%%%%%%%%%%%%%%%%%%%%%%%%%%%%%%%%%%%%%%%%%%%%%%%%%%%%%%%%%%%%%%%%%%%

\section{Applications}\label{sec:Applications}
\subsection{Estimating the Capacitance of a Circuit}\label{sec:estimatingCapacitance}
By \cref{thm:negwit-capacitance},  $w_-(x,P_{G})=2C_{s,t}(G(x))$,
so we can apply \cref{theorem:span-est} to estimate $C_{s,t}(G(x))$. By
\cref{theorem:span-est}, the complexity of doing this depends on
$C_{s,t}(G(x))$ and $\widetilde{W}_+(P_{G})=\max_x
\tilde{w}_+(x,P_{G})$. We will prove the following theorem:

\begin{theorem}\label{thm:PosAppWitBound}
For the span program $P_{G}$, we have that $\widetilde{W}_+(P_{G})=O(\max_pJ_{E(G)}(p))$, where the maximum runs over all $st$-unit flows $p$ that are paths from $s$ to $t$. 
\end{theorem}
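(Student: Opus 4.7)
The plan is to translate the optimization defining $\tilde w_+(x,P_G)$ into a constrained unit $st$-flow problem on $G$, and then use convexity to bound the result by the energy of a single $st$-path.

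First, I would exploit the swap symmetry $\sigma:\ket{u,v,\ell}\mapsto\ket{v,u,\ell}$. Because each undirected edge is assigned to a single literal in both orientations, $H(x)$ and $H(x)^\perp$ are $\sigma$-invariant; moreover $A\sigma=-A$, so $A\ket{w}^+=0$ for the $\sigma$-symmetric part $\ket{w}^+$. Decomposing $\ket{w}=\ket{w}^++\ket{w}^-$ shows that $\ket{w}^-$ alone satisfies $A\ket{w}^-=\tau$, has $\|\ket{w}^-\|^2\leq\|\ket{w}\|^2$, and still realizes the minimum error $e_+(x)$. Hence the optimum in $\tilde w_+(x,P_G)$ is attained on antisymmetric $\ket{w}$, which are in bijection with unit $st$-flows $\theta$ on $G$ via $\alpha(u,v,\ell)=\theta(u,v,\ell)/(2\sqrt{c(u,v,\ell)})$. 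A direct computation gives
\begin{equation*}
\|\ket{w}\|^2 = \tfrac12 J_{E(G)}(\theta),\qquad \|\Pi_{H(x)^\perp}\ket{w}\|^2 = \tfrac12 J_{E(G)\setminus E(G(x))}(\theta),
\end{equation*}
so $\tilde w_+(x,P_G)=\tfrac12 J_{E(G)}(\theta^*)$, where $\theta^*$ is the unit $st$-flow on $G$ that lexicographically minimizes the pair $(J_{E(G)\setminus E(G(x))}(\theta),\,J_{E(G)}(\theta))$.

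Next, I would argue that $\theta^*$ admits a decomposition $\theta^*=\sum_i \alpha_i p_i$ with $\alpha_i\geq 0$, $\sum_i\alpha_i=1$, and each $p_i$ a unit $st$-path flow. The cleanest route is a penalty reformulation: for each $\lambda>0$, the minimizer of $J_{E(G(x))}(\theta)+(1+\lambda)J_{E(G)\setminus E(G(x))}(\theta)$ over unit $st$-flows is the electric flow on $G$ with rescaled conductances $c(e)/(1+\lambda)$ on $E(G)\setminus E(G(x))$. Electric flows always admit such a non-negative convex combination of path flows (since Ohm's law forbids circulation against the potential), and taking $\lambda\to\infty$ recovers $\theta^*$ while preserving this structure. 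Applying Jensen's inequality edgewise to the quadratic form $J_{E(G)}$ then yields
\begin{equation*}
J_{E(G)}(\theta^*)\ \leq\ \sum_i\alpha_i\, J_{E(G)}(p_i)\ \leq\ \max_p J_{E(G)}(p),
\end{equation*}
with the maximum over unit $st$-path flows in $G$, giving $\tilde w_+(x,P_G)\leq \tfrac12\max_p J_{E(G)}(p)$ uniformly in $x$ and hence $\widetilde W_+(P_G)=O(\max_p J_{E(G)}(p))$.

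The main obstacle is rigorously carrying the convex-combination-of-paths structure through the $\lambda\to\infty$ limit while honoring the equality constraint $J_{E(G)\setminus E(G(x))}(\theta)=2e_+(x)$. I expect this to be routine by a compactness argument --- the admissible unit $st$-flows form a bounded affine subspace, each penalized optimum decomposes into paths, and a limit point inherits such a decomposition --- but the careful write-up will need to verify that the Lagrange multiplier indeed produces the desired equality constraint in the limit, and that the limiting path decomposition still satisfies $\sum_i\alpha_i=1$ rather than degenerating.
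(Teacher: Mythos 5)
Your proof is correct, and it is actually more careful than what the paper writes down. The paper's proof consists of citing a claim equivalent to your witness--flow translation (your swap-symmetry step) and then a one-line argument: since $\min_\theta J_E(\theta)\leq J_E(\theta')$ for any unit $st$-flow $\theta'$ on $E$, one may take $\theta'$ to be an $st$-path. But that observation only bounds the \emph{unconstrained} minimum of $J_{E(G)}$, whereas $\tilde w_+(x,P_G)$ is the minimum of $J_{E(G)}(\theta)$ over the proper subset of flows satisfying $J_{E(G)\setminus E(G(x))}(\theta)=e_+(x)$, a constraint that a single path generically does not meet. Your path-decomposition-plus-Jensen argument is exactly the ingredient needed to turn the paper's sketch into a proof. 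On the technical point you flag at the end: you can sidestep the $\lambda\to\infty$ limit and argue directly that the constrained minimizer $\theta^*$ admits a global potential. The constraint pins the flow on $E(G)\setminus E(G(x))$ to the electric flow of the graph obtained by contracting each connected component of $G(x)$, with potential $V$ on the contracted vertices; among flows consistent with these boundary values, the minimizer of $J_{E(G(x))}$ is, component by component, an electric flow with an internal potential $v_C$. Setting $W(u)=K\cdot V(C_u)+v_{C_u}(u)$ for $K$ sufficiently large gives a function that strictly decreases along every edge of $G$ carrying positive $\theta^*$-flow; hence $\theta^*$ is acyclic, decomposes as a convex combination of unit $st$-path flows, and Jensen closes the argument with no limiting procedure needed.
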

Note that when the weights are all 1, $\max_p J_E(G)(p)$ is just the length of the longest self-avoiding $st$-path in $G$. 
Combining  \cref{theorem:span-est}, \cref{thm:negwit-capacitance}, and \cref{thm:PosAppWitBound}, we have:
\begin{corollary}\label{cor:estimating-capacitance-query}
Given a network $(G,c)$, with $s,t\in V(G) $ 
 and access to an oracle $O_x$, the bounded error quantum query complexity of estimating $C_{s,t}(G(x))$ to accuracy $\epsilon$ is $\widetilde{O}(\epsilon^{-3/2}\sqrt{C_{s,t}(G(x))\max_pJ_{E(G)}(p)})$ where the maximum runs over all $st$-unit flows $p$ that are paths from $s$ to $t$.
\end{corollary}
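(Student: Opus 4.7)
The plan is to directly assemble the three main ingredients established earlier in the paper: the generic span-program estimation algorithm of \cref{theorem:span-est}, the exact characterization of the negative witness size in \cref{thm:negwit-capacitance}, and the upper bound on $\widetilde{W}_+(P_G)$ from \cref{thm:PosAppWitBound}. There are no new ideas to introduce; the proof is essentially a substitution of parameters, plus a check that the input model of \cref{theorem:span-est} (which is stated in terms of queries to the unitary $U(P_G,x)=(2\Pi_{\ker A}-I)(2\Pi_{H(x)}-I)$) can be implemented efficiently using queries to $O_x$.

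First, I would fix $X=\{0,1\}^N$ and define $f:X\to\mathbb{R}_{\geq 0}$ by $f(x)=2C_{s,t}(G(x))$, with the convention $f(x)=\infty$ when $s$ and $t$ are connected in $G(x)$. By \cref{thm:negwit-capacitance}, $f(x)=w_-(x,P_G)$ for every $x\in X$, so the span program $P_G$ satisfies the hypothesis of \cref{theorem:span-est} with respect to $f$. Invoking \cref{theorem:span-est} with this $f$ and with error parameter $\epsilon$ therefore yields a bounded-error quantum algorithm that estimates $f(x)$ to relative accuracy $\epsilon$ using
\begin{equation}
\widetilde{O}\!\left(\frac{1}{\epsilon^{3/2}}\sqrt{w_-(x,P_G)\,\widetilde{W}_+(P_G)}\right)
\end{equation}
calls to $U(P_G,x)$. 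Next I would plug in the two known identities: $w_-(x,P_G)=2C_{s,t}(G(x))$ from \cref{thm:negwit-capacitance}, and $\widetilde{W}_+(P_G)=O(\max_p J_{E(G)}(p))$ from \cref{thm:PosAppWitBound}. Absorbing the constant factor of $2$ into the big-$\widetilde{O}$ gives the claimed complexity $\widetilde{O}\bigl(\epsilon^{-3/2}\sqrt{C_{s,t}(G(x))\max_p J_{E(G)}(p)}\bigr)$. An estimate of $f(x)$ to relative accuracy $\epsilon$ immediately yields an estimate of $C_{s,t}(G(x))=f(x)/2$ to the same relative accuracy, so the output produced by the span-program algorithm is exactly what is required.

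The final step is to translate calls to $U(P_G,x)$ into queries to $O_x$. Since $\Pi_{\ker A}$ is independent of $x$ and the reflection $2\Pi_{H(x)}-I$ is a direct sum over the blocks $H_{i,b}$ of $P_G$, one implementation of $2\Pi_{H(x)}-I$ applies $O_x$ to read $x_i$ into an ancilla, conditionally flips the sign on the corresponding block $H_{i,x_i}$, and uncomputes $x_i$ with a second query. This uses $O(1)$ queries per invocation of $U(P_G,x)$, so the query complexity matches the gate-count bound from \cref{theorem:span-est} up to a constant.

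The main obstacle, if any, will be being precise about the domain: \cref{theorem:span-est} applies to inputs $x$ for which $w_-(x,P_G)$ is finite, i.e.\ to inputs where $G(x)$ is not $st$-connected, which coincides with the regime in which $C_{s,t}(G(x))$ is defined as a finite quantity. Outside this regime $C_{s,t}(G(x))=\infty$ by \cref{def:effCap} and there is nothing to estimate. Otherwise the proof is purely mechanical and should occupy only a few lines once the three cited results are in place.
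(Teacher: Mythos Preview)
Your proposal is correct and mirrors the paper's approach exactly: the paper simply states that the corollary follows by combining \cref{theorem:span-est}, \cref{thm:negwit-capacitance}, and \cref{thm:PosAppWitBound}, which is precisely the substitution you spell out. Your additional remarks on implementing $U(P_G,x)$ with $O(1)$ queries to $O_x$ and on the domain where $C_{s,t}(G(x))$ is finite are reasonable clarifications that the paper leaves implicit.
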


\noindent Similarly, we can show:
\begin{corollary}\label{cor:estimating-capacitance-time}
Let $\mathsf{U}$ be the cost of implementing the map 
\begin{equation*}
\ket{u}\ket{0}\mapsto \sum_{v,\ell:(u,v,\ell)\in\overrightarrow{E}(G)}\sqrt{c(u,v,\ell)/d_G(u)}\ket{u,v,\ell}.
\end{equation*}
Then the quantum time complexity of estimating $C_{s,t}(G(x))$ to accuracy $\epsilon$ is $\widetilde{O}(\epsilon^{-3/2}\sqrt{C_{s,t}(G(x))\max_pJ_{E(G)}(p)}\mathsf{U})$. 
\end{corollary}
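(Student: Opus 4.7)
The plan is to reduce Corollary~\ref{cor:estimating-capacitance-time} to the query version, Corollary~\ref{cor:estimating-capacitance-query}, by exhibiting an implementation of the span-program unitary $U(P_G,x)=(2\Pi_{\ker A}-I)(2\Pi_{H(x)}-I)$ running in time $\widetilde{O}(\mathsf{U})$ per call. The estimation algorithm underlying Corollary~\ref{cor:estimating-capacitance-query} comes from Theorem~\ref{theorem:span-est}, which makes $\widetilde{O}(\epsilon^{-3/2}\sqrt{w_-(x,P_G)\,\widetilde{W}_+(P_G)})$ calls to $U(P_G,x)$ together with an equal number of elementary gates. Substituting $w_-(x,P_G)=2C_{s,t}(G(x))$ from \cref{thm:negwit-capacitance} and $\widetilde{W}_+(P_G)=O(\max_p J_{E(G)}(p))$ from \cref{thm:PosAppWitBound}, and then multiplying by the $\widetilde{O}(\mathsf{U})$ per-call cost, yields the claimed time bound.

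The reflection $2\Pi_{H(x)}-I$ is straightforward: each basis state $\ket{u,v,\ell}$ of $H$ is associated with a fixed literal of $x$ via the edge-to-variable assignment described in \cref{sec:graphPrelim}, so one query to $O_x$ together with $O(1)$ elementary gates suffices to reflect about the subspace $H(x)$ of ``present'' edges.

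The substantive step is implementing $2\Pi_{\ker A}-I$, and this is exactly where the map $\mathsf{U}$ enters. I would use the star-state decomposition of $\mathrm{row}(A)$ implicit in $\mathsf{U}$. Setting
\begin{equation*}
\ket{\phi_u}:=\mathsf{U}\ket{u}\ket{0}=\frac{1}{\sqrt{d_G(u)}}\sum_{v,\ell:(u,v,\ell)\in\overrightarrow{E}(G)}\sqrt{c(u,v,\ell)}\,\ket{u,v,\ell},
\end{equation*}
the states $\{\ket{\phi_u}\}_{u\in V(G)}$ are orthonormal, and a short calculation using $c(u,v,\ell)=c(v,u,\ell)$ gives $A^\dagger\ket{u}=\sqrt{d_G(u)}\,(I-S)\ket{\phi_u}$, where $S\ket{u,v,\ell}=\ket{v,u,\ell}$ is the (trivially implementable) edge-reversal unitary. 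Hence $\mathrm{row}(A)=(I-S)\,\mathrm{span}\{\ket{\phi_u}\}$, and $2\Pi_{\ker A}-I=-(2\Pi_{\mathrm{row}(A)}-I)$ can be realized by a standard Szegedy-style construction: let $\Pi=\mathsf{U}\bigl(\ket{0}\bra{0}_{\mathrm{anc}}\otimes I\bigr)\mathsf{U}^\dagger$ (a projector onto $\mathrm{span}\{\ket{\phi_u}\}$ implementable in time $O(\mathsf{U})$), and use phase estimation on the walk $W=S(2\Pi-I)$ to reflect about its $\pm 1$ eigenspaces; each step of $W$ costs $O(\mathsf{U})$ and phase estimation contributes only polylogarithmic overhead, absorbed into $\widetilde{O}$.

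The main obstacle I expect is that this phase-estimation-based reflection is only approximate rather than exact, so one must verify that polynomially small errors in $U(P_G,x)$ do not spoil the accuracy guarantee of Theorem~\ref{theorem:span-est}. I would resolve this by appealing to the robustness of the underlying amplitude-estimation subroutine (\cref{thm:amplitude-estimation}), which tolerates $1/\mathrm{poly}$ error per call of the reflections at only polylog additional cost, or by directly citing the nearly identical per-call implementation already used for $st$-connectivity span-program algorithms in \cite{BR12,cade2016time,JK2017}. Combining this $\widetilde{O}(\mathsf{U})$ implementation of $U(P_G,x)$ with Corollary~\ref{cor:estimating-capacitance-query} then completes the proof.
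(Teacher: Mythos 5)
Your overall reduction matches the paper's: Theorem~\ref{theorem:span-est} already gives the call count $\widetilde{O}(\epsilon^{-3/2}\sqrt{w_-\widetilde{W}_+})$, Theorem~\ref{thm:negwit-capacitance} and Theorem~\ref{thm:PosAppWitBound} supply the two witness bounds, and the remaining work is showing $U(P_G,x)$ costs $\widetilde{O}(\mathsf{U})$ per call. The paper's own proof stops there: it simply cites \cite{JK2017} (generalizing \cite{BR12}), where this per-call implementation is established, and treats it as an \emph{exact} implementation in time $O(\mathsf{U})$ with no approximation error. You observe at the end that one could do exactly this, and that fallback is indeed the paper's argument.

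Where you diverge is in trying to rederive that implementation. The star-state identity $A^T\ket{u}=\sqrt{d_G(u)}(I-S)\ket{\phi_u}$ is correct and is the right starting point, and the reflection $2\Pi_{H(x)}-I$ via one query to $O_x$ is fine. But the claim that one obtains $2\Pi_{\ker A}-I=-(2\Pi_{\mathrm{row}(A)}-I)$ by phase-estimating $W=S(2\Pi-I)$ and reflecting about its $\pm 1$ eigenspaces is not justified, and I don't think it is correct as stated. From the identity, $\ker A=\ker(\Pi\Pi_{\mathrm{asym}})$ where $\Pi_{\mathrm{asym}}=(I-S)/2$, and by Jordan's lemma the $\pm 1$ eigenspaces of $W=(2\Pi_{\mathrm{sym}}-I)(2\Pi-I)$ are $(\mathrm{Sym}\cap\Phi)\oplus(\mathrm{Sym}^\perp\cap\Phi^\perp)$ and $(\mathrm{Sym}\cap\Phi^\perp)\oplus(\mathrm{Sym}^\perp\cap\Phi)$; neither of these coincides with $\ker(\Pi\Pi_{\mathrm{asym}})$, which in each nontrivial $2$-dimensional Jordan block is the line orthogonal to the $\Pi_{\mathrm{asym}}$-vector, not an eigenspace of $W$. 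So even granting the robustness of amplitude estimation to per-call errors, the object you would reflect about is the wrong subspace. A correct direct construction would need a different identity (as in \cite{BR12,JK2017}), and in fact those references implement the reflection exactly rather than via phase estimation, which is also why the paper incurs no extra polylog factor here beyond the ones already in Theorem~\ref{theorem:span-est}. The cleanest repair is the one you already name: drop the Szegedy-walk sketch and cite \cite[Theorem~13]{JK2017} for the $O(\mathsf{U})$ per-call implementation, which is precisely what the paper does.
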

\begin{proof}
The algorithm in \cref{theorem:span-est} requires $\widetilde{O}(\epsilon^{-3/2}\sqrt{C_{s,t}(G(x))\max_pJ_{E(G)}(p)})$ calls to a unitary $U(P_G,x)$, and other elementary operations \cite{IJ15}. By \cite{JK2017} (generalizing \cite{BR12}), for any $G$, $U(P_G,x)$ can be implemented in cost $\mathsf{U}$. 
\end{proof}

To prove \cref{thm:PosAppWitBound}, we first relate unit $st$-flows on $G$ to approximate positive witnesses. Intuitively, an approximate positive witness is an $st$-flow on $G$ that has energy as small as possible on edges in $E(G)\setminus E(G(x))$. Thus, we can upper bound the approximate positive witness size by the highest possible energy of any $st$-flow on $G$, which is always achieved by a flow that is an $st$-path. 

The following claim can be proven using the technique of the proof of \cite[Lemma 11]{JK2017}:
\begin{claim}
Let $P_{G}$ be the span program of \cref{eq:st-conn-span-program}. Then the positive error of $x$ in $P_{G}$ is 
\begin{equation}\label{eq:minErrorFlor}
e_+(x,P_{G})=\min_\theta \{J_{E(G)\setminus E(G(x))}(\theta)\}
\end{equation}
where $\theta$ runs over unit $st$-flows on $G$. The approximate positive witness size is
\begin{align}\label{eq:optFlow}
\tilde{w}_+(x,P_{G})=\min_\theta J_{E(G)}(\theta)
\end{align}
where $\theta$ runs over unit $st$-flows on $G$ such that $J_{E(G)\setminus E(G(x))}(\theta)={e}_+(x,P_{G})$.
\end{claim}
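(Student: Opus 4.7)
The plan is to identify the set of vectors $\ket{w}$ satisfying $A\ket{w}=\tau$ with unit $st$-flows on the parent graph $G$, and then translate both the positive error and the approximate positive witness size into flow-energy minimizations. This follows the template of the proof of \cite[Lemma~11]{JK2017}.

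First I would decompose any $\ket{w}\in H$ into its symmetric and antisymmetric parts $\ket{w_s},\ket{w_a}$ with respect to the orientation-reversing involution $\ket{u,v,\ell}\mapsto \ket{v,u,\ell}$. The identity $A\ket{u,v,\ell}+A\ket{v,u,\ell}=0$ (from the symmetry of $c$ in its arguments) implies $\ket{w_s}\in\ker A$, so $A\ket{w}=\tau$ is equivalent to $A\ket{w_a}=\tau$. Moreover, the subspace $H(x)$ is invariant under the involution, since $(u,v,\ell)\in E(G(x))$ iff $(v,u,\ell)\in E(G(x))$; hence $\Pi_{H(x)^\perp}$ commutes with it, so $\Pi_{H(x)^\perp}\ket{w_s}$ remains symmetric, $\Pi_{H(x)^\perp}\ket{w_a}$ remains antisymmetric, and the two are mutually orthogonal. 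Next I would verify that antisymmetric $\ket{w_a}$ with $A\ket{w_a}=\tau$ correspond to unit $st$-flows on $G$ via the rescaling $\theta(u,v,\ell)=\gamma\sqrt{c(u,v,\ell)}\,w_a(u,v,\ell)$ for a fixed normalization constant $\gamma$: antisymmetry of $w_a$ gives condition~1 of \cref{def:st-flow}, and reading off the $\bra{v}$ coefficient of $A\ket{w_a}=\ket{s}-\ket{t}$ for each $v\in V(G)$ yields conditions~2 and~3 once $\gamma$ is fixed. Since $\Pi_{H(x)^\perp}$ restricts a vector to the span of directed edges in $\overrightarrow{E}(G)\setminus\overrightarrow{E}(G(x))$, a direct computation shows that $\|\ket{w_a}\|^2$ and $\|\Pi_{H(x)^\perp}\ket{w_a}\|^2$ equal, up to a common proportionality constant, $J_{E(G)}(\theta)$ and $J_{E(G)\setminus E(G(x))}(\theta)$ respectively.

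For \cref{eq:minErrorFlor}, the orthogonal decomposition gives $\|\Pi_{H(x)^\perp}\ket{w}\|^2=\|\Pi_{H(x)^\perp}\ket{w_a}\|^2+\|\Pi_{H(x)^\perp}\ket{w_s}\|^2$, so the minimum over $\ket{w}$ with $A\ket{w}=\tau$ is attained at $\ket{w_s}=0$, and the flow bijection then yields the claim. For \cref{eq:optFlow}, I would argue that the minimum is also attained at $\ket{w_s}=0$: if $\ket{w}=\ket{w_a}+\ket{w_s}$ is any approximate positive witness, then $\|\Pi_{H(x)^\perp}\ket{w_a}\|^2 = e_+(x)-\|\Pi_{H(x)^\perp}\ket{w_s}\|^2$, and since $\ket{w_a}$ itself satisfies $A\ket{w_a}=\tau$ and hence $\|\Pi_{H(x)^\perp}\ket{w_a}\|^2\geq e_+(x)$ by the very definition of $e_+$, we must have $\Pi_{H(x)^\perp}\ket{w_s}=0$. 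Consequently $\ket{w_a}$ is itself an approximate positive witness, and $\|\ket{w}\|^2=\|\ket{w_a}\|^2+\|\ket{w_s}\|^2\geq\|\ket{w_a}\|^2$, so the minimum is achieved at $\ket{w_s}=0$; translating to flows yields the desired formula.

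The main obstacle is careful bookkeeping: one must verify that the symmetric–antisymmetric decomposition respects both the projector $\Pi_{H(x)}$ and the map $A$, and track the precise proportionality constant between witness norms and flow energies so that the numerical value of $e_+(x)$ appearing in the constraint $\|\Pi_{H(x)^\perp}\ket{w}\|^2=e_+(x)$ matches the value appearing in $J_{E(G)\setminus E(G(x))}(\theta)=e_+(x)$. The rest of the argument is routine once the flow interpretation is in place.
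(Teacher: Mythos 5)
Your approach is exactly the one the paper points to (the technique of \cite[Lemma~11]{JK2017}): decompose under the edge-reversal involution, note the symmetric part lies in $\ker A$ and that $\Pi_{H(x)}$ commutes with the involution, then set up a bijection between antisymmetric preimages of $\tau$ and unit $st$-flows on $G$; your orthogonality argument for dropping $\ket{w_s}$ in both minimizations is correct. On the bookkeeping you flag: reading off the coefficient of $\ket{z}$ in $A\ket{w_a}=\tau$ forces $\gamma=2$, and then $\|\ket{w_a}\|^2=\sum_{e\in\overrightarrow{E}(G)}w_a(e)^2=\tfrac{1}{4}\sum_e\theta(e)^2/c(e)=\tfrac{1}{2}J_{E(G)}(\theta)$, and likewise $\|\Pi_{H(x)^\perp}\ket{w_a}\|^2=\tfrac{1}{2}J_{E(G)\setminus E(G(x))}(\theta)$, so both quantities acquire an overall factor of $\tfrac{1}{2}$ --- which is consistent with $w_+(x,P_G)=\tfrac{1}{2}R_{s,t}(G(x))$ in the $e_+=0$ case and indicates the claim as printed omits a $\tfrac{1}{2}$ (immaterial for \cref{thm:PosAppWitBound}, which only uses a big-$O$).
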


\begin{proof}[Proof of \cref{thm:PosAppWitBound}]
This theorem follows immediately from the observation that $\min_\theta J_{E}(\theta) \leq J_{E}(\theta')$ for any valid $st$-flow $\theta'$ on the set of edges $E$. Hence, we are free to choose $\theta'$ to be any $st$-path through $E$. Then, separately taking $E \mapsto E(G)$ and $E \mapsto E(G)\setminus E(G(x))$ yields the desired result. 
\end{proof}

\subsection{Deciding Connectivity}\label{sec:connectivity}

Let \textsc{conn}$_{G,X}$ be the problem of deciding, given $x\in X$, whether $G(x)$ is connected. That is:
\begin{align}\label{eq:connProb}
\textsc{conn}_{G,X}=\bigwedge_{\{u,v\}:u,v\in V(G)}uv\textrm{-}\textsc{conn}_{G,X}.
\end{align}

Using the technique of converting logical \textsc{and} into $st$-connectivity
problems in series \cite{Nisan:1995:SLC:225058.225101,JK2017}, we note that the above problem is equivalent
to $n(n-1)/2$ $st$-connectivity problems in series, one for each pair of
distinct vertices in $V(G)$. (The approach in Ref. \cite{Arins2016}  is
similar, but only looks at $n-1$ instances --- the pairs $s$ and $v$ for each
$v\in V(G).$ Our approach is symmetrized over the vertices and thus makes the analysis simpler.)

More precisely, we define a graph $\mathcal{G}$ such that:
\begin{align}
V(\mathcal{G})&=V(G)\times \{\{u,v\}:u\neq v\in V(G)\}\nonumber\\
E(\mathcal{G})&=E(G)\times\{\{u,v\}:u\neq v\in V(G)\}
\end{align}
where $\times$ denotes the Cartesian product, and $\{u,v\}$ is an extra label denoting that that edge or vertex is in the $\{u,v\}\tth$ copy
of the graph $G$ present as a subgraph in $\mathcal{G}$. Choose any labeling
of the vertices from $1$ to $n$ (with slight abuse of notation, we use $u$
both for the original vertex name and the label). We then label the vertex
$(1,\{1,2\})$ as $s$ and the vertex $(n, \{n-1,n\})$ as $t$. Next identify
vertices $(v,\{u,v\})$ and $(u,\{u,v+1\})$ if $u<v$ and $v<n$, and identify
vertices $(v,\{u,v\})$ and $(u+1,\{u+1,u+2\})$ if $v=n$ and $u<n-1$. See \cref{fig:mathcalG} for an example of this construction.

\begin{figure}[ht]
\centering
\begin{tikzpicture}[scale = .95]
\node at (0,0) {\begin{tikzpicture}[scale=.8]

\node at (-4,1.3) {$3$};
\node at (-2.7,0) {$2$};
\node at (-5.4,0) {$1$};
\draw [green] plot [smooth] coordinates{(-3,0) (-4,1)};
\draw [blue] plot [smooth] coordinates{(-3,0) (-4,-.5) (-5,0)};
\draw [red] (-3,0)--(-5,0);
\draw [orange] (-4,1)--(-5,0);
\filldraw (-5,0) circle (.1);
\filldraw (-3,0) circle (.1);
\filldraw (-4,1) circle (.1);

\draw [red] (0,0)--(2,0);
\draw [orange] plot [smooth] coordinates{(0,0) (1,1)};
\draw [green] plot [smooth] coordinates{(1,1) (2,0)};
\draw [blue] plot [smooth] coordinates{(0,0) (1,-.5) (2,0)};
\draw [blue] plot [smooth] coordinates{(2,0) (3,1)};
\draw [green] plot [smooth] coordinates{(3,1) (4,0)};
\draw [green] (4,0)--(6,0);
\draw [orange] (2,0)--(4,0);
\draw [red] plot [smooth] coordinates{(2,0) (2.7,.2) (3,1)};
\draw [red] plot [smooth] coordinates{(4,0) (4.7,.2) (5,1)};
\draw [blue] plot [smooth] coordinates{(4,0) (5,1)};
\draw [orange] plot [smooth] coordinates{(5,1) (6,0)};

\filldraw (0,0) circle (.1);
\filldraw (2,0) circle (.1);
\filldraw (1,1) circle (.1);
\filldraw (4,0) circle (.1);
\filldraw (3,1) circle (.1);
\filldraw (6,0) circle (.1);
\filldraw (5,1) circle (.1);

\node at (-.4,0) {$s$};
\node at (6.4,0) {$t$};

\node at (1,1.8) {$\{1,2\}$};
\node at (3,1.8) {$\{1,3\}$};
\node at (5,1.8) {$\{2,3\}$};

\node at (-4,-1.5) {$G$};
\node at (3,-1.5) {$\mathcal{G}$};
\end{tikzpicture}};
\end{tikzpicture}
\caption{Example of how $\mathcal{G}$ is formed from a graph $G$. We have labeled the subgraphs of $\mathcal{G}$ according to the $st$-connectivity problems the subgraphs represent.}\label{fig:mathcalG}
\end{figure}
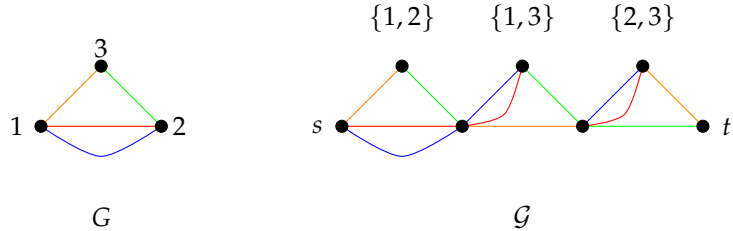

Finally, we define $\mathcal{G}(x)$ to be the subgraph of $\mathcal{G}$ with edges
\begin{align}
E(\mathcal{G}(x))&=E(G(x))\times\{\{u,v\}:u\neq v\in V(G)\}.
\end{align}
We can see that any $st$-path in ${\cal G}(x)$ must go through each of the copies of $G(x)$, meaning it must include, for each $\{u,v\}$, a $uv$-path through the copy of $G(x)$ labeled $\{u,v\}$. Thus, there is an $st$-path in ${\cal G}(x)$ if and only if $G(x)$ is connected.

We consider the span program $P_{\mathcal{G}}$, where $c(e)=1$ for all
$e\in E(\mathcal{G})$. We will use $P_{\cal G}$ to solve $st$-connectivity on ${\cal G}(x)$. To analyze the resulting algorithm, we need to upper bound the negative and positive witness sizes $w_-(x,P_{\cal G})=2C_{s,t}({\cal G}(x))$ and $w_+(x,P_{\cal G})=\frac{1}{2}R_{s,t}({\cal G}(x))$.

\begin{lemma}\label{lem:kirchhoff}
For any $x$ such that $G(x)$ is connected, $w_+(x,P_{\cal G}) = \frac{n(n-1)}{2}R_{\mathrm{avg}}(G(x))$, where $R_{\mathrm{avg}}(G(x))$ is the average resistance.
\end{lemma}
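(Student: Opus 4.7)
My plan is to prove the lemma by combining two already-established ingredients: the resistance characterization $w_+(x,P_H) = \tfrac{1}{2} R_{s,t}(H(x))$ for any $st$-connectivity span program (recalled at the end of \cref{sec:prelimSpan}, from \cite{BR12,JK2017}), applied to $H = {\cal G}$; and the series rule for effective resistance from \cref{claim:parallel_series}. Applying the former at the outset reduces the lemma to a purely classical electrical-network identity: one only needs to compute $R_{s,t}({\cal G}(x))$.

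For that resistance computation, the key structural observation is that ${\cal G}$ is, by construction, an iterated series composition of $\binom{n}{2}$ copies of $G$, one per unordered pair $\{u,v\}$ with $u\neq v$, with the $\{u,v\}$-th block contributing a subnetwork between its internal $u$- and $v$-vertices, and consecutive blocks in the chain sharing a single identified vertex. Since $G(x)$ is assumed connected, each such copy has finite $uv$-resistance, so \cref{claim:parallel_series} applies. Inducting on the chain length yields
\begin{equation*}
R_{s,t}({\cal G}(x)) \;=\; \sum_{\{u,v\}:\,u\neq v\in V(G)} R_{u,v}(G(x)),
\end{equation*}
where the sum runs over unordered pairs of distinct vertices.

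Finally, I would translate this sum into $R_{\mathrm{avg}}$. Using the definition $R_{\mathrm{avg}}(G(x)) = \tfrac{1}{n(n-1)}\sum_{s\neq t} R_{s,t}(G(x))$ from \cref{sec:graphPrelim} together with the symmetry $R_{u,v}=R_{v,u}$, the displayed sum equals the appropriate multiple of $R_{\mathrm{avg}}(G(x))$; combining with the factor $\tfrac{1}{2}$ from the witness-to-resistance identity produces the claimed expression $w_+(x,P_{\cal G}) = \tfrac{n(n-1)}{2} R_{\mathrm{avg}}(G(x))$.

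There is no serious obstacle: the whole proof is essentially bookkeeping once one recognizes that the chain-of-copies construction of ${\cal G}$ is the graph-theoretic realization of series composition, at which point \cref{claim:parallel_series} and the resistance characterization of $w_+$ do all the work. The only mildly delicate point is tracking whether sums range over ordered or unordered pairs and keeping the corresponding factors of two consistent, but this is entirely mechanical.
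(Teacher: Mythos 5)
Your approach is essentially the paper's: reduce $w_+(x,P_{\cal G})$ to $\tfrac{1}{2}R_{s,t}({\cal G}(x))$ via the resistance characterization, use the series rule on the chain of $\binom{n}{2}$ copies to write $R_{s,t}({\cal G}(x))$ as a sum of $R_{u,v}(G(x))$, and finish by the definition of $R_{\mathrm{avg}}$. The paper's proof is exactly this, compressed to one display.

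One thing you should not wave off as ``entirely mechanical,'' though, is the factor bookkeeping, because it does not actually land on the stated constant. From your own intermediate step
\begin{equation*}
R_{s,t}({\cal G}(x)) \;=\; \sum_{\{u,v\}:\,u\neq v} R_{u,v}(G(x)) \;=\; \frac{1}{2}\sum_{u\neq v} R_{u,v}(G(x)) \;=\; \frac{n(n-1)}{2}\,R_{\mathrm{avg}}(G(x)),
\end{equation*}
and then $w_+(x,P_{\cal G}) = \tfrac{1}{2}R_{s,t}({\cal G}(x)) = \tfrac{n(n-1)}{4}R_{\mathrm{avg}}(G(x))$, not $\tfrac{n(n-1)}{2}R_{\mathrm{avg}}(G(x))$. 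The paper's own displayed chain has the same slip (it asserts $\tfrac{1}{2}\sum_{u,v}R_{u,v} = n(n-1)R_{\mathrm{avg}}$, whereas the definition of $R_{\mathrm{avg}}$ gives $\tfrac{1}{2}\sum_{u,v}R_{u,v} = \tfrac{n(n-1)}{2}R_{\mathrm{avg}}$). This is harmless for the paper's downstream $O(\cdot)$ bounds, but a careful write-up should either correct the constant to $\tfrac{n(n-1)}{4}$ or state the lemma as $w_+(x,P_{\cal G}) = \Theta(n^2 R_{\mathrm{avg}}(G(x)))$; as written, your concluding sentence asserts a constant your own derivation does not produce.
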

\begin{proof}
Using the rule that resistances in series add, we have: 
\begin{align}
R_{s,t}({\cal G}(x))=\frac{1}{2}\sum_{u,v\in V(G)}R_{u,v}(G(x)) = n(n-1)R_{\mathrm{avg}}(G(x)).\label{eq:kirchhoff}
\end{align}
This is equal to $2w_+(x,P_{\cal G})$. 
\end{proof}

Now we bound $C_{s,t}(\mathcal{G}(x))$, to prove the following: 
\begin{lemma}\label{lem:negative-witness}
Fix $\kappa >1$, and suppose $G(x)$ has $\kappa$ connected components. Then if $G$ is a subgraph of a complete graph (that is, $G$ has at most one edge between any pair of vertices), we have $w_-(x,P_{\cal G})=O(1/\kappa)$. Otherwise, we have $w_-(x,P_{\cal G})=O(d_{\max}(G)/\sqrt{n\kappa})$.
\end{lemma}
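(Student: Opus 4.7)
The plan is to apply Theorem \ref{thm:negwit-capacitance}, which reduces bounding $w_-(x,P_{\cal G})$ to bounding $2C_{s,t}(\mathcal{G}(x))$, and then to exploit the fact that $\mathcal{G}$ is a series concatenation of $\binom{n}{2}$ copies of $G$, one per unordered pair $\{u,v\}$, with the copy labeled $\{u,v\}$ identified so that the natural source-sink pair inside that copy is $u$ and $v$. Iterating the series rule of Claim \ref{claim:parallel_series}, which says inverse capacitances add in series, yields
\begin{equation*}
\frac{1}{C_{s,t}(\mathcal{G}(x))} = \sum_{\{u,v\}:\, u,v\in V(G),\, u\neq v}\frac{1}{C_{u,v}(G(x))}.
\end{equation*}
Pairs whose endpoints lie in the same connected component of $G(x)$ satisfy $C_{u,v}(G(x))=\infty$ and contribute nothing, so only cross-component pairs matter.

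For each cross-component pair $\{u,v\}$ with $u\in C_i$ and $v\in C_j$, $i\neq j$, I would upper bound $C_{u,v}(G(x))$ by exhibiting an explicit unit $uv$-potential of small energy: the indicator of the smaller component, taking $\sop V(w)=1$ for $w\in C_i$ (WLOG $n_i:=|C_i|\le |C_j|=:n_j$) and $\sop V(w)=0$ otherwise. Since $\sop V$ is constant on every connected component of $G(x)$, it is a valid unit $uv$-potential. Its unit potential energy equals the total weight of edges of $G$ leaving $C_i$, which is at most $n_i(n-n_i)\le n\cdot n_i$ when $G$ has at most one edge between any pair of vertices, and at most $n_i\, d_{\max}(G)$ in the general multi-edge case. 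Thus $C_{u,v}(G(x))\le \min(n_i,n_j)\cdot n$ in the simple case and $C_{u,v}(G(x))\le \min(n_i,n_j)\cdot d_{\max}(G)$ in general.

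Summing $1/C_{u,v}(G(x))$ over the $n_i n_j$ cross-pairs between $C_i$ and $C_j$ contributes at least $\max(n_i,n_j)/n$ (resp.\ $\max(n_i,n_j)/d_{\max}(G)$) to $1/C_{s,t}(\mathcal{G}(x))$. The remaining task is the combinatorial lower bound
\begin{equation*}
\sum_{1\le i<j\le \kappa}\max(n_i,n_j) = \Omega(n\kappa)
\end{equation*}
under the constraints $n_i\ge 1$ and $\sum_{i=1}^{\kappa} n_i=n$. Ordering the components so that $n_1\ge n_2\ge\cdots\ge n_\kappa$, the sum rewrites as $\sum_i n_i(\kappa-i) = \kappa n - \sum_i i\,n_i$, and I expect the main (though elementary) obstacle to be the inequality $\sum_i i\,n_i \le n(\kappa+1)/2$. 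This can be proved by an exchange argument: if $n_i>n_{i+1}$ for some $i$, shifting a small amount of mass from $n_i$ to $n_{i+1}$ strictly increases $\sum i\,n_i$ while preserving both non-increasingness and total mass, so the maximum is attained at the uniform profile $n_i=n/\kappa$ with value $n(\kappa+1)/2$. This yields $\sum_i n_i(\kappa-i)\ge n(\kappa-1)/2=\Omega(n\kappa)$, whence $C_{s,t}(\mathcal{G}(x))=O(1/\kappa)$ in the simple case and $C_{s,t}(\mathcal{G}(x))=O(d_{\max}(G)/(n\kappa))$ in general, both of which imply the bounds claimed in the lemma via Theorem \ref{thm:negwit-capacitance}.
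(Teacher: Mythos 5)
Your proof is correct, and it is in fact a genuinely different and more efficient argument than the one in the paper. Both proofs start from Theorem~\ref{thm:negwit-capacitance} and iterate the series rule to reduce to bounding $\sum_{\{u,v\}} 1/C_{u,v}(G(x))$ from below. From there the two approaches diverge: the paper exhibits a \emph{three-level} unit potential (value $1$ on the component of $s'$, value $0$ on the component of $t'$, and a free middle value $\nu$ on all other components), invokes the series/parallel rules on the resulting three-node network to get the per-pair bound $C_{s',t'}(G(x))\le \frac{D_{s'}D_{t'}-D_{s't'}^2}{D_{s'}+D_{t'}-2D_{s't'}}$, and then pushes this through a somewhat delicate chain of algebraic inequalities. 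You instead use a \emph{two-level} indicator potential on the smaller component, giving the looser per-pair bound $C_{u,v}(G(x))\le \min(n_i,n_j)\cdot n$ (resp.\ $\min(n_i,n_j)\cdot d_{\max}(G)$), and then aggregate via the clean combinatorial inequality $\sum_{i<j}\max(n_i,n_j)\ge n(\kappa-1)/2$. Your combinatorial lemma is correct (Chebyshev's sum inequality gives it immediately: $n_i$ is non-increasing and $i$ is increasing, so $\sum_i i n_i \le \frac{1}{\kappa}\bigl(\sum_i n_i\bigr)\bigl(\sum_i i\bigr)=n(\kappa+1)/2$).

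What is notable is that in the multi-edge case your aggregation is strictly \emph{stronger} than the paper's: you obtain $C_{s,t}(\mathcal{G}(x))=O(d_{\max}(G)/(n\kappa))$, whereas the lemma only claims $O(d_{\max}(G)/\sqrt{n\kappa})$. This happens because your per-pair sum yields $\sum_{i\neq j}n_in_j\,/\min(n_i,n_j)=\sum_{i\neq j}\max(n_i,n_j)$, which is $\Omega(n\kappa)$, while the paper's intermediate estimate only reaches $\sum_{i\neq j}\sqrt{n_in_j}$ and then loses further by concavity of $\sqrt{\cdot}$. (Indeed, the paper's step $\sqrt{D_iD_j}-D_{ij}\ge\sqrt{D_iD_j-D_{ij}^2}$ is backwards as written, since $\sqrt{a^2-b^2}=\sqrt{(a-b)(a+b)}\ge a-b$ for $a\ge b\ge 0$; the intended step is presumably $\frac{\sqrt{D_iD_j}-D_{ij}}{D_iD_j-D_{ij}^2}=\frac{1}{\sqrt{D_iD_j}+D_{ij}}\ge\frac{1}{2\sqrt{D_iD_j}}$, which recovers the paper's bound up to a constant.) In the simple-graph case both arguments give $1/C_{s,t}(\mathcal{G}(x))=\Omega(\kappa)$; yours loses a factor of $2$ relative to the paper's exact $\kappa-1$, which is immaterial for a big-$O$ statement. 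Your improved general-case bound would propagate to an improved exponent in Theorem~\ref{thm:connectivity-algorithm}, replacing $n^{3/4}\sqrt{Rd_{\max}}/\kappa^{1/4}$ by $\sqrt{nRd_{\max}/\kappa}$.
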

\begin{proof}

Using the rule for capacitors in series, and accounting for double counting pairs of vertices, we have
\begin{align}\label{eq:seriesCapSum}
\frac{1}{C_{s,t}(\mathcal{G}(x))}=\frac{1}{2}\sum_{s',t'\in V(G)}\frac{1}{C_{s',t'}(G(x))}.
\end{align}

To put an upper bound on $C_{s,t}(\mathcal{G}(x))$, we can put an upper bound on each term $C_{s',t'}(G(x))$. To upper bound $C_{s',t'}(G(x))$, consider the following unit $s't'$-potential for $G(x)$. Set $\sop V(v)=1$ for all $v$ in the same connected component as $s'$ in $G(x)$, set $\sop V(v)=0$ for all $v$ in the same connected component as $t'$ in $G(x)$, and set ${\sop V}(v)= \nu$ for all other vertices in $G$.
We now find the minimum unit potential energy of this $s't'$-potential, (minimizing over $\nu$). This will be an upper bound on $C_{s',t'}(G(x))$ by \cref{def:effCap}, since it is not necessarily the optimal choice to set all vertices not connected to $s'$ or $t'$ to have the same unit potential value.

While we could calculate the unit potential energy using \cref{def:unitPotentEnergy}, and then minimize that quantity over $\nu,$ we instead use \cref{claim:parallel_series}, and the fact that our choice of unit $s't'$-potential effectively creates a graph with three vertices: one
vertex corresponds to the connected component of $G(x)$ containing $s'$ (let
$n_{s'}$ be the number of vertices in this component), one vertex corresponds
to the connected component of $G(x)$ containing $t'$ (let $n_{t'}$ be the
number of vertices in this component), and one vertex corresponds to all the
other vertices in the graph (let $n_{a}=n-n_{s'}-n_{t'}$ be the remaining
number of vertices).

For any pair of vertices $u$ and $v$, let $D_u$ be the number of edges of $G$
coming out of the component of $G(x)$ containing $u$, and let $D_{uv}$ be the
number of edges in $G$ between the components of $G(x)$ containing $u$ and
$v$, so the number of edges between the component containing $a$ and all components other than that containing $s'$ is $D_{s'}-D_{s'a}$. Using the rules for calculating
capacitance in series and parallel, we have
\begin{align}
C_{s',t'}(G(x))\leq D_{s't'}+\left(\frac{1}{D_{s'}-D_{s't'}}+\frac{1}{D_{t'}-D_{s't'}}\right)^{-1}=\frac{D_{s'}D_{t'}-D_{s't'}^2}{D_{s'}+D_{t'}-2D_{s't'}}.
\end{align}
Using \cref{eq:seriesCapSum}, we have 
\begin{align}\label{eq:CapSumVerts}
\frac{1}{C_{s,t}(\mathcal{G}(x))}\geq
\frac{1}{2}\sum_{\substack{s',t'\in V(G) \\ \{s',t'\} \notin E(G(x))}}\frac{D_{s'}+D_{t'}-2D_{s't'}}{D_{s'}D_{t'}-D_{s't'}^2}.
\end{align}

Now the expression on the right-hand side of \cref{eq:CapSumVerts} depends
only on which connected components $s'$ and $t'$ are in, so instead of summing
over the vertices of $G$, we can instead sum over the $\kappa$ connected
components of $G(x)$. Let $n_i$ be the number of vertices in the $i\tth$
connected component. Then, continuing from \cref{eq:CapSumVerts} we have:
\begin{align}
\frac{1}{C_{s,t}({\cal G}(x))} &\geq \frac{1}{2}\sum_{i,j\in [\kappa]:i\neq j}n_in_j\frac{D_i+D_j - 2D_{ij}}{D_iD_j - D_{ij}^2} = \frac{1}{2}\sum_{i,j\in [\kappa]:i\neq j}n_in_j\frac{(D_i-D_{ij})+(D_j - D_{ij})}{D_iD_j - D_{ij}^2}\nonumber
\\&=\frac{1}{2}\left(2\sum_{i,j\in [\kappa]:i\neq j}n_in_j\frac{(D_i-D_{ij})}{D_iD_j - D_{ij}^2}\right) = \sum_{i,j\in[\kappa]:i\neq j}n_in_j\frac{D_j-D_{ij}}{D_iD_j-D_{ij}^2}.
\label{eq:branch}
\end{align}
First, consider the case when $G$ is a complete graph. In that case, $D_i-
D_{ij}$, the number of edges leaving component $i$ and going to a component
other than $i$ or $j$, is exactly $n_i(n-n_i-n_j)$, whereas the number of
edges leaving component $i$ is $D_i= n_i(n-n_i)$. Finally, $D_{ij}=n_in_j$ is
the number of edges going from the $i\tth$ component to the $j\tth$ component.
Thus, continuing from \cref{eq:branch}, we have:
\begin{eqnarray}
\frac{1}{C_{s,t}({\cal G}(x))}
&\geq & \sum_{i,j\in[\kappa]:i\neq j}n_in_j\frac{n_i(n-n_i-n_j)}{n_i(n-n_i)n_j(n-n_j)-n_i^2n_j^2}
= \sum_{i,j\in[\kappa]:i\neq j}\frac{n_i(n-n_i-n_j)}{(n-n_i)(n-n_j)-n_in_j}\nonumber\\
&=& \sum_{i,j\in[\kappa]:i\neq j}\frac{n_i(n-n_i-n_j)}{n^2-nn_i-nn_j}
= \sum_{i,j\in[\kappa]:i\neq j}\frac{n_i}{n} = \kappa-1.
\end{eqnarray}
Note that this upper bound on $C_{s,t}({\cal G}(x))$ applies to \emph{any}
subgraph of a complete graph, since adding edges can only increase the
capacitance. Thus, we have completed the first part of the proof.

We now continue with the more general case, where $G$ is not necessarily a
subgraph of a complete graph. Let $d=d_{\max}(G)$. Continuing from
\cref{eq:branch}, and using the fact that for any component, we have:
\begin{eqnarray}
\frac{1}{C_{s,t}({\cal G}(x))}&\geq &
\frac{1}{2}\sum_{i,j\in [\kappa]:i\neq j}n_in_j\frac{D_i+D_j - 2D_{ij}}{D_iD_j - D_{ij}^2}
= \frac{1}{2}\sum_{i,j\in [\kappa]:i\neq j}n_in_j\frac{(\sqrt{D_i}-\sqrt{D_j})^2 + 2\sqrt{D_iD_j}-2D_{ij}}{D_iD_j - D_{ij}^2}\nonumber\\
&\geq& \sum_{i,j\in [\kappa]:i\neq j}n_in_j\frac{\sqrt{D_iD_j}-D_{ij}}{{D_iD_j} - D_{ij}^2}
\geq \sum_{i,j\in [\kappa]:i\neq j}n_in_j\frac{\sqrt{D_iD_j-D_{ij}^2}}{{D_iD_j} - D_{ij}^2}
\geq \sum_{i,j\in [\kappa]:i\neq j}n_in_j\frac{1}{\sqrt{D_iD_j}}\nonumber\\
&\geq & \sum_{i,j\in[\kappa]:i\neq j}\frac{\sqrt{n_in_j}}{d}
\geq \frac{1}{d}\sqrt{\sum_{i\in [\kappa]}\sum_{j\neq i}n_in_j}
=\frac{1}{d}\sqrt{\sum_{i\in [\kappa]}n_i(n-n_i)}.
\end{eqnarray}
Above we used the fact that for any component, $D_i\leq dn_i$. 
The sum $\sum_{i\in[\kappa]}n_i^2$ is maximized when the $n_i$ are as far as possible from uniform. In this case, we have $n_i\geq 1$ for all $i$, so $\sum_{i\in[\kappa]}n_i^2\leq (\kappa-1)+(n-(\kappa-1))^2$. Thus, continuing, we have
\begin{eqnarray}
\frac{1}{C_{s,t}({\cal G}(x))}&\geq& \frac{1}{d}\sqrt{n\sum_{i\in [\kappa]}n_i-\sum_{i\in [\kappa]}n_i^2}
\geq \frac{1}{d}\sqrt{n^2-(\kappa-1) - n^2 - (\kappa-1)^2 + 2n(\kappa-1)}\nonumber\\
&=& \frac{1}{d}\sqrt{(2n-\kappa)(\kappa-1)} 
\geq \frac{1}{d}\sqrt{n\kappa}.
\end{eqnarray}
The result follows by \cref{thm:negwit-capacitance}, which says that $w_-(x,P_{\cal G})=2C_{s,t}({\cal G}(x))$.
\end{proof}

Combining \cref{lem:negative-witness,lem:kirchhoff,thm:span-decision}, we have the following:
\begin{theorem}\label{thm:connectivity-algorithm}
For any family of graphs $G$ such that $G$ is a subgraph of a complete graph,
and $X\subseteq\{0,1\}^{E(G)}$ such that for all $x\in X$, if $G(x)$ is
connected, $R_{\mathrm{avg}}(G(x))\leq R$, and if $G(x)$ is not connected, it
has at least $\kappa$ components, the bounded error quantum query complexity
of $\textsc{conn}_{G,X}$ is $O\left(n\sqrt{R/\kappa}\right)$.

For any family of connected graphs $G$ and $X\subseteq\{0,1\}^{E(G)}$ such
that for all $x\in X$, if $G(x)$ is connected, $R_{\mathrm{avg}}(G(x))\leq R$,
and if $G(x)$ is not connected, it has at least $\kappa$ components, the
bounded error quantum query complexity of $\textsc{conn}_{G,X}$ is
$O\left(n^{3/4}\sqrt{Rd_{\max}(G)}/\kappa^{1/4}\right)$.
\end{theorem}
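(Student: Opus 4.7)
The plan is essentially a straightforward composition of the three preceding ingredients, as the authors indicate. First I would observe that by construction of the graph $\mathcal{G}$, we have $\textsc{conn}_{G,X}(x) = st\textrm{-}\textsc{conn}_{\mathcal{G},X}(x)$, so the span program $P_{\mathcal{G}}$ decides $\textsc{conn}_{G,X}$. By \cref{thm:span-decision}, the bounded error quantum query complexity is then $O\bigl(\sqrt{W_+(\textsc{conn}_{G,X}, P_{\mathcal{G}}) \cdot W_-(\textsc{conn}_{G,X}, P_{\mathcal{G}})}\bigr)$, so it suffices to bound the worst-case positive and negative witness sizes over $X$.

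For the positive witness bound, I would apply \cref{lem:kirchhoff}: for every $x \in X$ with $G(x)$ connected,
\begin{equation*}
w_+(x, P_{\mathcal{G}}) = \frac{n(n-1)}{2} R_{\mathrm{avg}}(G(x)) \leq \frac{n(n-1)}{2} R = O(n^2 R),
\end{equation*}
so $W_+ = O(n^2 R)$ under the promise. For the negative witness bound, I would invoke \cref{lem:negative-witness}: for every $x \in X$ with $G(x)$ disconnected into at least $\kappa$ components, we have $w_-(x, P_{\mathcal{G}}) = O(1/\kappa)$ when $G$ has no multi-edges, and $w_-(x, P_{\mathcal{G}}) = O(d_{\max}(G)/\sqrt{n\kappa})$ in general.

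Finally, I would just take the geometric means. In the simple-graph case,
\begin{equation*}
\sqrt{W_+ W_-} = O\!\left(\sqrt{n^2 R \cdot \frac{1}{\kappa}}\right) = O\!\left(n\sqrt{R/\kappa}\right),
\end{equation*}
while in the general case,
\begin{equation*}
\sqrt{W_+ W_-} = O\!\left(\sqrt{n^2 R \cdot \frac{d_{\max}(G)}{\sqrt{n\kappa}}}\right) = O\!\left(\frac{n^{3/4}\sqrt{R\,d_{\max}(G)}}{\kappa^{1/4}}\right),
\end{equation*}
matching the two bounds in the statement. There is essentially no obstacle here since all the real work has already been done in \cref{lem:kirchhoff,lem:negative-witness}; the only thing one should mildly double-check is that these worst-case witness bounds remain valid over the promise set $X$ (they do, because both lemmas give bounds that depend only on $R_{\mathrm{avg}}(G(x))$ and on the component count, which are precisely what the promise controls).
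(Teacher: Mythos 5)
Your proposal is correct and matches the paper's intended argument exactly; the paper itself gives no written proof, stating only that the theorem follows by ``combining'' \cref{lem:negative-witness}, \cref{lem:kirchhoff}, and \cref{thm:span-decision}, and your derivation is precisely that combination, with the geometric-mean arithmetic carried out correctly in both cases.
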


Similarly, we can show:
\begin{corollary}\label{cor:connectivity-algorithm}
Let $\mathsf{U}$ be the cost of implementing the map 
\begin{equation*}
\ket{u}\ket{0}\mapsto \sum_{v,\ell:(u,v,\ell)\in\overrightarrow{E}(G)}\sqrt{1/d_G(u)}\ket{u,v,\ell}.
\end{equation*}
If $G$ is subset of a complete graph, the quantum time complexity of $\textsc{conn}_{G,X}$ is $O(n\sqrt{R/\kappa}\mathsf{U})$.

For any family of connected graphs $G$ and $X\subseteq\{0,1\}^{E(G)}$ such
that for all $x\in X$, if $G(x)$ is connected, $R_{\mathrm{avg}}(G(x))\leq R$,
and if $G(x)$ is not connected, it has at least $\kappa$ components, the
quantum time complexity of $\textsc{conn}_{G,X}$ is
$O\left(n^{3/4}\sqrt{Rd_{\max}(G)}/\kappa^{1/4}\mathsf{U}\right)$.
\end{corollary}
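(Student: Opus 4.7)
The plan is to combine the query bound from \cref{thm:connectivity-algorithm} with an efficient implementation of the span program unitary $U(P_{\mathcal{G}},x)=(2\Pi_{\ker A}-I)(2\Pi_{H(x)}-I)$. By \cref{thm:span-decision}, the span program algorithm that solves $st$-$\textsc{conn}_{\mathcal{G},X}$ (and hence $\textsc{conn}_{G,X}$) makes $O(\sqrt{W_+W_-})$ calls to $U(P_{\mathcal{G}},x)$ plus a comparable number of elementary gates, and the analysis in the proof of \cref{thm:connectivity-algorithm} already bounds this count by $O(n\sqrt{R/\kappa})$ in the complete-graph case and $O(n^{3/4}\sqrt{Rd_{\max}(G)}/\kappa^{1/4})$ in general. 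So I only need to show that a single application of $U(P_{\mathcal{G}},x)$ can be carried out in time $O(\mathsf{U})$.

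For this, I would follow the strategy used in the proof of \cref{cor:estimating-capacitance-time}, which is ultimately the implementation in \cite{JK2017,BR12} adapted to the parent graph $\mathcal{G}$. The reflection $2\Pi_{H(x)}-I$ is diagonal in the directed-edge basis of $\mathcal{G}$, and each directed edge of $\mathcal{G}$ is in bijection with a directed edge of one labelled copy of $G$; hence this reflection costs one query to $O_x$ and $O(\log n)$ bookkeeping operations to read off the copy label $\{u,v\}$. The reflection $2\Pi_{\ker A}-I$ is implemented by reflecting about the span of the local vertex states
\begin{equation*}
\ket{\phi_w}=\sum_{(w,w',\ell)\in\overrightarrow{E}(\mathcal{G})}\sqrt{1/d_{\mathcal{G}}(w)}\,\ket{w,w',\ell},\qquad w\in V(\mathcal{G}),
\end{equation*}
and therefore reduces to the cost of preparing $\ket{\phi_w}$ coherently conditioned on $\ket{w}$.

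The key observation is that for any non-junction vertex $w=(a,\{u,v\})$ of $\mathcal{G}$, the state $\ket{\phi_w}$ is (up to the extra classical label $\{u,v\}$) exactly the walk state of $a$ in $G$, so it can be prepared at cost $\mathsf{U}$ using the map in the hypothesis of the corollary. The one subtlety is at junction vertices, for example $w=(v,\{u,v\})\equiv(u,\{u,v{+}1\})$, where the neighbours in $\mathcal{G}$ come from two adjoining copies of $G$ and $d_{\mathcal{G}}(w)=d_G(v)+d_G(u)$. The walk state at such a $w$ is prepared by first coherently loading an ancilla qubit with the amplitudes $\sqrt{d_G(v)/d_{\mathcal{G}}(w)}$ and $\sqrt{d_G(u)/d_{\mathcal{G}}(w)}$, and then controlled on that qubit running the hypothesised walk map on $G$ from either $v$ or $u$, still at total cost $O(\mathsf{U})$. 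Assembling these pieces gives $O(\mathsf{U})$ time per application of $U(P_{\mathcal{G}},x)$, and multiplying by the query bounds from \cref{thm:connectivity-algorithm} yields the two claimed running times.

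The main obstacle, although it is more bookkeeping than substance, is verifying carefully that the junction-vertex preparation above is correct with the right relative weights, so that the span of the $\ket{\phi_w}$'s matches $\ker A$ for the weighted graph $\mathcal{G}$ with the uniform edge weighting $c\equiv 1$ used in \cref{sec:connectivity}. Once this is in place, the remaining ingredients, query bound and reflection implementation, combine cleanly into the stated time complexities.
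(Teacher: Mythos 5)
Your proposal takes essentially the same route as the paper's proof: combine the $O(n\sqrt{R/\kappa})$ and $O(n^{3/4}\sqrt{Rd_{\max}(G)}/\kappa^{1/4})$ call counts from \cref{thm:connectivity-algorithm} with the fact that each call to $U(P_{\mathcal{G}},x)$ costs $O(\mathsf{U})$. The paper dispatches the second ingredient with a bare citation to \cite{JK2017} (generalizing \cite{BR12}), whereas you actually spell out \emph{why} a walk step on $\mathcal{G}$ can be simulated using a walk step on $G$: away from junction vertices the two coincide up to a copy label, and at junctions one coherently splits the amplitude in proportion to $d_G(v)$ and $d_G(u)$ and then runs the $G$-walk conditioned on the split. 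That is the nontrivial reduction that the paper's one-liner silently invokes, so filling it in is a genuine improvement.

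One imprecision worth fixing: you write that $2\Pi_{\ker A}-I$ is implemented by ``reflecting about the span of the local vertex states $\ket{\phi_w}$'' and later say that the span of the $\ket{\phi_w}$'s should ``match $\ker A$.'' It does not. The $\ket{\phi_w}$ are only $|V(\mathcal{G})|$ states and are related to $\mathrm{row}(A)$ (up to antisymmetrization), not to $\ker A$, which has much larger dimension. The correct picture, following \cite{BR12,JK2017}, is a Szegedy-style decomposition: one factor is the swap (or signed swap) on directed-edge pairs $\ket{u,v,\ell}\leftrightarrow\ket{v,u,\ell}$, and the other is the reflection about the span of the star states. It is the product of these that realizes the desired reflection. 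This does not change your cost accounting --- the dominant cost is still one coherent preparation and unpreparation of $\ket{\phi_w}$, i.e.\ $O(\mathsf{U})$ plus $O(\log n)$ bookkeeping --- but the phrasing as stated is incorrect and should be replaced by ``the reflection about $\ker A$ reduces, via the standard decomposition, to preparing the local walk states $\ket{\phi_w}$.''
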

\begin{proof}
The algorithm of \cref{thm:span-decision} makes $O\left(n^{3/4}\sqrt{Rd_{\max}(G)}/\kappa^{1/4}\right)$ calls to a unitary $U(P_{\cal G},x)$ (see \cite{IJ15}). 
By \cite{JK2017} (generalizing \cite{BR12}), for any $G$, $U(P_{\cal G},x)$ can be implemented in cost $\mathsf{U}$. 
\end{proof}

%%%%%%%%%%%%%%%%%%%%%%%%%%%%%%%%%%%%%%%%%%%%%%%%%%%%%%%%%%%%%%%%%%%%%%%%%%%%%%%%%%%%%%%%%%%%%%%%%%%%%%%%%%%%%%%%%%%%%%%%%%%%%%%%%%%%%%%%%%%%%%%%%%%%%%%%%%%%%%

\section{Spectral Algorithm for Deciding Connectivity}\label{SpecConnectivity}

In this section, we will give alternative quantum algorithms for deciding
connectivity. We begin by presenting an algorithmic template, outlined in \cref{alg:alg1}, that requires the instantiation of a certain
initial state. Since this initial state is independent of the input, we
already get an upper bound on the quantum query complexity, as follows:

\begin{restatable}{corollary}{query}\label{cor:query}
Fix any $\lambda>0$ and $\kappa >1$. For any family of connected graphs $G$ on $n$ vertices and $X\subseteq \{0,1\}^{E(G)}$ such that for all $x\in X$, either $\lambda_2(G(x))\geq \lambda$ or $G(x)$ has at least $\kappa$ connected components, the bounded error quantum query complexity of $\textsc{conn}_{G,X}$ is $O\left(\sqrt{\frac{n d_{\max}(G)}{\kappa\lambda}}\right)$.
\end{restatable}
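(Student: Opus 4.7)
The plan is to invoke \cref{alg:alg1}, the phase-estimation-based template introduced at the start of this section. Its cost is the cost of preparing its (input-independent) initial state plus some number of applications of $U(P_G,x) = (2\Pi_{\ker A} - I)(2\Pi_{H(x)} - I)$, where $P_G$ is the $st$-connectivity span program of \cref{eq:st-conn-span-program}. Since the corollary only concerns query complexity, state preparation is free, and each application of $U(P_G,x)$ costs $O(1)$ queries to $O_x$, because only $\Pi_{H(x)}$ depends on $x$ and it can be implemented with one oracle call per edge register. The proof thus reduces to bounding the number of applications of $U(P_G,x)$ needed by the template.

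To carry out the analysis, I would rely on the key spectral fact that $A\Pi_{H(x)}A^\dagger = 2L_{G(x)}$ on the vertex space $\mathrm{span}\{\ket{v}:v\in V(G)\}$, together with $\|A\|^2 = O(d_{\max}(G))$. A Jordan-decomposition calculation with the two projectors $\Pi_{\ker A}$ and $\Pi_{H(x)}$ shows that the non-trivial eigenphases of $U(P_G,x)$ are $\pm 2\arcsin(\sigma/\|A\|)$ for $\sigma$ running over the singular values of $A\Pi_{H(x)}$ on the relevant invariant subspace. Combined with $\lambda_2(G(x)) \geq \lambda$ in the connected case, this yields a phase gap of $\Omega(\sqrt{\lambda/d_{\max}(G)})$ away from the zero-phase eigenspace of $U(P_G,x)$. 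In the $\kappa$-component case, $\ker L_{G(x)}$ has dimension at least $\kappa$, and choosing the initial state to be the image under $A^\dagger$ of a vector orthogonal to $\ket{\mu}$ ensures an overlap of $\Omega(\sqrt{\kappa/n})$ with the zero-phase eigenspace of $U(P_G,x)$, since only one of the $\kappa$ kernel directions of $L_{G(x)}$ (the uniform one) is eliminated by the orthogonality constraint.

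Given these two bounds, \cref{alg:alg1} uses gapped phase estimation (\cref{thm:gapped-phase-estimation}) at precision $\Theta(\sqrt{\lambda/d_{\max}(G)})$ followed by amplitude estimation (\cref{cor:amplitude-estimation}) to distinguish a zero-phase probability of $0$ in the connected case from $\Omega(\kappa/n)$ in the disconnected case. Phase estimation costs $O(\sqrt{d_{\max}(G)/\lambda})$ applications of $U(P_G,x)$ per call, and amplitude estimation calls it $O(\sqrt{n/\kappa})$ times, for a total of $O(\sqrt{n\,d_{\max}(G)/(\kappa\lambda)})$ applications of $U(P_G,x)$, and hence the same number of queries to $O_x$. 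The main obstacle is the spectral translation: carefully establishing the phase-gap-to-$\lambda_2$ correspondence with the correct normalization by $\|A\|^2 = \Theta(d_{\max}(G))$, and verifying that the input-independent initial state has projection $\Omega(\sqrt{\kappa/n})$ onto $\ker L_{G(x)} \cap \ket{\mu}^\perp$ for \emph{every} component structure on $V(G)$ with at least $\kappa$ components, not merely on average.
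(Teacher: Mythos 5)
Your proposal takes essentially the same route as the paper: invoke \cref{alg:alg1} via the template \cref{thm:alg-template}, observe that state preparation is query-free and that $U(P_G,x)$ costs $O(1)$ queries, use the phase-gap bound $\Delta(U(P_G,x))\geq 2\sqrt{\lambda_2(G(x))/d_{\max}(G)}$ (which the paper gets from \cite[Theorem 3.10]{IJ15} as \cref{lem:Delta-bound}), and pair it with an overlap bound of $\Omega(\kappa/n)$ coming from the fact (\cref{lem:kappa}) that when $G(x)$ has $\kappa$ components, exactly $\kappa-1$ of the $\kappa$ kernel directions of $L_{G(x)}$ survive inside $\mathrm{row}(A)$. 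Multiplying the $O(\sqrt{n/\kappa})$ amplitude-estimation repetitions by the $O(\sqrt{d_{\max}(G)/\lambda})$ phase-estimation depth gives the claimed bound, exactly as in the paper.

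The one place where your write-up is imprecise is the description of the initial state. You write that the initial state is ``the image under $A^\dagger$ of a vector orthogonal to $\ket{\mu}$,'' which reads as a \emph{single} vector in $\mathrm{row}(A)$. A single vector would not do: its overlap with the ($x$-dependent) $(\kappa-1)$-dimensional zero-phase subspace could be arbitrarily small, and you correctly flag at the end that guaranteeing the $\Omega(\sqrt{\kappa/n})$ overlap ``for every component structure, not merely on average'' is the obstacle. The paper resolves this cleanly by taking the bipartite state $\ket{\psi_{\mathrm{init}}}=\frac{1}{\sqrt{n-1}}\sum_{i=1}^{n-1}\ket{i}\ket{\psi_i}$ for \emph{any} orthonormal basis $\{\ket{\psi_i}\}$ of $\mathrm{row}(A)$: by Parseval, for any rank-$(\kappa-1)$ projector $\Pi$ onto a subspace of $\mathrm{row}(A)$, $\norm{(I\otimes\Pi)\ket{\psi_{\mathrm{init}}}}^2=\frac{\kappa-1}{n-1}$ \emph{exactly}, independent of which eigenvectors are fixed by $U(P_G,x)$. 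This is the small but essential idea your sketch leaves unresolved; once you state it, your proof closes and coincides with the paper's.
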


In \cref{sec:any-G}, we describe one such initial state, and how to prepare it, leading to the following upper bound, in which $\mathsf{U}$ is the cost of performing one step of a quantum walk on $G$, and $\mathsf{S}$ is the cost of preparing a quantum state corresponding to the stationary distribution of a quantum walk on $G$:

\begin{restatable}{theorem}{anyG}\label{thm:any-G}
Fix any $\lambda>0$ and $\kappa>1$. For any family of connected graphs $G$ on $n$ vertices and
$X\subseteq \{0,1\}^{E(G)}$ such that for all $x\in X$, either
$\lambda_2(G(x))\geq \lambda$, or $G(x)$ has at least $\kappa$ connected
components, $\textsc{conn}_{G,X}$ can be solved in bounded error in time
\begin{equation*}
\widetilde{O}\left(\sqrt{\frac{nd_{\mathrm{avg}}(G)}{\kappa\lambda_2(G)}}\left(\mathsf{S}+\sqrt{\frac{d_{\max}(G)}{\lambda}}\mathsf{U}\right)\right).
\end{equation*}
\end{restatable}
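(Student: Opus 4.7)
The plan is to instantiate the algorithmic template of \cref{alg:alg1}, whose query complexity is controlled by \cref{cor:query}, with a time-efficient initial state, and to separately account for the cost of preparing the state and of applying $U(P_G,x)$ with sufficient precision.

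First I would identify the ``target'' subspace of $H$ whose non-triviality certifies disconnectedness of $G(x)$. Using $AA^T = 2L_G$ and the form of $P_G$, one checks that the ``negative side'' of the $1$-eigenspace of $U(P_G,x)$, namely $(\ker A)^\perp \cap H(x)^\perp$, consists exactly of vectors $A^T\ket{\psi}$ for $\ket{\psi}$ constant on every connected component of $G(x)$; since $G$ is connected this subspace has dimension $\kappa(G(x)) - 1$, vanishing iff $G(x)$ is connected. As initial state I would take $\ket{\psi_0} := A^T\ket{\pi}/\|A^T\ket{\pi}\|$ with $\ket{\pi} = \sum_v \sqrt{d_G(v)/(2|E(G)|)}\ket{v}$ the square-root stationary distribution of the random walk on $G$; preparing $\ket{\pi}$ costs $\mathsf{S}$ and applying $A^T$ is a standard quantum-walk primitive of cost $\mathsf{U}$. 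Because $A^T\ket{\pi}$ lies in $\mathrm{col}(A^T)$ and the projector $\Pi_{H(x)^\perp}$ preserves the antisymmetric edge structure of $\mathrm{col}(A^T)$, a direct calculation gives
\begin{equation*}
\bigl\|\Pi_{(\ker A)^\perp \cap H(x)^\perp}\ket{\psi_0}\bigr\|^2 = \frac{\bra{\pi}L_{G\setminus G(x)}\ket{\pi}}{\bra{\pi}L_G\ket{\pi}}.
\end{equation*}

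I would upper bound the denominator by $O(d_{\mathrm{avg}}(G))$ via $L_G \preceq 2D_G$ and $\bra{\pi}D_G\ket{\pi}=\sum_v d_G(v)^2/(2|E(G)|)\le d_{\mathrm{avg}}(G)$, and lower bound the numerator by $\Omega(\kappa\lambda_2(G)/n)$ via a two-step argument: the $(\kappa-1)$-dimensional subspace $\mathrm{span}\{\mathbf{1}_{C_i}\}\cap\ket{\mu}^\perp$ is annihilated by $L_{G(x)}$, so $L_G$ acts as $L_{G\setminus G(x)}$ on it and has Rayleigh quotient at least $\lambda_2(G)$ by Courant-Fischer; and a counting argument lower bounds the squared projection of $\ket{\pi}$ onto this subspace by $\Omega(\kappa/n)$. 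The resulting overlap is $\Omega\bigl(\sqrt{\kappa\lambda_2(G)/(nd_{\mathrm{avg}}(G))}\bigr)$. For the phase-gap, $\sigma_{\min}(A\Pi_{H(x)})^2 = 2\lambda_2(G(x))$ combined with $\|A\|^2 \leq 2d_{\max}(G)$ yields a phase gap of $\Omega(\sqrt{\lambda/d_{\max}(G)})$ for $U(P_G,x)$ under the connected branch of the promise. \cref{thm:gapped-phase-estimation} at this precision uses $\widetilde{O}(\sqrt{d_{\max}(G)/\lambda})$ calls to $U(P_G,x)$, each of cost $\mathsf{U}$ by \cite{JK2017} (generalizing \cite{BR12}). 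Combining via \cref{cor:amplitude-estimation}, the outer loop uses $\widetilde{O}(\sqrt{nd_{\mathrm{avg}}(G)/(\kappa\lambda_2(G))})$ iterations, each re-preparing $\ket{\psi_0}$ at cost $O(\mathsf{S}+\mathsf{U})$ and running one gapped phase estimation at cost $\widetilde{O}(\sqrt{d_{\max}(G)/\lambda}\,\mathsf{U})$; multiplying gives the claimed bound.

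The hard part will be the numerator lower bound $\bra{\pi}L_{G\setminus G(x)}\ket{\pi} \geq \Omega(\kappa\lambda_2(G)/n)$: the promise only constrains $\lambda_2(G)$ and the component count of $G(x)$, but says nothing directly about the spectrum of $L_{G\setminus G(x)}$. The delicate step---exhibiting a test vector in $\mathrm{span}\{\mathbf{1}_{C_i}\}\cap\ket{\mu}^\perp$ that is \emph{simultaneously} well-aligned with $\ket{\pi}$ (giving the $\kappa/n$ factor) and achieves the full $\lambda_2(G)$ Rayleigh quotient against $L_G$, uniformly over every $x$ admitted by the promise---is where the bulk of the technical work will concentrate.
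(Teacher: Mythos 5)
Your proposal differs from the paper's in one essential structural choice, and that choice introduces two gaps.

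The paper does not use a single initial state; it uses a superposition $\ket{\psi_{\mathrm{init}}}\propto\sum_{j=1}^{n-1}\ket{j}\ket{\psi_j}$ with $\ket{\psi_j}=A^T\ket{\hat{j}}$ over an orthonormal basis $\{\ket{\hat{j}}\}_j$ of $\mathrm{col}(A)$, with an extra index register. The index register makes the cross terms vanish, which yields a Parseval-type identity: for any unit $\ket{\phi}\in\mathrm{row}(A)$, $\sum_j|\braket{\phi}{\psi_j}|^2=\|A\ket{\phi}\|^2\geq\sigma_{\min}(A)^2=2\lambda_2(G)$, and this gives the $\Omega(\kappa\lambda_2(G)/(nd_{\mathrm{avg}}))$ overlap (Lemma~\ref{lem:eps}) uniformly over every $x$ in the promise. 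Your proposal collapses this to a single vector $\ket{\psi_0}=A^T\ket{\pi}/\|A^T\ket{\pi}\|$, which breaks in two ways.

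First, the initial state can be identically zero. Since $G$ is connected, $\ker A^T=\mathrm{col}(A)^\perp=\mathrm{span}\{\ket{\mu}\}$ with $\ket{\mu}\propto\sum_v\ket{v}$, and for any regular $G$ the square-root stationary state is exactly $\ket{\pi}=\ket{\mu}$, so $A^T\ket{\pi}=0$. This rules out precisely the families where the paper's bounds are applied (complete graphs, hypercubes, Cayley graphs). Even for non-regular $G$, $\|A^T\ket{\pi}\|^2=2\bra{\pi}L_G\ket{\pi}$ can be tiny, and more importantly a single fixed vector in the $(n-1)$-dimensional $\mathrm{row}(A)$ can be made nearly orthogonal to a $(\kappa-1)$-dimensional subspace $\ker A(x)\cap\mathrm{row}(A)$ by an adversarial $x$, so no uniform lower bound on the overlap can hold; the index-register superposition is exactly what prevents this.

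Second, the ``direct calculation'' for the overlap is incorrect as stated. The quantity you want is $\|\Pi_{\mathrm{row}(A)\cap H(x)^\perp}\ket{\psi_0}\|^2$, but $\Pi_{H(x)^\perp}$ does not preserve $\mathrm{row}(A)$: $H(x)^\perp$ preserves antisymmetry of edge functions, but $\mathrm{row}(A)$ is the proper subspace of gradient flows (dimension $n-1$) inside all antisymmetric edge functions (dimension $|E(G)|$), with orthogonal complement the cycle space. Consequently $\Pi_{\mathrm{row}(A)\cap H(x)^\perp}\neq\Pi_{\mathrm{row}(A)}\Pi_{H(x)^\perp}$ and one only has the one-sided inequality
\begin{equation*}
\bigl\|\Pi_{\mathrm{row}(A)\cap H(x)^\perp}A^T\ket{\pi}\bigr\|^2 \;\leq\; \bigl\|\Pi_{H(x)^\perp}A^T\ket{\pi}\bigr\|^2 \;=\; 2\,\bra{\pi}L_{G\setminus G(x)}\ket{\pi},
\end{equation*}
which is an upper bound, the wrong direction. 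You correctly flag the numerator lower bound as the hard part, but the gap appears one step earlier: the formula you would be lower bounding is not the overlap. The paper sidesteps both issues by never computing the overlap of a single state with $\ker A(x)\cap\mathrm{row}(A)$; instead the index-register trick reduces everything to $\|A\ket{\phi_i}\|^2\geq\sigma_{\min}(A)^2$, which is immediate.

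Your phase-gap analysis and the outer accounting (gapped phase estimation at precision $\sqrt{\lambda/d_{\max}(G)}$, amplitude estimation with $\widetilde{O}(1/\sqrt{\varepsilon})$ repetitions, per-call costs $\mathsf{S}$ and $\mathsf{U}$) match the paper's Theorem~\ref{thm:alg-template} and Lemma~\ref{lem:psi-init}; the gap is entirely in the choice and analysis of the initial state.
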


In \cref{sec:Cayley-graph}, we restrict our attention to the case when $G$ is
a Cayley graph, and give an alternative instantiation of the algorithm in \cref{alg:alg1}, proving the following, where $\mathsf{\Lambda}$ is the cost of computing the eigenvalues of $G$:
\begin{restatable}{theorem}{Cayleygraph}\label{thm:Cayley-graph}
Fix any $\lambda>0$ and $\kappa >1$. For any family of connected
graphs $G$ on $n$ vertices such that each $G$ is a degree-$d$ Cayley graph over an Abelian group, and
$X\subseteq \{0,1\}^{E(G)}$ such that for all $x\in X$, either
$\lambda_2(G(x))\geq \lambda$ or $G(x)$ has at least $\kappa$ connected components,
$\textsc{conn}_{G,X}$ can be solved in bounded error in time $$\widetilde{O}
\left(\sqrt{\frac{nd}{\kappa\lambda}}\mathsf{U}+\sqrt{\frac{nd}{\kappa\lambda_2(
G)}}\mathsf{\Lambda}\right).$$
\end{restatable}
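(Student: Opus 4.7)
The plan is to instantiate the algorithmic template of \cref{alg:alg1} in the Cayley graph setting by exploiting the Fourier structure over the underlying abelian group. If $G=\mathrm{Cay}(\Gamma,S)$ is a Cayley graph over an abelian group $\Gamma$ of order $n$, then the weighted Laplacian $L_G$ is simultaneously diagonalized by the characters $\{\chi_g\}_{g\in \Gamma}$, with $L_G\ket{\chi_g}=\lambda_g\ket{\chi_g}$, and the quantum Fourier transform $\mathsf{QFT}_\Gamma$ that implements the change of basis between vertices and characters runs in time polylogarithmic in $n$. The key point is that this structure obviates preparing the stationary distribution of a random walk on $G$ (the role of $\mathsf S$ in \cref{thm:any-G}): using $\mathsf{QFT}_\Gamma$ together with oracle access to $g\mapsto\lambda_g$ via $\mathsf{\Lambda}$, we can prepare any state of the form $\sum_{g\in\Gamma}\beta(\lambda_g)\ket{\chi_g}$ in the character basis.

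The first step is to identify the initial state required by \cref{alg:alg1} and to write it in the form $\ket{\phi_0}=\sum_{g\neq 0}\beta(\lambda_g)\ket{\chi_g}$, where the filter $\beta$ is a low-energy filter concentrated near $\lambda_2(G)$ chosen so that $\ket{\phi_0}\perp\ket{\mu}$ and $\ket{\phi_0}$ has $\Omega(\kappa/n)$ overlap with the kernel of $L_{G(x)}$ whenever $G(x)$ has at least $\kappa$ connected components, while being essentially supported on low-lying eigenvectors of $L_G$. I would then prepare $\ket{\phi_0}$ by applying $\mathsf{QFT}_\Gamma$ to $\ket 0$, computing $\lambda_g$ into an ancilla via one call to $\mathsf{\Lambda}$, applying a controlled rotation that encodes $\beta(\lambda_g)$, uncomputing the eigenvalue ancilla, and finally applying $\mathsf{QFT}_\Gamma^{-1}$. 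A short round of amplitude amplification handles the normalization of the filtered superposition, keeping each invocation of state preparation at cost $\widetilde O(\mathsf{\Lambda})$ up to $\mathrm{polylog}(n)$ overhead.

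From there, I would plug $\ket{\phi_0}$ into \cref{alg:alg1}: its inner phase-estimation step acts on the walk unitary of $G(x)$ at cost $\mathsf U$ per call, and \cref{cor:query} already shows that $\widetilde O\!\left(\sqrt{nd/(\kappa\lambda)}\right)$ such calls suffice. The number of state preparations, by contrast, is controlled by the outer amplitude-estimation layer of \cref{alg:alg1}, whose iteration count is governed by the phase gap of $L_G$ itself (through the spectral profile used to prepare $\ket{\phi_0}$) rather than the promised $\lambda$, giving $\widetilde O\!\left(\sqrt{nd/(\kappa\lambda_2(G))}\right)$ preparations. Summing the two contributions yields the claimed runtime.

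The main obstacle I anticipate is the bookkeeping of the state-preparation count and its dependence on $\lambda_2(G)$ rather than $\lambda$: it must be verified that the filter $\beta$ can be approximated to sufficient accuracy by a rotation of bounded norm implemented with $\mathrm{polylog}(n)$ elementary gates on top of a single $\mathsf{\Lambda}$ call, while the normalization $\sum_{g\neq 0}\beta(\lambda_g)^2$ governs the success amplitude of the outer amplitude-amplification loop. Extracting the correct $1/\sqrt{\lambda_2(G)}$ factor (instead of a coarser $1/\sqrt{\lambda}$) will require invoking the promise on $\lambda_2(G(x))$ only inside the inner phase-estimation routine, keeping the outer state-preparation subroutine spectrum-independent of $x$ and sensitive only to the parent spectrum of $L_G$.
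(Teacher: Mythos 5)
Your proposal identifies the right high-level ingredients (Fourier structure, $\mathsf{\Lambda}$-oracle, amplitude amplification) and arrives at the correct final expression, but the internal architecture is wrong in ways that matter.

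First, the initial state you construct lives in the wrong space. \cref{alg:alg1} requires a two-register state $\ket{\psi_{\mathrm{init}}}=\sum_{g\neq 0}\ket{g}\ket{\psi_g}$ where the $\ket{\psi_g}$ span $\mathrm{row}(A)$, a subspace of the \emph{edge} space $H$, because the unitary $U(P,x)$ acts there. Your recipe (QFT, compute $\lambda_g$, controlled rotation, uncompute, inverse QFT) produces a single-register vertex-space state $\sum_g\beta(\lambda_g)\ket{\chi_g}$; it never leaves $\mathrm{span}\{\ket{v}:v\in V\}$, and it lacks the tensor structure. The missing step is applying $A^T$ to go from $\ket{\hat g}$ to $\ket{\psi_g}=A^T\ket{\hat g}$, which is precisely a quantum-walk step and costs $\mathsf U$; this is why the paper's state-preparation cost picks up a $\sqrt{R_{\mathrm{avg}}(G)\,d}\,\mathsf U$ term that your $\widetilde O(\mathsf{\Lambda})$ estimate omits. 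Relatedly, the filter should be $\beta(\lambda_g)\propto 1/\sqrt{\lambda_g}$ --- that is, it must exactly normalize the $\ket{\psi_g}$ to an orthonormal basis of $\mathrm{row}(A)$ so that the overlap with the kernel of $U(P,x)$ is exactly $(\kappa-1)/(n-1)$. A ``low-energy filter concentrated near $\lambda_2(G)$'' is a different (and unanalyzed) choice that would change $\varepsilon$ and thus the whole iteration count.

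Second, your accounting of where the $1/\sqrt{\lambda_2(G)}$ comes from is incorrect. The outer amplitude-estimation loop runs for $1/\sqrt{\varepsilon}=\Theta(\sqrt{n/\kappa})$ iterations --- this quantity is determined only by the overlap $\varepsilon=(\kappa-1)/(n-1)$ and has nothing to do with the parent graph's spectral gap. The $\sqrt{d/\lambda_2(G)}$ factor appears \emph{inside} each state preparation: amplitude amplification of the $1/\sqrt{\lambda_g}$ weights in the $\ket{g}$ register succeeds only with amplitude $\sqrt{\lambda_2(G)R_{\mathrm{avg}}(G)}$, and a second amplitude amplification (of the good branch after the Hadamard) costs $\sqrt{R_{\mathrm{avg}}(G)\,d}$ rounds. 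The full preparation cost is $\widetilde O\bigl(\sqrt{R_{\mathrm{avg}}(G)d}\,\mathsf U+\sqrt{d/\lambda_2(G)}\,\mathsf\Lambda\bigr)$, not $\widetilde O(\mathsf\Lambda)$. The claimed bound then emerges multiplicatively as $\sqrt{n/\kappa}\cdot\bigl(\mathsf{Init}+\sqrt{d/\lambda}\,\mathsf U\bigr)$, with the $\sqrt{R_{\mathrm{avg}}(G)}$ term absorbed via $R_{\mathrm{avg}}(G)\leq 1/\lambda_2(G)\leq 1/\lambda$. Your additive decomposition into ``phase-estimation calls'' and ``state preparations'' with separate iteration counts does not reflect the nested structure of the algorithm, and your explanation of the $\lambda_2(G)$ dependence is a post hoc rationalization of the right numbers rather than an argument.
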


We remark that the results in this section, in contrast to the previous
connectivity algorithm, apply with respect to any weighting of the edges of
$G$. Applying non-zero weights to the edges of $G$ does not change which
subgraphs $G(x)$ are connected, but it does impact the complexity of our
algorithm. Thus, for any weight function on the edges, we get algorithms with
the complexities given in \cref{cor:query}, \cref{thm:any-G} and 
\cref{thm:Cayley-graph}, where $d_{\max}(G)$ and $d_{\mathrm{avg}}(G)$ are in terms of
the weighted degrees, and $\lambda_2(G)$ and $\lambda_2(G(x))$ are in terms of
the weighted Laplacians.

Finally, in \cref{sec:estimating-connectivity}, we describe how when $G$ is a complete graph, these ideas
can be used to design algorithms, not only for deciding connectivity, but also for
estimating the \emph{algebraic connectivity} of a graph, which is a measure of
how connected a graph is. In particular, we show:
\begin{restatable}{theorem}{ConnEstimation}\label{thm:Connectivity-estimation}
Let $G$ be the complete graph on $n$ vertices. There
exists a quantum algorithm that, on input $x$, with probability at least 2/3, outputs an estimate $\tilde\lambda$ such that $\abs{\tilde\lambda-\lambda_2(G(x))}\leq \varepsilon \lambda_2(G(x))$, where $\lambda_2(G(x))$ is 
the $\textit{algebraic connectivity}$ of $G(x)$, in time
$\widetilde{O}\left(\frac{1}{\varepsilon}\frac{n}{\sqrt{\lambda_2(G(x))}}\right)$.
\end{restatable}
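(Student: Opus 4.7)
The plan is to build on the spectral approach developed in \cref{SpecConnectivity}. In that approach one applies phase estimation to a unitary $U$---a span-program reflection product, or equivalently a walk-type operator on $G$---starting from an input-independent initial state $\ket{\psi_0}$ whose support, for every $x$, lies in the span of eigenvectors of $U$ corresponding to the nonzero eigenvalues of the weighted Laplacian $L_{G(x)}$. The nonzero phases of $U$ on this support are an explicit smooth increasing function $\theta = f(\lambda)$ of the Laplacian eigenvalues, calibrated in \cref{thm:Cayley-graph} for the complete graph $K_n$. Consequently, estimating the smallest nonzero phase $\theta_\star(x)$ in the support of $\ket{\psi_0}$ to multiplicative accuracy $\varepsilon/2$ yields $\lambda_2(G(x))$ to multiplicative accuracy $\varepsilon$ after inverting $f$.

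The algorithm would proceed in two stages. In the first stage, I would run a doubling search over candidate scales: starting from $\lambda \approx n$ and halving, at each scale use \cref{thm:gapped-phase-estimation} as a one-sided detector, combined with amplitude estimation (\cref{thm:amplitude-estimation}) on the flag register, to test whether any weight of $\ket{\psi_0}$ lies on phases below the current threshold; halt at the first scale where detection occurs. This localizes $\theta_\star(x)$---and hence $\lambda_2(G(x))$---to within a constant factor using $O(\log n)$ outer rounds, each of cost dominated by the finest round. In the second stage, I would run phase estimation on $U$ at absolute precision $\varphi = \Theta(\varepsilon \, f(\lambda_2(G(x))))$, using the scale obtained above to set $\varphi$ correctly, and return $\tilde\lambda = f^{-1}(\tilde\theta)$.

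The cost analysis matches the claimed bound. For $K_n$, each application of $U$ is implementable in $\widetilde{O}(\sqrt{n})$ time using the Cayley-graph structure exploited in \cref{thm:Cayley-graph}; phase estimation at precision $\varphi = \Theta(\varepsilon \sqrt{\lambda_2/n})$ requires $O(1/\varphi) = O\bigl(\varepsilon^{-1}\sqrt{n/\lambda_2(G(x))}\bigr)$ applications of $U$; the doubling search contributes only $\operatorname{polylog}(n)$ overhead; and standard boosting gives success probability $2/3$. Multiplying through yields total time $\widetilde{O}\bigl(\varepsilon^{-1}\, n/\sqrt{\lambda_2(G(x))}\bigr)$, as required.

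The principal obstacle is that $\theta_\star(x)$ is \emph{not} the phase gap of $U$: the spectrum of $U$ restricted to $\ket{\psi_0}$ also contains many larger phases $f(\lambda_i)$, $i\geq 2$, which would mislead a naive minimum-finding routine. The doubling search circumvents this by sweeping downward with a one-sided detector that asks only whether some amplitude lies \emph{below} the current threshold, never requiring larger phases to be identified or isolated. A smaller subtlety is verifying that multiplicative accuracy on $\theta$ transfers to multiplicative accuracy on $\lambda$; this follows from smoothness of $f$ away from $0$ in the explicit Cayley-graph form used by \cref{thm:Cayley-graph}.
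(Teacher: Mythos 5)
Your overall strategy tracks the paper's: both localize the phase gap of $U(P,x)$ by iterated one-sided detection --- gapped phase estimation (\cref{thm:gapped-phase-estimation}) followed by amplitude estimation on the flag register --- and both convert to $\lambda_2(G(x))$ via the identity $\lambda_2(G(x)) = n\sin^2\bigl(\Delta(U(P,x))/2\bigr)$ from \cref{lem:sze1}. There are, however, two substantive problems.

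First, the cost analysis misattributes the $\sqrt{n}$ factor. For $K_n$ a single application of $U(P,x)$ costs $\widetilde{O}(1)$, not $\widetilde{O}(\sqrt{n})$, and so does preparing $\ket{\psi_{\mathrm{init}}}$ (see \cref{cor:complete-graph} and \cref{lem:cayley-init}). The $\sqrt{n}$ arises because $\ket{\psi_{\mathrm{init}}}$ carries only $\Theta(1/\sqrt{n})$ amplitude on the component corresponding to the smallest phase (a single $\ket{\psi_i}$-like direction out of $n-1$), so the amplitude estimation that decides each one-sided test --- distinguishing amplitude at least $1/\sqrt{n}$ from at most $1/\sqrt{2n}$ via \cref{cor:amplitude-estimation} --- requires $O(\sqrt{n})$ repetitions of the phase-estimation subroutine. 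You do invoke amplitude estimation in the description of the detector, but in the accounting the $\sqrt{n}$ silently migrates into the claimed per-application cost of $U$.

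Second, your stage 2 cannot be a plain phase estimation from which you read off $\tilde\theta$ and return $f^{-1}(\tilde\theta)$. The phases of $U(P,x)$ witnessed by $\ket{\psi_{\mathrm{init}}}$ are all the $f(\lambda_i)$ with comparable weight $\Theta(1/n)$; measuring the phase register samples from that distribution, so with probability $1-O(1/n)$ you see some $\theta \neq \theta_\star$. Even after amplitude-amplifying onto phases below the threshold found in stage 1, you are left with a superposition over possibly several phases $f(\lambda_i)$, $i\geq 2$, inside a constant-factor window around $f(\lambda_2)$, and these need not lie within multiplicative $\varepsilon$ of $f(\lambda_2)$. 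You correctly named this as the principal obstruction for a naive minimum-finding routine, but the second stage as written commits the same error. The paper's remedy is to never read off a phase: it runs the one-sided detector to arbitrarily fine precision, shrinking the interval $[c,C]$ containing $\Delta(U(P,x))/\pi$ by a constant factor each round until $C-c\leq 2\varepsilon c$, and finally outputs $n\sin^2\bigl(\pi(C+c)/4\bigr)$. Replacing your stage 2 with a continuation of your stage 1 search, refining thresholds until the bracketing interval has relative width $\varepsilon$, recovers a correct algorithm with the claimed complexity.
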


\vskip10pt
Let $P_{G}=(H,U,A,\tau)$ be the span program for $st$-connectivity defined in
\cref{eq:st-conn-span-program}. Note that only $\tau$ depends on $s$ and $t$,
and we will not be interested in $\tau$ here. We let $A(x)=A\Pi_{H(x)}$.
\begin{eqnarray}
A(x)A(x)^T &=& \sum_{(u,v,\ell)\in \overrightarrow{E}(G(x))}\sqrt{c(u,v,\ell)}(\ket{u}-\ket{v})\bra{u,v,\ell} \sum_{(u,v,\ell)\in \overrightarrow{E}(G(x))}\sqrt{c(u,v,\ell)}\ket{u,v,\ell}(\bra{u}-\bra{v})\nonumber\\
&=& \sum_{(u,v,\ell)\in\overrightarrow{E}(G(x))}c(u,v,\ell)(\ket{u}\bra{u}-\ket{u}\bra{v}-\ket{v}\bra{u}+\ket{v}\bra{v})\nonumber\\
&=& \sum_{u\in [n]}2d_G(u)\ket{u}\bra{u} - 2 {\cal A}_{G(x)} = 2({\cal D}_{G(x)}-{\cal A}_{G(x)}) = 2L_{G(x)}.\label{eq:AL}
\end{eqnarray}
Above, $L_{G(x)}$ is the Laplacian of $G(x)$ (see \cref{sec:graphPrelim}). By
a similar computation to the one above, we have $AA^T=2L_G$, where $G$ is the
parent graph, upon which $A$ depends. Recall that for any $G$, the eigenvalues
of $L_G$ lie in $[0,d_{\max}]$, with
$\ket{\mu}=\frac{1}{\sqrt{n}}\sum_v\ket{v}$ as a 0-eigenvalue. In our case,
since $G$ is assumed to be connected, $\ket{\mu}$ is the only 0-eigenvector of
$L_G$, so $\mathrm{row}(L_G)$ is the orthogonal complement of $\ket{\mu}$. For
any $x$, $G(x)$ also has $\ket{\mu}$ as a 0-eigenvalue, but if $G(x)$ is
connected, this is the only 0-eigenvalue. In general, the dimension of the
0-eigenspace of $L_{G(x)}$ is the number of components of $G(x)$. Thus,
Eq.~\eqref{eq:AL} implies the following.
\begin{itemize}
\item The multiset of nonzero eigenvalues of $L_G$ are exactly half of the squared singular values of $A$, and in particular, since no eigenvalue of $L_G$ can be larger than the maximum degree of $G$, $\sigma_{\max}(A)\leq \sqrt{2d_{\max}(G)}$.
\item The multiset of nonzero eigenvalues of $L_{G(x)}$ are exactly half the
squared singular values of $A(x)$, and in particular, if $G(x)$ is connected,
then $\sigma_{\min}(A(x))=\sqrt{2\lambda_2(G(x))}$, where $\lambda_2(G(x))$ is
the second smallest eigenvalue of $L_{G(x)}$, which is non-zero if and only if
$G(x)$ is connected.
\item The support of $L_G$ is $\mathrm{col}(A)$, which is the orthogonal
subspace of the uniform vector $\ket{\mu}=\frac{1}{\sqrt{n}}\sum_v\ket{v}$.
\end{itemize}

For a particular span program $P$, and input $x$, an associated unitary
$U(P,x)=(2\Pi_{\ker A}-I)(2\Pi_{H(x)}-I)$ can be used to construct quantum
algorithms, for example, for deciding the span program. Then by \cite[Theorem
3.10]{IJ15}, which states that $\Delta(U(P,x))\geq
2\sigma_{\min}(A(x))/\sigma_{\max}(A)$, we have the following.

\begin{lemma}\label{lem:Delta-bound}
Let $P_G$ be the $st$-connectivity span program from \cref{eq:st-conn-span-program}. Then $\Delta(U(P,x))\geq 2\sqrt{\lambda_2(G(x))/d_{\max}(G)}$.
\end{lemma}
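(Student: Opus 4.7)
The plan is to directly combine the spectral relations established in the three bullet points immediately preceding the lemma statement with the general phase-gap bound from \cite[Theorem 3.10]{IJ15}. That reference gives, for any span program $P$ on input $x$, the inequality $\Delta(U(P,x)) \geq 2\sigma_{\min}(A(x))/\sigma_{\max}(A)$, so it suffices to bound the numerator from below and the denominator from above for the specific operator $A$ appearing in $P_G$.

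First, I would invoke the identity $A(x)A(x)^T = 2L_{G(x)}$ derived in Eq.~\eqref{eq:AL}, together with its counterpart $AA^T = 2L_G$. These identities identify the squared singular values of $A$ and $A(x)$ with twice the nonzero eigenvalues of the respective Laplacians. For the denominator, since every eigenvalue of $L_G$ is at most $d_{\max}(G)$, we get $\sigma_{\max}(A) \leq \sqrt{2d_{\max}(G)}$. For the numerator, using the assumption that $G(x)$ is connected (if it is not, the claim is vacuous because $\lambda_2(G(x))=0$), the second-smallest eigenvalue of $L_{G(x)}$ is the smallest nonzero one, giving $\sigma_{\min}(A(x)) = \sqrt{2\lambda_2(G(x))}$.

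Substituting these two bounds into the phase-gap inequality yields
\begin{equation*}
\Delta(U(P,x)) \;\geq\; \frac{2\sigma_{\min}(A(x))}{\sigma_{\max}(A)} \;\geq\; \frac{2\sqrt{2\lambda_2(G(x))}}{\sqrt{2d_{\max}(G)}} \;=\; 2\sqrt{\frac{\lambda_2(G(x))}{d_{\max}(G)}},
\end{equation*}
which is the claimed bound. There is no real obstacle here: the entire content of the lemma is packaged in translating the general singular-value phase-gap lemma of \cite{IJ15} through the Laplacian identifications already established, so the proof is essentially a two-line calculation.
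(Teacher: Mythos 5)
Your proof is correct and follows exactly the route the paper takes: it invokes $\Delta(U(P,x))\geq 2\sigma_{\min}(A(x))/\sigma_{\max}(A)$ from \cite[Theorem 3.10]{IJ15} and substitutes the singular-value identifications $\sigma_{\max}(A)\leq\sqrt{2d_{\max}(G)}$ and $\sigma_{\min}(A(x))=\sqrt{2\lambda_2(G(x))}$ established via Eq.~\eqref{eq:AL}. Your explicit remark that the bound is vacuous when $G(x)$ is disconnected (since $\lambda_2(G(x))=0$) is a small clarification the paper leaves implicit.
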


Our algorithm will be based on the following connection between the
connectivity of $G(x)$ and the presence of a 0-phase eigenvector of $U(P,x)$
in $\mathrm{row}(A)$.
\begin{lemma}\label{lem:kappa}
$G(x)$ is not connected if and only if there exists $\ket{\psi}\in\mathrm{row}(A)$ that is fixed by $U(P,x)$. Moreover, if $G(x)$ has $\kappa>1$ components, there exists a $(\kappa-1)$-dimensional subspace of $\mathrm{row}(A)$ that is fixed by $U(P,x)$. 
\end{lemma}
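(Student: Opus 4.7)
The plan is to exploit the fact that $U(P,x)$ is a product of two reflections, $R_1 = 2\Pi_{\ker A}-I$ and $R_2 = 2\Pi_{H(x)}-I$, and use the standard fact that the $1$-eigenspace of such a product decomposes orthogonally as
\[
(\ker A \cap H(x)) \;\oplus\; (\mathrm{row}(A) \cap H(x)^\perp),
\]
since $\mathrm{row}(A)=(\ker A)^\perp$. Intersecting with $\mathrm{row}(A)$ and using $\mathrm{row}(A)\perp\ker A$, the set of fixed points of $U(P,x)$ inside $\mathrm{row}(A)$ is exactly the subspace $\mathrm{row}(A)\cap H(x)^\perp$. So the whole lemma will reduce to computing the dimension of this intersection.

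Next, I would translate $\mathrm{row}(A)\cap H(x)^\perp$ into a graph-theoretic object using $A(x)=A\Pi_{H(x)}$, so that $A(x)^T=\Pi_{H(x)}A^T$. Any $\ket{\psi}\in\mathrm{row}(A)=\mathrm{col}(A^T)$ can be written as $\ket{\psi}=A^T\ket{v}$ for some $\ket{v}$, and the constraint $\Pi_{H(x)}\ket{\psi}=0$ is the same as $A(x)^T\ket{v}=0$, i.e.\ $\ket{v}\in\ker A(x)^T$. Therefore
\[
\mathrm{row}(A)\cap H(x)^\perp \;=\; A^T\bigl(\ker A(x)^T\bigr),
\]
and the dimension of the right-hand side equals $\dim\ker A(x)^T - \dim(\ker A(x)^T\cap\ker A^T)$ because $A^T$ is injective on $(\ker A^T)^\perp$.

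Now I can plug in the identifications already established just above the lemma: $A(x)A(x)^T=2L_{G(x)}$ and $AA^T=2L_G$, so $\ker A(x)^T=\ker L_{G(x)}$ has dimension equal to the number $\kappa$ of connected components of $G(x)$, and $\ker A^T=\ker L_G=\mathrm{span}\{\ket{\mu}\}$ since the parent graph $G$ is connected. The uniform vector $\ket{\mu}$ lies in $\ker L_{G(x)}$ as well (it is a $0$-eigenvector of any Laplacian), so $\ker A(x)^T\cap\ker A^T=\mathrm{span}\{\ket{\mu}\}$ is one-dimensional. Consequently $\dim\bigl(\mathrm{row}(A)\cap H(x)^\perp\bigr)=\kappa-1$, which gives the dimension statement and the biconditional at once: the intersection is nontrivial iff $\kappa>1$, iff $G(x)$ is disconnected.

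The only subtle point is the opening step, the decomposition of the fixed space of $R_1R_2$; the rest is essentially linear-algebra bookkeeping using the already-proven identity $A(x)A(x)^T=2L_{G(x)}$. I would dispatch that step by noting that $U\ket{\psi}=\ket{\psi}$ is equivalent to $R_2\ket{\psi}=R_1\ket{\psi}$, i.e.\ $\Pi_{H(x)}\ket{\psi}=\Pi_{\ker A}\ket{\psi}$, and then decomposing $\ket{\psi}$ with respect to the four mutual intersections of $\{\ker A,\mathrm{row}(A)\}$ with $\{H(x),H(x)^\perp\}$ to rule out any contribution from the ``cross'' components. This is the main (and only) obstacle, and it is quite standard.
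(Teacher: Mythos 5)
Your proof is correct, and it takes a genuinely different route from the paper's. The paper works entirely on the edge side of $A$: it shows by a rank count that $\ker A(x)\cap\mathrm{row}(A)$ has dimension $\kappa-1$, then verifies directly that any such vector must lie in $H(x)^\perp$ and is therefore fixed by $U(P,x)$, and disposes of the connected case by a separate contradiction argument. You instead invoke the structure theorem for the $1$-eigenspace of a product of two reflections (which the paper only states later, as \cref{thm:sze}) to pin down the fixed subspace inside $\mathrm{row}(A)$ as exactly $\mathrm{row}(A)\cap H(x)^\perp$, and then push this subspace to the vertex side via $A^T$, identifying it with $A^T(\ker A(x)^T)=A^T(\ker L_{G(x)})$. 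The dimension count then falls out immediately from $\dim\ker L_{G(x)}=\kappa$ and $\ker A^T=\mathrm{span}\{\ket{\mu}\}\subseteq\ker L_{G(x)}$. What your route buys: you get the dimension exactly equal to $\kappa-1$ (the paper only needs and only argues the lower bound), and both directions of the biconditional are handled in one stroke. It also avoids the paper's intermediate assertion that $\ker A\subseteq\ker A(x)$, which is not literally true --- the restriction of a cycle vector of $G$ to the edges of $G(x)$ need not again lie in $\ker A$ --- even though the dimension conclusion the paper draws from it is correct by other means. The one step worth spelling out carefully if you write this up is the equality $\mathrm{row}(A)\cap H(x)^\perp = A^T\bigl(\ker A(x)^T\bigr)$: use $A(x)^T=\Pi_{H(x)}A^T$ for one inclusion and the fact that $A^T$ surjects onto $\mathrm{row}(A)$ for the other, and justify the dimension formula for the image by restricting $A^T$ to $\ker A(x)^T$ and applying rank--nullity.
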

\begin{proof}

If $G(x)$ is not connected, and in particular, $G(x)$ has $\kappa >1$ components, then 
\begin{equation}\label{rankA}
\mathrm{rank}(A(x))=\mathrm{rank}(A(x)A(x)^T)=\mathrm{rank}(L_{G(x)})=n-\kappa < n-1 = \mathrm{rank}(A).
\end{equation}
Eq.~\eqref{rankA} implies $\ker A\subseteq \ker A(x)$, which means $\ker A(x)\cap \mathrm{row}(A)$
has dimension $\kappa-1 \geq 1$. Let $\ket{\psi}\in \ker A(x)\cap
\mathrm{row}(A)$. Since $\ket{\psi}\in \ker A(x)$, $A\Pi_{H(x)}\ket{\psi}=0$,
so $\Pi_{H(x)}\ket{\psi}\in \ker A$. Since $\ket{\psi}\in\mathrm{row}(A)$, 
$\ket{\psi}=\Pi_{\mathrm{row}(A)}\ket{\psi}=\Pi_{\mathrm{row}(A)}\Pi_{H(x)^\bot}\ket{\psi}$, so $\ket{\psi}\in H(x)^\bot$. Thus since $\ket{\psi}\in
H(x)^\bot\cap\mathrm{row}(A)$, it follows that $U(P,x)\ket{\psi}=\ket{\psi}$.

On the other hand, if $G(x)$ is connected, then both $A(x)$ and $A$ have rank
$n-1$, so $\ker A=\ker A(x)$ because $\ker A\subseteq \ker A(x)$. If
$\ket{\psi}\in\mathrm{row}(A)$ is fixed by $U(P,x)$, then $\ket{\psi}\in
H(x)^\bot$, meaning $A(x)\ket{\psi}=0$, hence $\ket{\psi}\in \ker A(x)=\ker A$,
which is a contradiction.
\end{proof}

Thus, to determine if $G(x)$ is connected, it is sufficient to detect the presence of \emph{any} 0-phase eigenvector of $U(P,x)$ on $\mathrm{row}(A)$. 
Let $\{\ket{\psi_i}\}_{i=1}^{n-1}$ be any basis for $\mathrm{row}(A)$, not
necessarily orthogonal, and suppose we have access to an operation that
generates
$$\ket{\psi_{\mathrm{init}}}=\sum_{i=1}^{{n-1}}\ket{i}\ket{\psi_i}.$$ Such a
basis is independent of the input, so we can certainly perform such a map with
0 queries. We will later discuss cases in which we can implement such a map
time efficiently.

\begin{algorithm}\label{alg:alg1}
Assume there is a known constant $\lambda$ such that if $G(x)$ is connected, then $\lambda_2(G(x))\geq \lambda$. Let $\{\ket{\psi_i}\}_i$ be some states that span the rowspace of $A$, whose choice determines the cost of the amplitude estimation step.
\begin{enumerate}
\item Prepare $\ket{\psi_{\mathrm{init}}}=\sum_{i=1}^{n-1}\frac{1}{\sqrt{n-1}}\ket{i}\ket{\psi_i}$.
\item Perform the phase estimation of $U(P,x)$ (see \cref{thm:phase-estimation}) on the second register, to precision $\sqrt{\lambda/d_{\max}(G)}$, and accuracy $\epsilon$. 
\item Use amplitude estimation (see \cref{thm:amplitude-estimation}) to determine if the amplitude on $\ket{0}$ in the phase register is $0$, in which case, output ``connected'', or $>0$, in which case, output ``not connected.''
\end{enumerate}
\end{algorithm}

The algorithm proceeds by first
preparing the initial state $\ket{\psi_{\mathrm{init}}}=\sum_{i=1}^{n-1}\frac{
1}{\sqrt{n-1}}\ket{i}\ket{\psi_i}$. Next, the algorithm performs phase
estimation on the second register, as described in 
\cref{thm:phase-estimation}, with precision $\sqrt{\lambda/d_{\max}(G)}$ and accuracy
$\epsilon$. First, suppose that there are $\kappa-1>0$ orthonormal 0-phase
eigenvectors of $U(P,x)$ in $\mathrm{row}(A)$, and let $\Pi$ be the
orthonormal projector onto their span. By \cref{thm:phase-estimation}, for
each $i$, the phase estimation step will map
$\ket{i}\left(\Pi\ket{\psi_i}\right)$ to
$\ket{i}\ket{0}\left(\Pi\ket{\psi_i}\right)$. Thus, the squared amplitude on
$\ket{0}$ in the phase register will be at least:
\begin{equation}
\varepsilon:=\frac{\norm{(I\otimes \Pi)\ket{\psi_{\mathrm{init}}}}^2}{\norm{\ket{\psi_{\mathrm{init}}}}^2} = \frac{1}{\norm{\ket{\psi_{\mathrm{init}}}}^2}\sum_{i=1}^{n-1}\norm{\Pi\ket{\psi_i}}^2>0,\label{eq:eps}
\end{equation}
since the $\ket{\psi_i}$ span $\mathrm{row}(A)$. 

On the other hand, suppose $G(x)$ is connected, so there is no 0-phase
eigenvector in $\mathrm{row}(A)$. Then all phases will be at least
$\Delta(U(P,x))\geq \sqrt{\lambda/d_{\max}(G)}$, by \cref{lem:Delta-bound}, so the phase register will
have squared overlap at most $\epsilon$ with $\ket{0}$.

Setting $\epsilon=\varepsilon/2$, we just need to distinguish between an
amplitude of $\geq \varepsilon$ and an amplitude of $\leq \varepsilon/2$ on
$\ket{0}$. Using \cref{cor:amplitude-estimation}, we can distinguish these two
cases in $\frac{1}{\sqrt{\varepsilon}}$ calls to steps 1 and 2. By 
\cref{thm:phase-estimation}, Step 2 can be implemented using
$\sqrt{\frac{d_{\max}(G)}{\lambda}}\log\frac{1}{\varepsilon}$ calls to
$U(P,x)$. By \cite[Theorem 13]{JK2017}, if $\mathsf{U}$ is the cost of
implementing, for any $u\in V$, the map
\begin{equation}
\ket{u,0}\mapsto \sum_{v,\ell: (u,v,\ell)\in\overrightarrow{E}(G)}\sqrt{c(u,v,\ell)/d_G(u)}\ket{u,v,\ell},\label{eq:update}
\end{equation}
which corresponds to one step of a quantum walk on $G$, then $U(P,x)$ can be implemented in time $O(\mathsf{U})$. We thus get the following:
\begin{theorem}\label{thm:alg-template}
Fix $\lambda>0$. Let $\mathsf{Init}$ denote the cost of generating the initial state $\ket{\psi_{\mathrm{init}}}$, and $\mathsf{U}$ the cost of the quantum walk step in \cref{eq:update}. Let $\varepsilon$ be as in \cref{eq:eps}. Then for any family of connected graphs $G$ and $X\subseteq \{0,1\}^{E(G)}$ such that for all $x\in X$, either $\lambda_2(G(x))\geq \lambda$ or $G(x)$ is not connected, $\textsc{conn}_{G,X}$ can be decided by a quantum algorithm with cost $O\left(\frac{1}{\sqrt{\varepsilon}}\left(\mathsf{Init}+\sqrt{\frac{d_{\max}(G)}{\lambda}}\mathsf{U}\log \frac{1}{\varepsilon}\right)\right)$.
\end{theorem}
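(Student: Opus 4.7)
The plan is to assemble the cost bound from the three building blocks already described immediately before the theorem statement, making precise why the two cases (connected vs.\ disconnected) produce distinguishable amplitudes on $\ket{0}$ in the phase register.

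First I would formalize the case analysis of the phase estimation step applied to $\ket{\psi_{\mathrm{init}}}=\sum_{i=1}^{n-1}\frac{1}{\sqrt{n-1}}\ket{i}\ket{\psi_i}$. If $G(x)$ is disconnected, \cref{lem:kappa} gives a $(\kappa-1)$-dimensional subspace of $\mathrm{row}(A)$ fixed by $U(P,x)$; letting $\Pi$ denote the projector onto this subspace, \cref{thm:phase-estimation} ensures that each component $\ket{i}(\Pi\ket{\psi_i})$ is sent to $\ket{i}\ket{0}(\Pi\ket{\psi_i})$ exactly, so the squared amplitude on $\ket{0}$ in the phase register is at least the quantity $\varepsilon$ of \cref{eq:eps}. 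If instead $G(x)$ is connected, \cref{lem:kappa} says $U(P,x)$ has no $0$-phase eigenvector in $\mathrm{row}(A)$, and \cref{lem:Delta-bound} lower-bounds its phase gap by $2\sqrt{\lambda_2(G(x))/d_{\max}(G)}\geq 2\sqrt{\lambda/d_{\max}(G)}$; running phase estimation at precision $\sqrt{\lambda/d_{\max}(G)}$ and accuracy $\epsilon$ therefore leaves squared overlap at most $\epsilon$ with $\ket{0}$ in the phase register.

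Setting $\epsilon=\varepsilon/2$, the two cases are separated by a constant multiplicative gap in the squared amplitude of the flag $\ket{0}$, so \cref{cor:amplitude-estimation} decides which case holds using $O(1/\sqrt{\varepsilon})$ invocations of the combined ``prepare then phase-estimate'' subroutine. Each such invocation costs $\mathsf{Init}$ for preparing $\ket{\psi_{\mathrm{init}}}$ plus, by \cref{thm:phase-estimation}, $O(\sqrt{d_{\max}(G)/\lambda}\log(1/\varepsilon))$ applications of $U(P,x)$; and by the walk construction of \cite{JK2017} each $U(P,x)$ call has cost $O(\mathsf{U})$. Multiplying these costs through yields the claimed bound.

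The only potential subtlety is the lower bound on the amplitude $\varepsilon$: one must argue that the $\{\Pi\ket{\psi_i}\}_i$ are not all zero, which follows because the $\ket{\psi_i}$ span $\mathrm{row}(A)$ while $\Pi$ projects onto a nonzero subspace of $\mathrm{row}(A)$, so at least one $\Pi\ket{\psi_i}$ is nonzero and thus $\varepsilon>0$. Beyond this, everything is bookkeeping of oracle calls and elementary gates, so I do not anticipate any technical obstacle; the nontrivial content is packaged in \cref{lem:kappa}, \cref{lem:Delta-bound}, and the cited implementation of $U(P,x)$ via a quantum walk.
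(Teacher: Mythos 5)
Your proposal reproduces the paper's argument: the same case analysis via \cref{lem:kappa} and \cref{lem:Delta-bound}, the same choice $\epsilon=\varepsilon/2$ to create a constant-factor gap, \cref{cor:amplitude-estimation} to distinguish the two cases in $O(1/\sqrt{\varepsilon})$ rounds, and the same accounting of $\mathsf{Init}$, $O(\sqrt{d_{\max}(G)/\lambda}\log(1/\varepsilon))$ calls to $U(P,x)$, and $O(\mathsf{U})$ per call via \cite{JK2017}. This matches the paper's proof essentially step for step, including the observation that $\varepsilon>0$ because the $\ket{\psi_i}$ span $\mathrm{row}(A)$.
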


In \cref{sec:any-G} and \cref{sec:Cayley-graph}, we will discuss particular implementations of this algorithm, but if we only care about query complexity, we already have the following.

\query*
\begin{proof}
First, observe that $U(P,x)$ can be implemented with 2 queries. 

Next, let $\{\ket{\psi_i}\}_{i=1}^{n-1}$ be any orthonormal basis for $\mathrm{row}(A)$. Then in $\mathsf{Init}=0$ queries, we can generate the state 
$$\ket{\psi_{\mathrm{init}}}=\frac{1}{\sqrt{n-1}}\sum_{i=1}^{n-1}\ket{i}\ket{\psi_i}.$$
Then if there are $\kappa-1$ orthonormal 0-phase vectors of $U(P,x)$ in $\mathrm{row}(A)$, $\ket{\phi_1},\dots,\ket{\phi_{\kappa-1}}$, setting $\Pi=\sum_{j=1}^{\kappa-1}\ket{\phi_j}\bra{\phi_j}$, we have 
$$\varepsilon = \norm{(I\otimes \Pi)\ket{\psi_{\mathrm{init}}}}^2 = \frac{1}{n-1}\sum_{j=1}^{\kappa-1}\sum_{i=1}^{n-1}|\braket{\phi_j}{\psi_i}|^2=\frac{\kappa-1}{n-1}.$$
Then the result follows from \cref{thm:alg-template}.
\end{proof}

\subsection{A Connectivity Algorithm for any $G$}\label{sec:any-G}

Let $\ket{\mu}=\frac{1}{\sqrt{n}}\sum_{u\in V(G)}\ket{u}$ and for
$j\in\{1,\dots,n-1\}$ let $\ket{\hat{j}}=\frac{1}{\sqrt{n}}\sum_{u\in
[n]}e^{2\pi iju/n}\ket{u}$. Then it is easily checked that
$\ket{\hat{1}},\dots,\ket{\widehat{n-1}}$ are an orthonormal basis for the
columnspace of $A$ for a connected graph $G$. Let $\ket{\psi_j} =
A^T\ket{\hat{j}}$. Then $\{\ket{\psi_1},\dots,\ket{\psi_{n-1}}\}$ is a basis
for $\mathrm{row}(A)$, but it is not necessarily orthogonal unless the
$\ket{\hat{j}}$ form an eigenbasis of $L_G$, (as in the case of Cayley graphs,
discussed in \cref{sec:Cayley-graph}), and in general, the $\ket{\psi_j}$ are
not normalized. We have
\begin{eqnarray}
\ket{\psi_j}=A^T\ket{\hat{j}}&=&\sum_{(u,v,\ell)\in\overrightarrow{E}(G)}\sqrt{c(u,v,\ell)}\ket{u,v,\ell}(\braket{u}{\hat{j}} - \braket{v}{\hat{j}})\nonumber\\
& =& \sum_{(u,v,\ell)\in\overrightarrow{E}(G)}\sqrt{c(u,v,\ell)}\left(\frac{1}{\sqrt{n}}e^{2\pi i ju/n} - \frac{1}{\sqrt{n}}e^{2\pi i jv/n}\right)\ket{u,v,\ell},
\label{eq:psi-j}
\end{eqnarray}
from which we can compute
\begin{eqnarray}
\norm{\ket{\psi_j}}^2 = \norm{A^T\ket{\hat{j}}}^2 &=& \frac{1}{n}\sum_{(u,v,\ell)\in\overrightarrow{E}(G)}c(u,v,\ell)\left|e^{2\pi iju/n} - e^{2\pi ijv/n} \right|^2\nonumber\\
&=& \frac{1}{n}\sum_{(u,v,\ell)\in\overrightarrow{E}(G)}c(u,v,\ell)2\left(1-\cos\frac{2\pi j(v-u)}{n}\right).\label{eq:norm-psi-j}
\end{eqnarray}

For any graph, we can use as initial state a normalization of $\sum_{j=1}^{n-1}\ket{j}\ket{\psi_j}$. From Eq.~\eqref{eq:norm-psi-j}, we have, using Lagrange's identity:
\begin{eqnarray}
\sum_{j=1}^{n-1}\norm{\ket{\psi_j}}^2 & = & \frac{2}{n}\sum_{(u,v,\ell)\in\overrightarrow{E}(G)}c(u,v,\ell)\sum_{j=1}^{n-1}\left(1-\cos\frac{2\pi j(v-u)}{n}\right)\nonumber\\
&=& \frac{2}{n}\sum_{(u,v,\ell)\in\overrightarrow{E}(G)}c(u,v,\ell)\left(n - 1 - \left(\frac{1}{2}\frac{\sin\left((n-1/2)\frac{2\pi(v-u)}{n}\right)}{\sin\left(\frac{\pi(v-u)}{n}\right)}-1\right)\right)\nonumber\\
&=& \frac{2}{n}\sum_{(u,v,\ell)\in\overrightarrow{E}(G)}c(u,v,\ell)\left(n - \frac{\sin\left(-\frac{\pi(v-u)}{n}\right)}{2\sin\left(\frac{\pi(v-u)}{n}\right)}\right)\nonumber\\
&=& \frac{2}{n}\left(n+\frac{1}{2}\right)\sum_{u\in V}\sum_{v,\ell:(u,v,\ell)\in\overrightarrow{E}(G)}c(u,v,\ell)\nonumber\\
& =& 2\left(1+\frac{1}{2n}\right)\sum_{u\in V}d_G(u)=2\left(1+\frac{1}{2n}\right)nd_{\mathrm{avg}}(G),\label{eq:psi-init-norm}
\end{eqnarray}
where $d_{\mathrm{avg}}=d_{\mathrm{avg}}(G)$ is the average weighted degree in $G$. 

Define the initial state as the unit vector:
\begin{equation}
\ket{\psi_{\mathrm{init}}}=\frac{\sum_{j=1}^{n-1}\ket{j}\ket{\psi_j}}{\sqrt{2(1+\frac{1}{2n})nd_{\mathrm{avg}}}}.\label{eq:psi-init}
\end{equation}
Then we can lower bound the overlap with the 0-phase space of $U(P,x)$ in the case where $G(x)$ is not connected, as follows:
\begin{lemma}\label{lem:eps}
Let $\ket{\psi_{\mathrm{init}}}$ be as in \cref{eq:psi-init}, and suppose $G(x)$ has $\kappa > 1$ connected components. Let $\Pi$ be the projector onto a $(\kappa-1)$-dimensional subspace of $\mathrm{row}(A)$ that is in the 0-phase space of $U(P,x)$. 
Then 
$$\norm{(I\otimes \Pi)\ket{\psi_{\mathrm{init}}}}^2\geq \frac{(\kappa-1)\lambda_2(G)}{(1+\frac{1}{2n})nd_{\mathrm{avg}}}.$$
\end{lemma}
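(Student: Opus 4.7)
The plan is to expand the squared norm using the definition of $\ket{\psi_{\mathrm{init}}}$, pick an orthonormal basis $\{\ket{\phi_k}\}_{k=1}^{\kappa-1}$ for the image of $\Pi$, and then recognize that the inner sum over $j$ of $|\bra{\phi_k}A^T\ket{\hat j}|^2$ is essentially Parseval's identity in the Fourier basis, which will collapse to $\norm{A\ket{\phi_k}}^2$. From there, the bound $\sigma_{\min}(A)^2 = 2\lambda_2(G)$ (from the earlier correspondence between nonzero singular values of $A$ and nonzero Laplacian eigenvalues) will finish the job.

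First, using \cref{eq:psi-init} and the fact that $\ket{\psi_j} = A^T\ket{\hat j}$, I would write
\begin{equation*}
\norm{(I\otimes \Pi)\ket{\psi_{\mathrm{init}}}}^2 = \frac{1}{2(1+\frac{1}{2n})nd_{\mathrm{avg}}}\sum_{j=1}^{n-1}\norm{\Pi A^T\ket{\hat j}}^2 = \frac{1}{2(1+\frac{1}{2n})nd_{\mathrm{avg}}}\sum_{k=1}^{\kappa-1}\sum_{j=1}^{n-1}\bigl|\bra{\phi_k}A^T\ket{\hat j}\bigr|^2,
\end{equation*}
after swapping the order of summation and expanding $\Pi = \sum_k \ket{\phi_k}\bra{\phi_k}$.

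Second, since $A$ is real, $A^T = A^\dagger$, so $\bra{\phi_k}A^T\ket{\hat j} = \bra{A\phi_k}\hat j\rangle$. The set $\{\ket{\hat j}\}_{j=0}^{n-1}$ is the orthonormal Fourier basis of $\mathbb{C}^n$, and $\ket{\hat 0} = \ket{\mu}$ lies in $\ker(A^T) = \mathrm{col}(A)^\bot$, so $\bra{A\phi_k}\hat 0\rangle = 0$. By Parseval, $\sum_{j=1}^{n-1}|\bra{A\phi_k}\hat j\rangle|^2 = \sum_{j=0}^{n-1}|\bra{A\phi_k}\hat j\rangle|^2 = \norm{A\ket{\phi_k}}^2$.

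Third, each $\ket{\phi_k} \in \mathrm{row}(A)$ is a unit vector orthogonal to $\ker(A)$, so $\norm{A\ket{\phi_k}}^2 \geq \sigma_{\min}(A)^2 = 2\lambda_2(G)$, by the singular value/Laplacian correspondence stated just after \cref{eq:AL}. Summing over the $\kappa-1$ basis vectors gives $\sum_k \norm{A\ket{\phi_k}}^2 \geq 2(\kappa-1)\lambda_2(G)$, and plugging this back into the displayed equation yields the claimed bound. I do not anticipate a serious obstacle here; the only subtle point is verifying that $\bra{A\phi_k}\hat 0\rangle = 0$ so that restricting the $j$-sum to $\{1,\dots,n-1\}$ loses nothing, but this is immediate from $\ket{\mu}\in\ker A^T$.
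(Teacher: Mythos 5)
Your proposal is correct and follows essentially the same route as the paper: expand $\Pi$ in an orthonormal basis, recognize $\sum_{j=1}^{n-1}|\bra{\phi_k}A^T\ket{\hat j}|^2 = \norm{A\ket{\phi_k}}^2$, and bound by $\sigma_{\min}(A)^2 = 2\lambda_2(G)$. The paper states the middle identity without comment, whereas you justify it explicitly via Parseval together with $\ket{\hat 0}=\ket{\mu}\in\ker A^T$; this is a worthwhile clarification but not a different argument.
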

\begin{proof}
By \cref{lem:kappa}, there exist orthonormal vectors $\ket{\phi_1},\dots,\ket{\phi_{\kappa-1}}\in \mathrm{row}(A)$ that are fixed by $U(P,x)$. Let $\Pi=\sum_{i=1}^{\kappa-1}\ket{\phi_i}\bra{\phi_i}$. 
We have:
\begin{equation}
\norm{(I\otimes \Pi)\ket{\psi_{\mathrm{init}}}}^2 
= \frac{\sum_{i=1}^{\kappa-1}\sum_{j=1}^{n-1}|\braket{\phi_i}{\psi_j}|^2}{2(1+\frac{1}{2n})nd_{\mathrm{avg}}}
= \frac{\sum_{i=1}^{\kappa-1}\sum_{j=1}^{n-1}|\bra{\phi_i}A^T\ket{\hat{j}}|^2}{2(1+\frac{1}{2n})nd_{\mathrm{avg}}}\nonumber\\
= \frac{\sum_{i=1}^{\kappa-1}\norm{A\ket{\phi_i}}^2}{2(1+\frac{1}{2n})nd_{\mathrm{avg}}}.
\end{equation}
Since $\ket{\phi_i}\in \mathrm{row}(A)$ for each $i$, $\norm{A\ket{\phi_i}}^2\geq \sigma_{\min}(A)^2$. Thus 
\begin{equation}
\norm{(I\otimes \Pi)\ket{\psi_{\mathrm{init}}}}^2 
\geq \frac{(\kappa-1)\sigma_{\min}(A)^2}{2(1+\frac{1}{2n})nd_{\mathrm{avg}}}
=\frac{(\kappa-1)2\lambda_2(G)}{2(1+\frac{1}{2n})nd_{\mathrm{avg}}}.
\end{equation}
Here we used the fact that $G$ is assumed to be connected, the second smallest eigenvalue of $2L_G$ is the smallest non-zero eigenvalue, so $\lambda_2(2L_G)=\lambda_2(AA^T)=\sigma_{\min}(A)^2$.
\end{proof}

We remark that although this initial state lends itself well to analysis, it might be a particularly bad choice of an initial state, because it ensures lower weight on lower eigenvalue eigenstates of $G$, which may have high overlap with the $0$-eigenvalue eigenstates of $G(x)$. 

We next describe how we can construct the initial state. 

\begin{lemma}\label{lem:psi-init}
Let $\mathsf{S}$ be the cost of generating the stationary state of the graph $G$ 
$$\sum_{u\in V}\sqrt{\frac{d_G(u)}{d_{\mathrm{avg}}n}}\ket{u}.$$
Let $\mathsf{U}$ be the cost of implementing a step of the quantum walk on $G$, that is, generating, for any $u\in V$, a state of the form 
$$\sum_{v,\ell:(u,v,\ell)\in\overrightarrow{E}(G)}\sqrt{c(u,v,\ell)/d_G(u)}\ket{v,\ell}.$$
Then the map $\ket{0}\mapsto \ket{\psi_{\mathrm{init}}}$ can be implemented with error probability at most $\eps$ in time complexity $O((\mathsf{S}+\mathsf{U}+\log n)\log\frac{1}{\eps})$. 
\end{lemma}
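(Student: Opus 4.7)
The plan is to prepare $\ket{\psi_{\mathrm{init}}}$ via a ``swap--subtract'' trick followed by amplitude amplification. The key algebraic observation is the following. Define the unit vectors
$$
|\Phi_1\rangle := \frac{1}{\sqrt{n\cdot nd_{\mathrm{avg}}}}\sum_{j=0}^{n-1}\ket{j}\sum_{(u,v,\ell)\in\overrightarrow{E}(G)}\sqrt{c(u,v,\ell)}\,e^{2\pi iju/n}\ket{u,v,\ell},\qquad |\Phi_2\rangle := W|\Phi_1\rangle,
$$
where $W:\ket{u,v,\ell}\mapsto\ket{v,u,\ell}$ swaps the endpoint labels. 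Reindexing the edge sum by $(u,v,\ell)\leftrightarrow(v,u,\ell)$ (valid because $G$ is undirected and $c(u,v,\ell)=c(v,u,\ell)$), one finds $|\Phi_2\rangle$ has the same form as $|\Phi_1\rangle$ but with $e^{2\pi iju/n}$ replaced by $e^{2\pi ijv/n}$; in particular the $j=0$ contributions agree and cancel in the difference. Comparing with \cref{eq:psi-j} then gives
$$
|\Phi_1\rangle - |\Phi_2\rangle \;=\; \frac{1}{\sqrt{nd_{\mathrm{avg}}}}\sum_{j=1}^{n-1}\ket{j}\ket{\psi_j},
$$
whose squared norm equals $2\bigl(1+\tfrac{1}{2n}\bigr)$ by \cref{eq:psi-init-norm}.

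First I would prepare $|\Phi_1\rangle$ on three registers $\ket{j}\ket{u}\ket{v,\ell}$ as follows: apply Hadamards on the $j$-register (cost $O(\log n)$); prepare the stationary state $\sum_u\sqrt{d_G(u)/(nd_{\mathrm{avg}})}\ket{u}$ on the $u$-register (cost $\mathsf{S}$); apply the diagonal phase $\ket{j,u}\mapsto e^{2\pi iju/n}\ket{j,u}$ at cost $\widetilde{O}(\log n)$ (by computing $ju\bmod n$ into an ancilla, applying the phase $e^{2\pi ik/n}$ to the result, and uncomputing); and apply the quantum walk step $V$ to the $(u,v,\ell)$-register at cost $\mathsf{U}$. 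A direct expansion confirms the result equals $|\Phi_1\rangle$.

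Next I would use a single ancillary qubit $C$ to realize the subtraction. Starting from $|\Phi_1\rangle\ket{0}_C$, apply Hadamard on $C$, then controlled-$W$ on $C$, then $Z$ on $C$, then Hadamard on $C$ again. Using $W|\Phi_1\rangle = |\Phi_2\rangle$, the output is
$$
\tfrac{1}{2}\bigl(|\Phi_1\rangle-|\Phi_2\rangle\bigr)\ket{0}_C \;+\; \tfrac{1}{2}\bigl(|\Phi_1\rangle+|\Phi_2\rangle\bigr)\ket{1}_C.
$$
By the identity above and the normalization \cref{eq:psi-init}, the $\ket{0}_C$-component, once normalized, is exactly $\ket{\psi_{\mathrm{init}}}$, and the success probability is $\tfrac{1}{4}\||\Phi_1\rangle-|\Phi_2\rangle\|^2 = \tfrac{1}{2}\bigl(1+\tfrac{1}{2n}\bigr)\geq \tfrac{1}{2}$. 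This step costs only $O(\log n)$ additional gates.

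Since the success amplitude is a known constant bounded below by $1/\sqrt{2}$, applying $O(\log\tfrac{1}{\eps})$ rounds of fixed-point amplitude amplification (Yoder--Low--Chuang) boosts it to $\geq \sqrt{1-\eps}$. Each round calls the preparation circuit and its inverse a constant number of times together with a reflection on $\ket{0}_C$, for a per-round cost of $O(\mathsf{S}+\mathsf{U}+\log n)$. The total cost is therefore $O\bigl((\mathsf{S}+\mathsf{U}+\log n)\log\tfrac{1}{\eps}\bigr)$, as claimed. The only non-routine ingredient is the algebraic identity $|\Phi_1\rangle-|\Phi_2\rangle \propto \sum_j \ket{j}\ket{\psi_j}$ together with its constant norm, which is bookkeeping using the symmetry of $\overrightarrow{E}(G)$ and \cref{eq:psi-init-norm}; all remaining components are standard.
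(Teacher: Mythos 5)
Your proof is correct and follows essentially the same route as the paper's. The paper builds $\ket{\alpha_j}$ (phase $e^{2\pi iju/n}$ on the $u$ register) and $\ket{\beta_j}$ (phase on the $v,\ell$ register) as two separate coherent preparations over $j$ and extracts $\ket{\psi_j}\propto\ket{\alpha_j}-\ket{\beta_j}$ by a Hadamard on a control qubit; you observe that the second state is obtained from the first by the endpoint swap $W$, so a single preparation of $\ket{\Phi_1}$ plus a controlled-$W$ suffices (indeed $W\ket{\alpha_j}=\ket{\beta_j}$ in the paper's notation, so the two pictures coincide after reindexing). Your inclusion of $j=0$, which contributes zero to the difference, is harmless and slightly cleaner than the paper's restriction to $j\geq 1$. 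The only other deviation is cosmetic: you boost the success amplitude with fixed-point amplitude amplification, while the paper just measures the control qubit, post-selects on $\ket{0}$, and repeats $O(\log\frac{1}{\eps})$ times; both give the claimed $O((\mathsf{S}+\mathsf{U}+\log n)\log\frac{1}{\eps})$ cost and error probability at most $\eps$.
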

\begin{proof}
We have
\begin{eqnarray}
\ket{\psi_j} &=& A^T\ket{\hat{j}} = \sum_{(u,v,\ell)\in\overrightarrow{E}(G)}\sqrt{c(u,v,\ell)}\frac{1}{\sqrt{n}}(e^{2\pi i ju/n}-e^{2\pi i jv/n})\ket{u,v,\ell}\\
&=& \frac{1}{\sqrt{n}}\sum_{u\in V}e^{2\pi i j u/n}\ket{u}\sum_{\substack{v,\ell:\\(u,v,\ell)\in\overrightarrow{E}(G)}}\sqrt{c(u,v,\ell)}\ket{v,\ell} - \frac{1}{\sqrt{n}}\sum_{u\in V}\ket{u}\sum_{\substack{v,\ell:\\(u,v,\ell)\in\overrightarrow{E}(G)}}e^{2\pi ij v/n}\sqrt{c(u,v,\ell)}\ket{v,\ell}.\nonumber
\end{eqnarray}
We first note that we can generate the state 
\begin{equation}
\sum_{u\in V}\sqrt{\frac{d_G(u)}{nd_{\mathrm{avg}}}}\ket{u}
\end{equation}
in cost $\mathsf{S}$ from which we can generate, for any $j\in[n]$,
\begin{equation}
\sum_{u\in V}e^{2\pi iju/n}\sqrt{\frac{d_G(u)}{nd_{\mathrm{avg}}}}\ket{u}
\end{equation}
using a generalized $Z_n^j$ gate, which performs the map $\ket{u}\mapsto e^{2\pi i uj/n}\ket{u}$ with complexity $O(\log n)$. From this, with one step of the quantum walk, we can get
\begin{equation}
\ket{\alpha_j}=\frac{1}{\sqrt{nd_{\mathrm{avg}}}}\sum_{u\in V}e^{2\pi iju/n}\ket{u}\sum_{v,\ell:(u,v,\ell)\in\overrightarrow{E}(G)}\sqrt{c(u,v,\ell)}\ket{v,\ell}
\end{equation}
in cost $\mathsf{U}$. The total cost of constructing $\ket{\alpha_j}$ is $O(\mathsf{S}+\mathsf{U}+\log n)$. 

Next, we can construct the state 
$$\sum_{u\in V}\sqrt{\frac{d_G(u)}{nd_{\mathrm{avg}}}}\ket{u}$$
in cost $\mathsf{S}$, from which we can construct
$$\frac{1}{\sqrt{nd_{\mathrm{avg}}}}\sum_{u\in V}\ket{u}\sum_{v,\ell:(u,v,\ell)\in\overrightarrow{E}(G)}\sqrt{c(u,v,\ell)}\ket{v,\ell}$$
in cost $\mathsf{U}$. Finally, for any $j\in[n]$, applying a generalized $Z_n^j$ gate on the second register, we get
$$\ket{\beta_j}=\frac{1}{\sqrt{nd_{\mathrm{avg}}}}\sum_{u\in V}\ket{u}\sum_{v,\ell:(u,v,\ell)\in\overrightarrow{E}(G)}e^{2\pi ijv/n}\sqrt{c(u,v,\ell)}\ket{v,\ell},$$
for total cost $O(\mathsf{U}+\mathsf{S}+\log n)$. One can now see that $\ket{\alpha_j}-\ket{\beta_j} = \frac{1}{\sqrt{d_{\mathrm{avg}}}}\ket{\psi_j}$. 

To construct $\ket{\psi_{\mathrm{init}}}$, generate the state:
$$\frac{1}{\sqrt{2(n-1)}}\ket{0}\sum_{j=1}^{n-1}\ket{j}\ket{\alpha_j}-\frac{1}{\sqrt{2(n-1)}}\ket{1}\sum_{j=1}^{n-1}\ket{j}\ket{\beta_j}.$$
This costs $O(\mathsf{S}+\mathsf{U}+\log n)$. Next, apply a Hadamard gate to the first register to get:
\begin{eqnarray}
&& \frac{1}{2\sqrt{n-1}}\ket{0}\sum_{j=1}^{n-1}\ket{j}(\ket{\alpha_j}-\ket{\beta_j})+\frac{1}{2\sqrt{n-1}}\ket{1}\sum_{j=1}^{n-1}\ket{j}(\ket{\alpha_j}+\ket{\beta_j})\nonumber\\
&=& \frac{1}{2\sqrt{(n-1)d_{\mathrm{avg}}}}\ket{0}\sum_{j=1}^{n-1}\ket{j}\ket{\psi_j}+\frac{1}{2\sqrt{n-1}}\ket{1}\sum_{j=1}^{n-1}\ket{j}(\ket{\alpha_j}+\ket{\beta_j}).
\end{eqnarray}
By Eq.~\eqref{eq:psi-init-norm}, we have $\norm{\sum_{j=1}^{n-1}\ket{j}\ket{\psi_j}}=\sqrt{2(1+1/(2n))nd_{\mathrm{avg}}}$, so the amplitude on the $\ket{0}$ part of the state is at least $\frac{1}{\sqrt{2}}$. Thus, we can measure the first register, and post select on measuring $\ket{0}$ to obtain $\ket{\psi_{\mathrm{init}}}$. With $\log\frac{1}{\eps}$ repetitions, we succeed with probability $1-\eps$.
\end{proof}

We can now give an upper bound on the complexity of deciding connectivity for any
family of parent graphs $G$, in terms of $\sf U$ and $\sf S$, in
\cref{thm:any-G}, below. We first note that it is reasonable to assume that
these costs should be low in many natural cases. The cost $\mathsf{U}$ is the
cost of implementing a step of a quantum walk on $G$, and note that $G$ is
input-independent, so as long as it is sufficiently structured, this shouldn't
be a particularly large cost. For example, if for any vertex in $G$, we can
efficiently query its degree, and its $i\tth$ neighbour for any $i$, then
$\mathsf{U}=O(\log n)$. Note that this is \emph{not} the same as assuming we
can efficiently query the $i\tth$ neighbour of a vertex in $G(x)$, which is not
an operation that we can easily implement in the edge-query input model.
Similarly, we might hope that $\mathsf{S}$ is also $O(\log n)$ in many cases
of interest. Indeed, whenever $G$ is $d$-regular, it's simply the cost of generating the uniform
superposition over all vertices.

\anyG*
\begin{proof}
By \cref{lem:psi-init}, the complexity of generating $\ket{\psi_{\mathrm{init}}}$ is $\mathsf{Init}=O(\mathsf{S}+\mathsf{U}+\log n)$, and by \cref{lem:eps}, the initial state has overlap at least $\varepsilon = \Omega\left(\frac{\kappa\lambda_2(G)}{nd_{\mathrm{avg}}}\right)$ with any unit vector in $\ker A(x)\cap \mathrm{row}(A)$. Plugging these values into the expression in \cref{thm:alg-template} gives (neglecting polylogarithmic factors)
\begin{equation*}
O\left(\frac{1}{\sqrt{\varepsilon}}\left(\mathsf{Init}+\sqrt{\frac{d_{\max}(G)}{\lambda}}\mathsf{U}\right)\right)
=\widetilde{O}\left(\sqrt{\frac{nd_{\mathrm{avg}}}{\kappa\lambda_2(G)}}\left(\mathsf{S}+\sqrt{\frac{d_{\max}(G)}{\lambda}}\mathsf{U}\right)\right).\qedhere
\end{equation*}
\end{proof}

\subsection{An Algorithm for Cayley Graphs}\label{sec:Cayley-graph}

When the parent graph $G$ is a Cayley graph for a finite Abelian group, we can
use the extra structure to construct an orthonormal basis of
$\mathrm{row}(A)$. We first define a Cayley graph. Let $\Gamma$ be a finite
Abelian group, and $S$ a symmetric subset of $\Gamma$, meaning that if $g\in
S$, then $-g\in S$. The Cayley graph $\mathsf{Cay}(\Gamma,S)$ is the graph
that has $\Gamma$ as its vertex set, and edge set $\{\{a,b\}: b-a\in S\}$.

For a positive integer $m$, let $\omega_m=e^{2\pi i/m}$. For an Abelian group
$\Gamma = \mathbb{Z}/m_1\mathbb{Z} \times \dots \times
\mathbb{Z}/m_k\mathbb{Z}$ and an element $g\in\Gamma$, we define the character
$\chi_g:G\rightarrow\mathbb{C}$ as the function $\chi_g(s) =
\omega_{m_1}^{g_1s_1}\dots\omega_{m_k}^{g_ks_k}$.

When $G=\mathsf{Cay}(\Gamma,S)$ is a Cayley graph, with $n=|\Gamma|$ and
$d=|S|$, it is easily verified that the eigenvectors of $L_G$ are exactly the
Fourier vectors: $$\ket{\hat{g}} =
\frac{1}{\sqrt{n}}\sum_{h\in\Gamma}\chi_g(h)\ket{h}.$$ For $g\neq 0$, these are also
the left-singular vectors of $A$. Most importantly, the vectors
$A^T\ket{\hat{g}}$ for $g\neq 0$ are an orthogonal basis of $\mathrm{row}(A)$
since they are proportional to the right-singular vectors of $A$. We define
\begin{eqnarray}
\ket{\psi_g}&=&A^T\ket{\hat{g}}=\sum_{u\in\Gamma}\sum_{v:v-u\in S}\sum_{h\in\Gamma}\frac{\chi_g(h)}{\sqrt{n}}\ket{u,v}(\bra{u}-\bra{v})\ket{h}=\sum_{u\in\Gamma}\sum_{v:v-u\in S}\frac{1}{\sqrt{n}}(\chi_g(u)-\chi_g(v))\ket{u,v}\label{eq:right-singular-vec1}\nonumber\\
&=&\sum_{u\in\Gamma}\frac{\chi_g(u)}{\sqrt{n}}\sum_{s\in S}(1-\chi_g(s))\ket{u,u+s}\label{eq:right-singular-vec2}.
\end{eqnarray}

We have $\norm{\ket{\psi_g}}^2=\norm{A^T\ket{\hat{g}}}^2=\lambda_g$, where
$\lambda_g$ is the eigenvalue of $L_G$ associated with $\ket{\hat{g}}$. In particular, $\lambda_g = d-\sum_{s\in S}\chi_g(s)$ (See for example
\cite{bollobas2013modern}).

We define
\begin{equation}
\ket{\psi_{\mathrm{init}}} = \frac{1}{\sqrt{n-1}}\sum_{g\in\Gamma\setminus\{0\}}\frac{1}{\sqrt{\lambda_g}}\ket{g}\ket{\psi_g}.\label{eq:init-cayley}
\end{equation}
We first lower bound the overlap of the initial state with the $0$-phase space of $U(P,x)$ when $G(x)$ is not connected.
\begin{lemma}\label{lem:cayley-eps}
Suppose $G(x)$ has at least $\kappa>1$ components. Let
$\ket{\psi_{\mathrm{init}}}$ be as in \cref{eq:init-cayley}, and let
$\{\ket{\phi_i}\}_{i=1}^{\kappa-1}$ be orthonormal 0-phase vectors of
$U(P,x)$ in $\mathrm{row}(A)$. Let
$\Pi=\sum_{i=1}^{\kappa-1}\ket{\phi_i}\bra{\phi_i}$. Then $$\norm{(I\otimes
\Pi)\ket{\psi_{\mathrm{init}}}}^2\geq \frac{\kappa - 1}{n-1}.$$
\end{lemma}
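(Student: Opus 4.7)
The plan is to recognize $\ket{\psi_{\mathrm{init}}}$ as a maximally entangled state between the first register and $\mathrm{row}(A)$, which immediately reduces the lemma to a trace calculation.

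First I would verify that the vectors $\{\ket{\psi_g}\}_{g \in \Gamma\setminus\{0\}}$ are mutually orthogonal. Since $G$ is a Cayley graph on an Abelian group, the Fourier basis $\{\ket{\hat g}\}_{g \in \Gamma}$ diagonalizes $L_G$, hence also diagonalizes $AA^T$. For $g \neq 0$ the eigenvalue $\lambda_g$ is non-zero (because $G$ is assumed connected, so only $g=0$ gives a zero eigenvalue), and therefore
\[
\langle\psi_g|\psi_{g'}\rangle = \bra{\hat g}AA^T\ket{\hat{g'}} \; \propto\; \lambda_g\,\delta_{g,g'}.
\]
Combined with $\norm{\ket{\psi_g}}^2 = \lambda_g$ (as noted in the paragraph preceding the lemma), this shows that the vectors $\ket{e_g} := \ket{\psi_g}/\sqrt{\lambda_g}$, for $g \in \Gamma\setminus\{0\}$, form an orthonormal set of size $n-1$ inside $\mathrm{row}(A)$. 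Since $\dim \mathrm{row}(A) = n-1$ (again, using that $G$ is connected, so $\ker A$ is one-dimensional), $\{\ket{e_g}\}_{g\neq 0}$ is in fact an orthonormal basis of $\mathrm{row}(A)$.

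Next I would rewrite the initial state as
\[
\ket{\psi_{\mathrm{init}}} \;=\; \frac{1}{\sqrt{n-1}}\sum_{g\in\Gamma\setminus\{0\}} \ket{g}\otimes\ket{e_g},
\]
which is a maximally entangled state between the label register and the subspace $\mathrm{row}(A)$. A standard computation gives, for any operator $M$ acting on the second register,
\[
\norm{(I\otimes M)\ket{\psi_{\mathrm{init}}}}^2 \;=\; \frac{1}{n-1}\sum_{g\neq 0}\bra{e_g}M^\dagger M\ket{e_g} \;=\; \frac{\tr\!\left(\Pi_{\mathrm{row}(A)}\,M^\dagger M\right)}{n-1}.
\]

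To finish, I apply this identity with $M = \Pi$. By \cref{lem:kappa}, $\Pi$ projects onto a $(\kappa-1)$-dimensional subspace of $\mathrm{row}(A)$, so $\Pi_{\mathrm{row}(A)}\Pi = \Pi$ and $\Pi^\dagger\Pi = \Pi$, yielding
\[
\norm{(I\otimes \Pi)\ket{\psi_{\mathrm{init}}}}^2 \;=\; \frac{\tr(\Pi)}{n-1} \;=\; \frac{\kappa-1}{n-1}.
\]
This gives the stated bound with equality. There is no substantial obstacle here; the only point that needs care is the orthogonality claim for the $\ket{\psi_g}$, which relies specifically on the Cayley-graph structure (so that the same Fourier basis simultaneously diagonalizes $L_G$, and hence $AA^T$) — this is precisely where the argument fails for the general-graph initial state of \cref{sec:any-G}, and where the improved bound of this subsection comes from.
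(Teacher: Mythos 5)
Your proposal is correct and follows essentially the same approach as the paper: both recognize that $\{\ket{\psi_g}/\sqrt{\lambda_g}\}_{g\neq 0}$ is an orthonormal basis of $\mathrm{row}(A)$ (so that $\ket{\psi_{\mathrm{init}}}$ is maximally entangled between the label register and $\mathrm{row}(A)$) and then compute the overlap with $\Pi$ exactly, getting $(\kappa-1)/(n-1)$. The only cosmetic difference is that you phrase the final step as a trace identity, while the paper expands the double sum $\sum_{i}\sum_{g}|\braket{\phi_i}{\overline{\psi}_g}|^2$ and uses that each inner sum equals $1$ because $\ket{\phi_i}\in\mathrm{row}(A)$.
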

\begin{proof}
Let $\ket{\overline{\psi}_g}=\ket{\psi_g}/\sqrt{\lambda_g}$. Then $\{\ket{\overline{\psi}_g}\}_{g\neq 0}$ is an orthonormal basis for $\mathrm{row}(A)$.  
We have
\begin{equation}
\norm{(I\otimes \Pi)\ket{\psi_{\mathrm{init}}}}^2
=\frac{1}{n-1} \sum_{i=1}^{\kappa-1}\sum_{g\in\Gamma\setminus\{0\}}|\braket{\phi_i}{\overline{\psi}_g}|^2 = \frac{1}{n-1}\sum_{i=1}^{\kappa-1}1=\frac{\kappa-1}{n-1}.\qedhere
\end{equation}
\end{proof}

Next, we give an upper bound on the time complexity of constructing the initial state. We will use the following fact:
\begin{claim}[(See, for example, \cite{bollobas2013modern})]\label{claim:R-avg}
Let $G$ be any connected graph, with non-zero eigenvalues $\lambda_2,\dots,\lambda_n$. 
Then $R_{\mathrm{avg}}(G)=\frac{1}{n-1}\sum_{i=2}^n\frac{1}{\lambda_i}.$
\end{claim}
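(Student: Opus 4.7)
The plan is to reduce the claim to the well-known identity $R_{s,t}(G) = (e_s - e_t)^T L_G^+ (e_s - e_t)$, then sum this over ordered pairs and exploit the spectral decomposition of $L_G^+$. First I would establish that identity by analyzing the optimization in \cref{def:effRes}: the unique minimizing unit $st$-flow is given, via Ohm's law, by $\theta(u,v) = c(u,v)(\phi(u) - \phi(v))$ for some potential $\phi$ satisfying the Kirchhoff conservation law $L_G \phi = e_s - e_t$. Since $G$ is connected, $\ker L_G = \mathrm{span}\{\ket{\mu}\}$ and $e_s - e_t \perp \ket{\mu}$, so the solution modulo additive constants is $\phi = L_G^+(e_s - e_t)$, and the energy is
\begin{equation*}
J_{E(G)}(\theta) = \phi^T L_G \phi = (e_s-e_t)^T L_G^+ L_G L_G^+ (e_s-e_t) = (e_s-e_t)^T L_G^+ (e_s-e_t),
\end{equation*}
using $L_G^+ L_G L_G^+ = L_G^+$.

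Next I would sum over all ordered pairs $s \neq t$. Expanding the bilinear form gives
\begin{equation*}
\sum_{s\neq t} R_{s,t}(G) = \sum_{s\neq t}\bigl[(L_G^+)_{ss} + (L_G^+)_{tt} - 2(L_G^+)_{st}\bigr] = 2(n-1)\tr(L_G^+) - 2\sum_{s\neq t}(L_G^+)_{st}.
\end{equation*}
The key simplification is that $L_G^+ \mathbf{1} = 0$, since $\mathbf{1} \in \ker L_G$ and the pseudoinverse acts as zero on the kernel. Hence every row (and column) of $L_G^+$ sums to zero, which gives $\sum_{s,t}(L_G^+)_{st} = 0$ and therefore $\sum_{s \neq t}(L_G^+)_{st} = -\tr(L_G^+)$. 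Substituting this in yields $\sum_{s \neq t} R_{s,t}(G) = 2n\,\tr(L_G^+)$.

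Finally, $L_G^+$ shares its eigenvectors with $L_G$ and has eigenvalues $0, 1/\lambda_2, \dots, 1/\lambda_n$, so $\tr(L_G^+) = \sum_{i=2}^n 1/\lambda_i$. Dividing by the normalization $n(n-1)$ from the definition of $R_{\mathrm{avg}}$ then delivers the claimed identity, up to a constant factor that drops out of any comparison with $\lambda_2(G)$ (this bookkeeping constant should be double-checked against the paper's conventions for the flow-energy factor of $\tfrac{1}{2}$ in \cref{def:unitFlowEnergy}).

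The main conceptual step is the Laplacian-pseudoinverse formula for effective resistance; everything afterwards is routine linear-algebraic manipulation. The only subtle point is handling the kernel of $L_G$ correctly when passing between $L_G$ and $L_G^+$, which is why connectedness of $G$ (ensuring the kernel is exactly one-dimensional and spanned by $\ket{\mu}$) is essential in each step.
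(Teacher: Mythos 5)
Your route via the identity $R_{s,t}(G)=(e_s-e_t)^T L_G^+(e_s-e_t)$ followed by the trace computation for $L_G^+$ is the standard one, and the linear algebra is correct; the paper itself cites \cref{claim:R-avg} to Bollob\'as without giving a proof, so there is no internal argument to compare against. You were right to flag the constant, but your diagnosis of its source is off. Over the $n(n-1)$ ordered pairs in the paper's definition of $R_{\mathrm{avg}}$, your computation yields $\sum_{s\neq t}R_{s,t}=2n\,\tr(L_G^+)$ and hence $R_{\mathrm{avg}}(G)=\frac{2}{n-1}\sum_{i=2}^n\frac{1}{\lambda_i}$, which is twice what \cref{claim:R-avg} asserts. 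The $\tfrac{1}{2}$ in \cref{def:unitFlowEnergy} does not account for this: it merely compensates for summing over both orientations of each edge, so the paper's $R_{s,t}$ is the ordinary effective resistance. A quick sanity check on $K_2$ confirms the discrepancy: $R_{1,2}=1$ and $\lambda_2=2$, so $R_{\mathrm{avg}}=1$, whereas the claimed formula gives $\tfrac{1}{2}$. The missing factor of $2$ is therefore an error in the claim as stated, not in your derivation. It is harmless for the paper's purposes, since \cref{claim:R-avg} is invoked only to obtain $R_{\mathrm{avg}}(G)\leq O(1/\lambda_2(G))$ in the proof of \cref{thm:Cayley-graph}, where a factor of $2$ is absorbed.
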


\begin{lemma}\label{lem:cayley-init}
Let $\mathsf{U}$ be the cost of generating the state $\frac{1}{\sqrt{d}}\sum_{s\in S}\ket{s}$. Let $\mathsf{\Lambda}$ be the cost of implementing, for $g\in\Gamma$, $\ket{g}\ket{0}\mapsto \ket{g}\ket{\lambda_g}$.  
Then the cost of generating the state $\ket{\psi_{\mathrm{init}}}$ as in \cref{eq:init-cayley} with success probability $1-\eps$ is 
$$O\left(\left(\sqrt{R_{\mathrm{avg}}(G)d}\left(\mathsf{U}+\log n\right)+\mathsf{\Lambda}{\sqrt{\frac{d}{\lambda_2(G)}}}\right)\log\frac{1}{\eps}\right).$$
\end{lemma}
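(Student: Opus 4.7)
The plan is to prepare $\ket{\psi_{\mathrm{init}}}$ in two stages, arranged so that stage~2's LCU success amplitude $\propto \sqrt{\lambda_g/d}$ is exactly compensated by a $1/\sqrt{\lambda_g}$ pre-weight from a separately prepared auxiliary state. Concretely, stage~1 produces on the $g$-register alone the state $\ket\chi := \frac{1}{\sqrt{(n-1)R_{\mathrm{avg}}(G)}}\sum_{g\neq 0}\lambda_g^{-1/2}\ket g$, which is normalized by \cref{claim:R-avg}, and stage~2 applies, conditioned on $g$, a linear-combination-of-unitaries (LCU) routine that places amplitude $\propto\sqrt{\lambda_g/d}$ on $\ket{\bar\psi_g}:=\ket{\psi_g}/\sqrt{\lambda_g}$. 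The product amplitude is therefore $\propto 1/\sqrt{dR_{\mathrm{avg}}(G)}$, independent of $g$, and a single outer amplitude amplification cleanly rotates the result into $\ket{\psi_{\mathrm{init}}}$.

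For stage~1, I would start from $(n-1)^{-1/2}\sum_{g\neq 0}\ket g$, apply $\mathsf{\Lambda}$ to produce $\ket g\ket{\lambda_g}$, rotate an ancilla qubit by $\arcsin(\sqrt{\lambda_2(G)/\lambda_g})$, and uncompute $\lambda_g$. The ``good'' ancilla branch has squared amplitude $\tfrac{\lambda_2(G)}{n-1}\sum_{g\neq 0}\lambda_g^{-1}=\lambda_2(G)\,R_{\mathrm{avg}}(G)$ by \cref{claim:R-avg}, so amplitude amplification yields $\ket\chi$ via a unitary $U_\chi$ of cost $O\!\left(\mathsf{\Lambda}/\sqrt{\lambda_2(G)R_{\mathrm{avg}}(G)}\right)$. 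For stage~2, starting from $\ket\chi$, I would attach the registers $n^{-1/2}\sum_u\ket u$ and $d^{-1/2}\sum_{s\in S}\ket s$ (cost $\mathsf{U}+O(\log n)$) together with an ancilla in $\ket{+}_B$, then apply the $g$-controlled diagonal phase $\chi_g(u)$ on the $u$-register and, controlled also on $B=1$, the phase $\chi_g(s)$ on the $s$-register, and finish with a Hadamard on $B$. Using $\sum_{s\in S}|1-\chi_g(s)|^2 = 2\lambda_g$, a short calculation shows that after the composite preparation $V\,U_\chi$ the amplitude of $\ket g\ket{\bar\psi_g}\ket 1_B$ is the same for every $g\neq 0$, so the ``good'' subspace has total squared amplitude $\Theta(1/(dR_{\mathrm{avg}}(G)))$ and normalizes exactly to $\ket{\psi_{\mathrm{init}}}$.

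A final outer amplitude amplification on $V\,U_\chi$ drives the good-subspace amplitude to a constant in $O(\sqrt{dR_{\mathrm{avg}}(G)})$ iterations; each iteration invokes $U_\chi$ and $V$ once forward and once backward together with the standard reflections, at a per-iteration cost of $O\!\left(\mathsf{\Lambda}/\sqrt{\lambda_2(G)R_{\mathrm{avg}}(G)} + \mathsf{U} + \log n\right)$. Multiplying these through yields
\begin{equation*}
O\!\left(\sqrt{dR_{\mathrm{avg}}(G)}\cdot\frac{\mathsf{\Lambda}}{\sqrt{\lambda_2(G)R_{\mathrm{avg}}(G)}} + \sqrt{dR_{\mathrm{avg}}(G)}\,(\mathsf{U}+\log n)\right) = O\!\left(\mathsf{\Lambda}\sqrt{\tfrac{d}{\lambda_2(G)}} + \sqrt{R_{\mathrm{avg}}(G)\,d}\,(\mathsf{U}+\log n)\right),
\end{equation*}
and the remaining $\log\eps^{-1}$ factor comes from $O(\log\eps^{-1})$ rounds of repetition (or a fixed-point variant) to push the overall failure probability below $\eps$. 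The principal obstacle is the two-level amplitude-amplification bookkeeping: if one folds stage~1 into the outer iterations, $\mathsf{U}$ ends up with the larger factor $\sqrt{d/\lambda_2(G)}$, and the separation in the stated bound relies crucially on preparing $\ket\chi$ only \emph{once}, at its own cheaper amplification rate, before the outer loop begins.
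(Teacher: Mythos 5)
Your construction is essentially the paper's proof. The paper also (i) prepares $\frac{1}{\sqrt{(n-1)R_{\mathrm{avg}}(G)}}\sum_{g\neq 0}\lambda_g^{-1/2}\ket g$ by computing $\ket{\lambda_g}$, rotating an ancilla by $\sqrt{\lambda_2(G)/\lambda_g}$, uncomputing, and running amplitude amplification at cost $O(\mathsf{\Lambda}/\sqrt{\lambda_2(G)R_{\mathrm{avg}}(G)})$; (ii) attaches a Hadamard ancilla and builds the two branches $\ket{\alpha_g},\ket{\beta_g}$ differing by a controlled character phase so that the Hadamarded ancilla's $\ket 1$ branch lands on $\ket{\psi_g}/\sqrt d$ (this is exactly your ``LCU'' step); and (iii) runs an outer amplitude amplification for $O(\sqrt{dR_{\mathrm{avg}}(G)})$ iterations. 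Your running-time arithmetic matches the paper's and is correct. Two small remarks. First, your edge register is $\ket{u}\ket{s}$ while $\ket{\psi_g}$ in \cref{eq:right-singular-vec2} lives on $\ket{u,u+s}$; you should note the trivial $O(\log n)$ in-place group addition $\ket{u,s}\mapsto\ket{u,u+s}$ so that the prepared state is literally $\ket{\psi_{\mathrm{init}}}$. Second, the closing sentence about ``preparing $\ket\chi$ only once, before the outer loop begins'' is a misleading way to describe the bookkeeping: the inner amplification circuit $U_\chi$ is part of the outer state-preparation unitary and is invoked (forward and backward) in \emph{every} outer Grover iteration; one cannot prepare a state once and reuse it. The correct reason the $\mathsf{U}$ term gets the smaller factor is that boosting stage~1 to constant success amplitude reduces the \emph{number} of outer iterations from $O(\sqrt{d/\lambda_2(G)})$ to $O(\sqrt{dR_{\mathrm{avg}}(G)})$, while the inner-amplification cost $\mathsf{\Lambda}/\sqrt{\lambda_2(G)R_{\mathrm{avg}}(G)}$, multiplied by this smaller iteration count, still yields $\mathsf{\Lambda}\sqrt{d/\lambda_2(G)}$ — which is exactly what your displayed equation shows, so your arithmetic already reflects the right reasoning even though the prose does not.
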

\begin{proof}
The proof is similar to that of \cref{lem:psi-init}. 
Using a Fourier transform, we can generate the state 
\begin{equation}
\ket{g}\mapsto \frac{1}{\sqrt{n}}\sum_{u\in\Gamma}\chi_g(u)\ket{u}
\end{equation}
for any $g\in\Gamma$ in time $O(\log n)$. We can then generate $\sum_{s\in S}\frac{1}{\sqrt{d}}\ket{s}$ in time $\mathsf{U}$, and perform the map
\begin{equation}
\frac{1}{\sqrt{n}}\sum_{u\in\Gamma}\chi_g(u)\ket{u}\sum_{s\in S}\frac{1}{\sqrt{d}}\ket{s}\mapsto \frac{1}{\sqrt{dn}}\sum_{u\in\Gamma}\chi_g(u)\ket{u}\sum_{s\in S}\ket{u+s}=:\ket{\alpha_g}
\end{equation}
for a total complexity of $O(\log n+\mathsf{U})$ to generate $\ket{\alpha_g}$. 

Alternatively, we can use the generalized $Z_{\Gamma}^g$ gate, which maps $\ket{s}$ to $\chi_g(s)\ket{s}$ in time $O(\log n)$ to get
\begin{equation}
\frac{1}{\sqrt{nd}}\sum_{u\in\Gamma}\chi_g(u)\ket{u}\sum_{s\in S}\ket{s} \mapsto 
\frac{1}{\sqrt{nd}}\sum_{u\in\Gamma}\chi_g(u)\ket{u}\sum_{s\in S}\chi_g(s)\ket{s}
\mapsto \frac{1}{\sqrt{nd}}\sum_{u\in\Gamma}\chi_g(u)\ket{u}\sum_{s\in S}\chi_g(s)\ket{u+s}=:\ket{\beta_g}
\end{equation}
for a total complexity of $O(\log n+\mathsf{U})$ to generate $\ket{\beta_g}$. Observe that $\ket{\alpha_g}-\ket{\beta_g} = \frac{1}{\sqrt{d}}\ket{\psi_g}$.

Now to construct $\ket{\psi_{\mathrm{init}}}$, we first construct $\sum_{g\in\Gamma\setminus\{0\}}\frac{1}{\sqrt{\lambda_g}}\ket{g}$, as follows. We first generate:
\begin{equation}
\sum_{g\in\Gamma\setminus\{0\}}\frac{1}{\sqrt{n-1}}\ket{g}\ket{\lambda_g},
\end{equation}
in cost $\mathsf{\Lambda}$. Next, we map this to:
\begin{equation}
\sum_{g\in\Gamma\setminus\{0\}}\frac{1}{\sqrt{n-1}}\ket{g}\ket{\lambda_g}\left(\sqrt{\frac{\lambda_2(G)}{\lambda_g}}\ket{0}+\sqrt{1-\frac{\lambda_2(G)}{\lambda_g}}\ket{1}\right). 
\end{equation}
We can perform this map because for all $g\in\Gamma\setminus\{0\}$, $\lambda_2(G)/\lambda_g \leq 1$.
We uncompute $\ket{\lambda_g}$, and then do amplitude amplification on $\ket{0}$ in the last register to get the desired state. The squared amplitude on $\ket{0}$ is:
\begin{equation}
\norm{\sum_{g\in\Gamma\setminus\{0\}}\frac{1}{\sqrt{n-1}}\sqrt{\frac{\lambda_2(G)}{\lambda_g}}\ket{g}}^2 = \frac{\lambda_2(G)}{n-1}\sum_{g\in\Gamma\setminus\{0\}}\frac{1}{\lambda_g} = \lambda_2(G)R_{\mathrm{avg}}(G).
\end{equation}
So we can generate the normalized state 
\begin{equation}
\frac{1}{\sqrt{(n-1)R_{\mathrm{avg}}(G)}}\sum_{g\in\Gamma\setminus\{0\}}\frac{1}{\sqrt{\lambda_g}}\ket{g}
\end{equation}
with constant success probability in time complexity $O(\mathsf{\Lambda}/\sqrt{\lambda_2(G)R_{\mathrm{avg}}(G)})$. 

Next, we map this state to:
\begin{equation}
\mapsto \frac{1}{\sqrt{2(n-1)R_{\mathrm{avg}}}}\ket{0}\sum_{g\in \Gamma\setminus\{0\}}\frac{1}{\sqrt{\lambda_g}}\ket{g}\ket{\alpha_g} - \frac{1}{\sqrt{2(n-1)R_{\mathrm{avg}}}}\ket{1}\sum_{g\in\Gamma\setminus\{0\}}\frac{1}{\sqrt{\lambda_g}}\ket{g}\ket{\beta_g}
\end{equation}
at an additional cost of $O(\mathsf{U}+\log n)$. Next, we apply a Hadamard gate to the first qubit to get 
\begin{eqnarray}
&& \frac{1}{2\sqrt{(n-1)R_{\mathrm{avg}}}}\ket{0}\sum_{g\in\Gamma\setminus\{0\}}\frac{1}{\sqrt{\lambda_g}}\ket{g}(\ket{\alpha_g}-\ket{\beta_g})+\frac{1}{2\sqrt{(n-1)R_{\mathrm{avg}}}}\ket{1}\sum_{g\in\Gamma\setminus\{0\}}\frac{1}{\sqrt{\lambda_g}}\ket{g}(\ket{\alpha_g}+\ket{\beta_g})\nonumber\\
&=& \frac{1}{2\sqrt{(n-1)R_{\mathrm{avg}}d}}\ket{0}\sum_{g\in\Gamma\setminus\{0\}}\frac{1}{\sqrt{\lambda_g}}\ket{g}\ket{\psi_g}+\frac{1}{2\sqrt{(n-1)R_{\mathrm{avg}}}}\ket{1}\sum_{g\in\Gamma\setminus\{0\}}\frac{1}{\sqrt{\lambda_g}}\ket{g}(\ket{\alpha_g}+\ket{\beta_g}).
\end{eqnarray}
The total cost to make one copy of this state with constant success probability is $O(\mathsf{U}+\log n+\mathsf{\Lambda}/\sqrt{\lambda_2(G)R_{\mathrm{avg}}})$. We can then get $\ket{\psi_{\mathrm{init}}}$ by doing amplitude amplification on the $\ket{0}$ part of this state. The amplitude on the $\ket{0}$ part of the state is given by 
$$\norm{\frac{1}{2\sqrt{(n-1)R_{\mathrm{avg}}d}}\sum_{g\in\Gamma\setminus\{0\}}\frac{1}{\sqrt{\lambda_g}}\ket{g}\ket{\psi_g}}
=\frac{1}{2\sqrt{R_{\mathrm{avg}}d}},$$
so using $O\left(\sqrt{R_{\mathrm{avg}}(G)d}\right)$ rounds of amplitude amplification is sufficient to generate $\ket{\psi_{\mathrm{init}}}$ with constant probability, for a total cost of (neglecting constants):
\begin{equation}
\sqrt{R_{\mathrm{avg}}(G)d}\left(\mathsf{U}+\log n + \frac{\sf \Lambda}{\sqrt{\lambda_2(G) R_{\mathrm{avg}}(G)}}\right) = \sqrt{R_{\mathrm{avg}}(G)d}\left(\mathsf{U}+\log n\right) + {\sf \Lambda}\sqrt{\frac{d}{\lambda_2(G)}}.
\end{equation}
We can amplify this to success probability $1-\eps$ at the cost of a $\log(1/\eps)$ multiplicative factor.
\end{proof}

\Cayleygraph*
\begin{proof}
Combining \cref{lem:cayley-init} and \cref{lem:cayley-eps} with \cref{thm:alg-template}, we get complexity:
\begin{equation}
\widetilde{O}\left(\sqrt{ndR_{\mathrm{avg}}(G)/\kappa}\mathsf{U}+\sqrt{\frac{nd}{\lambda\kappa}}\mathsf{U}+\sqrt{\frac{nd}{\lambda_2(G)\kappa}}\mathsf{\Lambda}\right).
\end{equation}
First, note that $\lambda_2(G(x))\leq \lambda_2(G)$. Thus
if $\lambda > \lambda_2(G)$, then there is no $x$ such that $\lambda_2(G(x))\leq \lambda$, so $X$ is empty. Thus, we can assume $\lambda \leq \lambda_2(G)$. 
By \cref{claim:R-avg}, we can see that $R_{\mathrm{avg}}(G)\leq \frac{1}{\lambda_2(G)}\leq \frac{1}{\lambda}$. The claim follows.
\end{proof}

We now look at specific examples where it is particularly efficient to compute
$\lambda_g$, as well as prepare a step of the walk $\sum_{s\in S}\ket{s}$. We
first consider the complete graph on $n$ vertices, in which $\Gamma =
\mathbb{Z}_n$, and $S=\Gamma\setminus\{0\}$.

\begin{corollary}\label{cor:complete-graph}
Fix any $\lambda>0$, and integer $\kappa>1$ and let $G$ be the complete graph.
Let $X\subseteq E(G)$ be such that for all $x\in X$, either
$\lambda_2(G(x))\geq \lambda$, or $G(x)$ has at least $\kappa$ components.
Then $\textsc{conn}_{G,X}$ can be solved in bounded error in time
\begin{equation*}
\widetilde{O}\left(\frac{n}{\sqrt{\kappa\lambda}}\right).
\end{equation*}
\end{corollary}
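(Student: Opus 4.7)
The plan is to instantiate \cref{thm:Cayley-graph} with $G = K_n$, the complete graph on $n$ vertices, viewed as a Cayley graph over $\Gamma = \mathbb{Z}_n$ with symmetric connecting set $S = \mathbb{Z}_n \setminus \{0\}$, so that $d = n-1$. All that remains is to compute the relevant graph-theoretic and algorithmic quantities and plug them into the theorem.

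First, I would recall the spectrum of $L_{K_n}$: its eigenvalues are $0$ (once) and $n$ (with multiplicity $n-1$). Equivalently, using the Cayley-graph formula $\lambda_g = d - \sum_{s\in S}\chi_g(s)$, for every $g \neq 0$ one has $\sum_{s\in\mathbb{Z}_n}\chi_g(s) = 0$ (character-sum identity), so $\sum_{s\in S}\chi_g(s) = -1$ and therefore $\lambda_g = (n-1) - (-1) = n$. In particular $\lambda_2(G) = n$.

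Second, I would bound the costs $\mathsf{U}$ and $\mathsf{\Lambda}$ appearing in \cref{thm:Cayley-graph}. Since every nonzero eigenvalue equals $n$, the map $\ket{g}\ket{0}\mapsto\ket{g}\ket{\lambda_g}$ is essentially trivial, giving $\mathsf{\Lambda} = O(\log n)$. Similarly, $\mathsf{U}$ is the cost of preparing $\frac{1}{\sqrt{n-1}}\sum_{s\in\mathbb{Z}_n\setminus\{0\}}\ket{s}$, which can be done in time $\widetilde{O}(1)$ by preparing the uniform superposition over $\mathbb{Z}_n$ (via a Fourier transform on $\ket{0}$) and performing one round of amplitude amplification to remove the $\ket{0}$ component, whose amplitude is $1/\sqrt{n}$.

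Finally, substituting $d = n-1$, $\lambda_2(G) = n$, and $\mathsf{U}, \mathsf{\Lambda} = \widetilde{O}(1)$ into the bound of \cref{thm:Cayley-graph} yields
\begin{equation*}
\widetilde{O}\!\left(\sqrt{\frac{n(n-1)}{\kappa\lambda}} + \sqrt{\frac{n(n-1)}{\kappa n}}\right) = \widetilde{O}\!\left(\frac{n}{\sqrt{\kappa\lambda}} + \sqrt{\frac{n}{\kappa}}\right).
\end{equation*}
To finish, I would observe (as in the proof of \cref{thm:Cayley-graph}) that the promise is nonvacuous only for $\lambda \leq \lambda_2(G) = n$, so $\sqrt{n/\kappa} \leq n/\sqrt{\kappa\lambda}$ and the first term dominates, giving the claimed $\widetilde{O}(n/\sqrt{\kappa\lambda})$. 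There is no substantive obstacle; the only sanity check is that generating the uniform superposition over $S = \mathbb{Z}_n \setminus \{0\}$ is polylogarithmic, which is immediate.
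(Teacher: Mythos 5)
Your proof is correct and follows essentially the same route as the paper: instantiate \cref{thm:Cayley-graph} with $K_n = \mathsf{Cay}(\mathbb{Z}_n, \mathbb{Z}_n\setminus\{0\})$, observe that all nonzero Laplacian eigenvalues equal $n$ so $\mathsf{\Lambda} = O(\log n)$ and $\lambda_2(G)=n$, note that the uniform superposition over $S$ is cheap to prepare, and substitute. The only difference is that you spell out the final arithmetic (including the observation $\lambda \le \lambda_2(G)$ to drop the second term), which the paper leaves implicit by just citing the theorem.
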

\begin{proof}
It is easily verified that for $G$ a complete graph, $L_G=(n-1)I-(J-I)$, where
$J$ is the all-ones matrix, so the eigenvalues consist of a single 0, and $n$
with multiplicity $n-1$ --- that is, all non-zero eigenvalues are $n$. Thus,
the mapping $\ket{g}\mapsto \ket{g}\ket{\lambda_g}=\ket{g}\ket{n}$ can be
implemented trivially in $O(\log n)$ complexity.

Next, we can generate the state $\sum_{s\in
S}\frac{1}{\sqrt{d}}\ket{s}=\sum_{s=1}^{n-1}\frac{1}{\sqrt{n-1}}\ket{s}$ in
complexity $\mathsf{S}=O(\log n)$. Then the result follows from 
\cref{thm:Cayley-graph}.
\end{proof}

Next, we consider the Boolean hypercube, in which $\Gamma =\mathbb{Z}_2^d$, so $n=2^d$, and $S=\{e_i\}_{i=1}^d$, where $e_i$ is 0 everywhere except the $i\tth$ entry, which is 1. 

\begin{corollary}\label{cor:hypercube}
Fix any $\lambda>0$, and integer $\kappa>1$ and let $G$ be the Boolean hypercube on $n=2^d$ vertices. Let $X\subseteq E(G)$ be such that for all $x\in X$, either $\lambda_2(G(x))\geq \lambda$, or $G(x)$ has at least $\kappa$ components. Then $\textsc{conn}_{G,X}$ can be solved in bounded error in time
\begin{equation*}
\widetilde{O}\left(\sqrt{\frac{n}{\kappa\lambda}}\right).
\end{equation*}
\end{corollary}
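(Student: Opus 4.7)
The plan is to derive this corollary as an immediate specialization of \cref{thm:Cayley-graph} to the Boolean hypercube, viewed as the Cayley graph $\mathsf{Cay}(\mathbb{Z}_2^d, S)$ with $S=\{e_1,\ldots,e_d\}$ and $n=2^d$. The work therefore reduces to identifying each of the parameters $d$, $\mathsf{U}$, $\mathsf{\Lambda}$, and $\lambda_2(G)$ that appear in the bound $\widetilde{O}\bigl(\sqrt{nd/(\kappa\lambda)}\,\mathsf{U}+\sqrt{nd/(\kappa\lambda_2(G))}\,\mathsf{\Lambda}\bigr)$.

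First, I would compute the Laplacian spectrum. For $g\in\mathbb{Z}_2^d$ the character evaluated at $e_i$ is $\chi_g(e_i)=(-1)^{g_i}$, so $\lambda_g=d-\sum_{i=1}^d(-1)^{g_i}=2|g|$, where $|g|$ denotes Hamming weight. In particular $\lambda_2(G)=2$, and the map $\ket{g}\ket{0}\mapsto\ket{g}\ket{\lambda_g}$ is just Hamming-weight computation, achievable in $\mathsf{\Lambda}=O(\mathrm{polylog}\,n)$ time. Since $d=\log n$, the generators $\{e_i\}_{i=1}^d$ are easy to index, and the uniform superposition $\frac{1}{\sqrt{d}}\sum_{s\in S}\ket{s}=\frac{1}{\sqrt d}\sum_{i=1}^d\ket{e_i}$ can be prepared in $\mathsf{U}=O(\mathrm{polylog}\,n)$ time (for example by preparing $\frac{1}{\sqrt d}\sum_{i=1}^d\ket i$ via a standard state-preparation circuit and then mapping $\ket i\mapsto\ket{e_i}$).

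Plugging $d=\log n$, $\lambda_2(G)=2$, and $\mathsf{U},\mathsf{\Lambda}=\widetilde{O}(1)$ into the bound of \cref{thm:Cayley-graph} yields
\begin{equation*}
\widetilde{O}\!\left(\sqrt{\frac{n\log n}{\kappa\lambda}}+\sqrt{\frac{n\log n}{2\kappa}}\right).
\end{equation*}
As noted in the proof of \cref{thm:Cayley-graph}, we may assume $\lambda\le\lambda_2(G)=2$ (otherwise $X$ is empty), so the first term dominates the second, and the $\log n$ factor is absorbed into $\widetilde O$, giving the claimed $\widetilde O(\sqrt{n/(\kappa\lambda)})$.

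There is essentially no obstacle here: the only thing to check carefully is that the auxiliary costs $\mathsf{U}$ and $\mathsf{\Lambda}$ are polylogarithmic for the hypercube, which is routine, and that the substitution $d=\log n$ turns the $\sqrt{nd}$ factors into $\sqrt n$ up to logs.
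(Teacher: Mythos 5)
Your proposal is correct and follows essentially the same approach as the paper: specialize \cref{thm:Cayley-graph} to $\Gamma=\mathbb{Z}_2^d$, observe $\lambda_g=2|g|$ so both $\mathsf{U}$ and $\mathsf{\Lambda}$ are polylogarithmic, and absorb the $d=\log n$ factor into the $\widetilde{O}$. Your version is slightly more explicit than the paper's (you spell out $\lambda_2(G)=2$ and the reason the second term is dominated), but it is the same argument.
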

\begin{proof}
The eigenvalues of the Boolean hypercube are well known to be
$\lambda_g=2|g|$, for $g\in\mathbb{Z}_2^d$, where $|g|$ denotes the Hamming
weight of $g$. Thus, the map $\ket{g}\mapsto\ket{g}\ket{\lambda_g}$ can be
implemented in cost $O(\log n)$. Finally, the state $\sum_{s\in
S}\ket{s}=\sum_{i=1}^d\ket{e_i}$ can be generated in time $O(\log n)$.
Then by \cref{thm:Cayley-graph}, the time complexity is (neglecting polylog
factors):
\begin{equation}
\sqrt{\frac{nd}{\kappa\lambda}}=\widetilde{O}\left(\sqrt{\frac{n}{\kappa\lambda}}\right).\qedhere
\end{equation}
\end{proof}

%%%%%%%%%%%%%%%%%%%%%%%%%%%%%%%%%%%%%%%%%%%%%%%%%%%%%%%%%%%%%%%%
\subsection{Estimating the connectivity when $G$ is a complete graph}
\label{sec:estimating-connectivity}

For the remainder of this section, let $G$ be the complete graph on $n$ vertices, $K_n$. In that case, we can not only decide if $G(x)$ is connected, but estimate $\lambda_2(G(x))$. 
The idea is to relate the smallest phase of $U(P,x)$ on $\mathrm{row}(A)$ to $\lambda_2(G(x))$, and estimate this value using quantum phase estimation. 

Let $\Delta(U(P,x))$ denote the smallest nonzero phase of $U(P,x)$, which we want to estimate.  We will shortly show that there is a vector $\ket{u}$ in $\mathrm{row}(A)$ in the $\pm\Delta(U(P,x))$-phase space of $U(P,x)$. Then, if $\{\ket{\psi_i}\}_{i=1}^{n-1}$ is an orthonormal basis for $\mathrm{row}(A)$, applying quantum phase estimation on the second register of 
\begin{equation}
\ket{\psi_{\mathrm{init}}}=\frac{1}{\sqrt{n-1}}\sum_{i=1}^{n-1}\ket{i}\ket{\psi_i}
\end{equation}
to some constant precision, and then applying amplitude estimation to determine if there is amplitude at least $\frac{1}{\sqrt{2(n-1)}}$ on phases less than $1/2$, we can distinguish between the case in which, say, $\Delta(U(P,x))\leq 1/3$, and $\Delta(U(P,x))\geq 2/3$. To get an accurate estimate of $\Delta(U(P,x))$, we will make repeated calls to such a phase-estimation-followed-by-amplitude-estimation subroutine, reducing the size of the interval where $\Delta(U(P,x))$ sits at every iteration. We will first describe the connection between $\Delta(U(P,x))$ and $\lambda_2(G(x))$, and then formally present the algorithm for estimating $\Delta(U(P,x))$.

\paragraph{Connection between $\lambda_2(G(x))$ and $\Delta(U(P,x))$} We use the following theorem, relating the phases of the product of two reflections $U=(2\Pi_A-I)(2\Pi_B-1)$ to the singular values of its discriminant, defined $\Pi_A\Pi_B$.
\begin{theorem}[\cite{sze04}]\label{thm:sze}
Let $\Pi_A=\sum_{i=1}^a\ket{\alpha_i}\bra{\alpha_i}$ and $\Pi_B=\sum_{i=1}^b\ket{\beta_i}\bra{\beta_i}$ be orthogonal projectors into some subspaces of the same inner product space, and define $U=(2\Pi_A-I)(2\Pi_B-I)$. Let $D=\Pi_A\Pi_B$ be the discriminant of $U$, and suppose it has singular value decomposition $D=\sum_{j=1}^r\cos\theta_j \ket{u_j}\bra{v_j}$ with $\theta_j\in [0,\pi/2)$. Then $U$ has 1-eigenspace $(A\cap B)\oplus (A^\bot\cap B^\bot)$ and $(-1)$-eigenspace $(A\cap B^\bot)\oplus (A^\bot\cap B)$. The only other eigenvalues of $U$ are exactly $\{e^{\pm 2i\theta_j}\}_{j=1}^r$, and for each $j$, the $e^{2i\theta_j}$- and $e^{-2i\theta_j}$-eigenvectors are, respectively, $\ket{\theta_j^+}=\ket{v_j}-e^{i\theta_j}\ket{u_j}$ and $\ket{\theta_j^-}=\ket{v_j}-e^{-i\theta_j}\ket{u_j}$.
\end{theorem}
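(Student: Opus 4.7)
The plan is to prove this Jordan-type spectral theorem by a standard invariant-subspace decomposition. By Jordan's lemma, the ambient Hilbert space decomposes as an orthogonal direct sum of subspaces, each of dimension $1$ or $2$, and each jointly invariant under $\Pi_A$ and $\Pi_B$. I would analyze $U$ on each piece separately and then reassemble the eigendata of $U$.

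On the $1$-dimensional invariant subspaces, $\Pi_A$ and $\Pi_B$ must commute, each acting as $0$ or $I$. This yields four types: lines inside $A\cap B$ (where $U=(2I-I)(2I-I)=I$), inside $A^\bot\cap B^\bot$ (where $U=(-I)(-I)=I$), inside $A\cap B^\bot$ (where $U=I\cdot(-I)=-I$), and inside $A^\bot\cap B$ (where $U=(-I)\cdot I=-I$). Summing these contributions gives precisely the $\pm 1$ eigenspaces claimed.

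On each $2$-dimensional invariant subspace $W_j$, neither $\Pi_A$ nor $\Pi_B$ can be $0$ or $I$, so both are rank $1$: we may write $\Pi_A|_{W_j}=\ket{\alpha_j}\bra{\alpha_j}$ and $\Pi_B|_{W_j}=\ket{\beta_j}\bra{\beta_j}$ for unit vectors $\ket{\alpha_j}\in A$ and $\ket{\beta_j}\in B$ with $\braket{\alpha_j}{\beta_j}=\cos\theta_j$ for some $\theta_j\in(0,\pi/2)$. Then $\Pi_A\Pi_B|_{W_j}=\cos\theta_j\,\ket{\alpha_j}\bra{\beta_j}$, so assembling over $j$ yields the SVD $D=\sum_{j=1}^{r}\cos\theta_j\ket{u_j}\bra{v_j}$ with $\ket{u_j}=\ket{\alpha_j}$ and $\ket{v_j}=\ket{\beta_j}$ (up to phase). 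Geometrically, $U|_{W_j}$ is a product of two reflections whose axes meet at angle $\theta_j$, i.e.\ a rotation by $\pm 2\theta_j$, whose eigenvalues are $e^{\pm 2i\theta_j}$.

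The final step is a direct verification that $\ket{\theta_j^\pm}=\ket{v_j}-e^{\pm i\theta_j}\ket{u_j}$ are eigenvectors of $U$ with eigenvalues $e^{\pm 2i\theta_j}$, using the four identities $\Pi_A\ket{u_j}=\ket{u_j}$, $\Pi_A\ket{v_j}=\cos\theta_j\ket{u_j}$, $\Pi_B\ket{v_j}=\ket{v_j}$, $\Pi_B\ket{u_j}=\cos\theta_j\ket{v_j}$. One applies $(2\Pi_B-I)$ first and simplifies the scalar $1-2e^{\pm i\theta_j}\cos\theta_j=-e^{\pm 2i\theta_j}$ via $2\cos\theta_j=e^{i\theta_j}+e^{-i\theta_j}$, then applies $(2\Pi_A-I)$ and uses the same identity again to collapse $-2e^{\pm 2i\theta_j}\cos\theta_j+e^{\pm i\theta_j}=-e^{\pm 3i\theta_j}$, recovering $e^{\pm 2i\theta_j}\ket{\theta_j^\pm}$. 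The main obstacle in the whole proof is justifying Jordan's lemma itself, i.e.\ producing the joint $2$-dimensional invariant blocks and verifying that they, together with the four trivial subspaces, exhaust the whole space. This can be done explicitly by computing the SVD of $D$ on the complement of the four trivial subspaces, observing via $D^\dagger D=\Pi_B\Pi_A\Pi_B$ that each nonzero right singular vector lies in $B$ and each left singular vector lies in $A$, so that $\mathrm{span}\{\ket{u_j},\ket{v_j}\}$ is jointly invariant under $\Pi_A$ and $\Pi_B$ and is mutually orthogonal to $\mathrm{span}\{\ket{u_k},\ket{v_k}\}$ for $k\neq j$ by the identity $\braket{u_j}{v_k}=\bra{u_j}D\ket{v_k}=\cos\theta_k\,\delta_{jk}$.
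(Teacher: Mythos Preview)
The paper does not prove this theorem; it is quoted from \cite{sze04} as a known result and used as a black box in \cref{sec:estimating-connectivity}. Your proposal supplies a correct, self-contained proof via the Jordan decomposition into jointly invariant $1$- and $2$-dimensional subspaces, with the explicit eigenvector verification carried out carefully. The identity $\braket{u_j}{v_k}=\bra{u_j}\Pi_A\Pi_B\ket{v_k}=\bra{u_j}D\ket{v_k}=\cos\theta_k\,\delta_{jk}$ (valid because $\ket{u_j}\in A$ and $\ket{v_k}\in B$) is exactly what is needed to show the $2$-dimensional blocks are mutually orthogonal, and your two scalar simplifications using $2\cos\theta=e^{i\theta}+e^{-i\theta}$ check out. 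So there is nothing to compare against here: you have written down the standard proof of Szegedy's spectral lemma, which the paper simply cites.
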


We derive several consequences of this theorem, and specialize them to our particular setting.

\begin{lemma}\label{lem:sze1}
Let  $U=(2\Pi_A-I)(2\Pi_B-I)$ and $D=\Pi_A\Pi_B$ be its discriminant. Then $\Delta(-U)=2\sin^{-1}(\sigma_{\min}(D))$. Moreover, when $G$ is a complete graph on $n$ vertices, we have for any $x$, $\lambda_2(G(x))=n\sin^2(\Delta(U(P,x))/2)$.
\end{lemma}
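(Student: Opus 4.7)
The plan is to derive the second statement as an application of the first, which is itself a direct consequence of \cref{thm:sze}. For the first claim, I would begin by invoking \cref{thm:sze} to describe the spectrum of $U = (2\Pi_A - I)(2\Pi_B - I)$: beyond the $\pm 1$ eigenvalues coming from the four intersection subspaces, the remaining eigenvalues are $e^{\pm 2i\theta_j}$, where $\cos\theta_j = \sigma_j(D)$ and $\theta_j \in (0,\pi/2)$. Negating $U$ shifts each phase by $\pi$, so an eigenvalue $e^{\pm 2i\theta_j}$ of $U$ becomes an eigenvalue of $-U$ with absolute phase $\pi - 2\theta_j$, a $+1$ eigenvalue of $U$ becomes a $-1$ eigenvalue of $-U$ (phase $\pi$), and a $-1$ eigenvalue of $U$ becomes a $+1$ eigenvalue of $-U$ (phase $0$, which is excluded from $\Delta$). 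Since $\pi - 2\theta_j < \pi$ whenever $\theta_j > 0$, the smallest nonzero phase of $-U$ is $\pi - 2\theta_{\max}$, where $\theta_{\max}$ corresponds to the smallest singular value of $D$ via $\theta_{\max} = \cos^{-1}(\sigma_{\min}(D))$. Applying the identity $\pi/2 - \cos^{-1}(x) = \sin^{-1}(x)$ then yields $\Delta(-U) = 2\sin^{-1}(\sigma_{\min}(D))$.

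To deduce the second claim, I would apply the first to the unitary
\begin{equation*}
V := (2\Pi_{\mathrm{row}(A)} - I)(2\Pi_{H(x)} - I).
\end{equation*}
Because $\Pi_{\mathrm{row}(A)} = I - \Pi_{\ker A}$, we have $2\Pi_{\mathrm{row}(A)} - I = -(2\Pi_{\ker A} - I)$, so $V = -U(P,x)$ and hence $\Delta(U(P,x)) = \Delta(-V) = 2\sin^{-1}(\sigma_{\min}(\Pi_{\mathrm{row}(A)}\Pi_{H(x)}))$. Equivalently, $\sin^2(\Delta(U(P,x))/2) = \sigma_{\min}(\Pi_{\mathrm{row}(A)}\Pi_{H(x)})^2$.

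It then remains to show that $\sigma_{\min}(\Pi_{\mathrm{row}(A)}\Pi_{H(x)})^2 = \lambda_2(G(x))/n$ when $G$ is the complete graph $K_n$. The key observation is that $L_{K_n} = nI - J$ has all its nonzero eigenvalues equal to $n$, so by \cref{eq:AL}, every nonzero singular value of $A$ equals $\sqrt{2n}$. Thus $A$ is a scaled isometry from $\mathrm{row}(A)$ onto $\mathrm{col}(A)$, which means $A^T A = 2n\,\Pi_{\mathrm{row}(A)}$. Substituting into $A(x)^T A(x) = \Pi_{H(x)} A^T A \Pi_{H(x)}$ gives $A(x)^T A(x) = 2n\,\Pi_{H(x)}\Pi_{\mathrm{row}(A)}\Pi_{H(x)}$, so the smallest nonzero eigenvalues of the two sides agree. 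The left side has smallest nonzero eigenvalue $\sigma_{\min}(A(x))^2 = 2\lambda_2(G(x))$ (again by \cref{eq:AL}), and the right side equals $2n\,\sigma_{\min}(\Pi_{\mathrm{row}(A)}\Pi_{H(x)})^2$. Combining with the previous paragraph yields $\lambda_2(G(x)) = n\sin^2(\Delta(U(P,x))/2)$.

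The main obstacle is the careful bookkeeping in the first part: identifying which subspaces contribute $\pm 1$ eigenvalues versus nontrivial phases, and verifying that the phase $\pi$ arising from $+1$ eigenvalues of $U$ is never smaller than the contribution $\pi - 2\theta_{\max}$ from $\sigma_{\min}(D)$. The second part is essentially algebraic once one notices that the uniform spectrum of $L_{K_n}$ on $\ket{\mu}^\perp$ collapses $A^T A$ to a scaled projector onto $\mathrm{row}(A)$, which is the only place the complete-graph hypothesis enters.
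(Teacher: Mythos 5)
Your proof is correct and follows essentially the same two-stage plan as the paper: apply \cref{thm:sze}, observe that negating $U$ swaps the $\pm1$-eigenspaces and converts each phase $2\theta_j$ into $\pi-2\theta_j$, and identify the discriminant of $-U(P,x)=(2\Pi_{\mathrm{row}(A)}-I)(2\Pi_{H(x)}-I)$ as $\Pi_{\mathrm{row}(A)}\Pi_{H(x)}$. The only place you diverge is in the second half: the paper constructs explicit singular value decompositions of $A$, $A^+$, and $A(x)$ with a common left-singular basis in $\ket{\mu}^\perp$ and then multiplies them out to read off the singular values of $D$, whereas you shortcut this by noting that the uniform nonzero spectrum of $L_{K_n}$ forces $A^T A = 2n\,\Pi_{\mathrm{row}(A)}$, so that $A(x)^T A(x) = 2n\,D^T D$ and the singular values transfer directly. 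Your version avoids the basis-matching bookkeeping (and the paper's slightly confusing ``right singular basis'' phrasing) while resting on exactly the same complete-graph observation, so I'd call it the same proof with a cleaner final calculation rather than a genuinely different route.
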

\begin{proof}
Assume without loss of generality that $\sigma_{\min}(D)=\cos\theta_1$. Since
$\cos$ is a decreasing function in the interval $[0,\frac{\pi}{2}]$, it
follows that $\theta_1\geq\theta_j$ $\forall j>1$. By \cref{thm:sze}, the
spectrum of  $U$ outside of its $(\pm 1)$-eigenspace is $\{e^{\pm2i\theta_j}\}_{j=1}^r$, hence the biggest
eigenphase smaller than $\pi$ in absolute value must be $2\theta_1$. In other words,
$\pi-2\theta_1$ is the phase gap of $-U$.

Let $\alpha$ be the complementary angle to $\theta_1$, i.e. $\pi/2=\alpha+\theta_1$. Then $2\alpha=\pi-2\theta_1=\Delta(-U)$. but $\cos(\theta_1)=\sin(\alpha)$, which means $\sigma_{\min}(D)=\sin(\alpha)=\sin(\Delta(-U)/2)$. The first result follows from applying the arcsine function on both sides of the last equality.

In particular, if we take $U=-U(P,x)=(2\Pi_{\mathrm{row}(A)}-I)(2\Pi_{H(x)}-I)$, then we have 
$$D=\Pi_{\mathrm{row}(A)}\Pi_{H(x)} = A^+A\Pi_{H(x)}=A^+A(x),$$
with $\sin(\Delta(U(P,x))/2) = \sigma_{\min}(D)$.

We have $AA^T = 2L_G$, for $G$ the complete graph. It is known that when $G$ is a complete graph $K_n$, $L_G$ has 0-eigenspace spanned by the uniform vector $\ket{\mu}=\frac{1}{\sqrt{n}}\sum_{u\in [n]}\ket{u}$, and moreover, we have
$$L_G= (n-1)I - (J-I) = nI - J = nI - n\ket{\mu}\bra{\mu} = n\sum_{i=1}^{n-1}\ket{b_i}\bra{b_i},$$
where $\{\ket{b_i}\}_{i=1}^{n-1}$ is any orthonormal basis for $\mathrm{span}\{\ket{\mu}\}^{\bot}$, which is $\mathrm{col}(A)$. This implies that for some orthonormal basis for $\mathrm{row}(A)$,  $\{\ket{\psi_i}\}_{i=1}^{n-1}$:
\begin{equation*}
A=\sum_{i=1}^{n-1}\sqrt{2n}\ket{b_i}\bra{\psi_i}.
\end{equation*}
Thus 
\begin{equation*}
A^+=\sum_{i=1}^{n-1}\frac{1}{\sqrt{2n}}\ket{\psi_i}\bra{b_i}.
\end{equation*}
Next, note that since $L_G\ket{\mu}=0$ for any $G$, and $2L_G=A(x)A(x)^T$, we
have $A(x)^T\ket{\mu}=0$, so the columnspace of $A(x)$ is in $\mathrm{span}\{\ket{\mu}\}^\bot$, and in particular, if $G(x)$ is connected, it's exactly $\mathrm{span}\{\ket{\mu}\}^\bot$.
The basis $\{\ket{b_i}\}_{i=1}^{n-1}$ can be chosen to be any basis of
$\ket{\mu}^\bot$, so let's choose it to be the
right singular basis of $A(x)$. That is, there exist
$\ket{\phi_i}$ and $\sigma_i$ such that
$$A(x)=\sum_{i=1}^{n-1}\sigma_i\ket{b_i}\bra{\phi_i}$$ is a singular value
decomposition for $A(x)$. Then we have:
\begin{equation*}
D = A^+A(x) = \sum_{i=1}^{n-1}\frac{\sigma_i}{\sqrt{2n}}\ket{\psi_i}\bra{\phi_i}.
\end{equation*}
Since $2L_{G(x)}=A(x)A(x)^T$, the $\sigma_i$ are just the square roots of twice the nonzero eigenvalues $\lambda_2,\dots,\lambda_n$ of $L_{G(x)}$, so the singular values of $D$ are
\begin{equation*}
\left\{ \sqrt{\frac{2\lambda_2}{2n}},\dots,\sqrt{\frac{2\lambda_n}{2n}} \right\}.
\end{equation*}
We conclude that $\sigma_{\min}(D)=\sqrt{\frac{\lambda_2(G(x))}{n}}$, which, combined with $\sigma_{\min}(D)=\sin(\Delta(U(P,x))/2)$, gives $\lambda_2(G(x))=n\sin^2(\Delta(U(P,x))/2)$.
\end{proof}

Another consequence of \cref{thm:sze} is the following, which allows us to restrict our attention to $\mathrm{row}(A)$ in searching for the smallest phase of $U(P,x)$:
\begin{lemma}\label{lem:sze2}
Let $U=(2\Pi_A-I)(2\Pi_B-I)$, and let $\ket{\Delta_+}$ be a $\Delta(U)$-phase eigenvector of $U$, and $\ket{\Delta_-}$ a $(-\Delta(U))$-phase eigenvector of $U$. Then there exists a a vector $\ket{u}$ in the support of $A$ such that $\ket{u}\in\mathrm{span}\{\ket{\Delta_+},\ket{\Delta_-}\}$. In particular, if $\ket{\Delta_\pm}$ are $\pm\Delta(U(P,x))$-phase eigenvectors of $U(P,x)$, then there exists a vector $\ket{u}$ in $\mathrm{row}(A)$ such that $\ket{u}\in \mathrm{span}\{\ket{\Delta_+},\ket{\Delta_-}\}$. 
\end{lemma}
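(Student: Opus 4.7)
The natural approach is to apply Szegedy's theorem (\cref{thm:sze}) to $U=(2\Pi_A-I)(2\Pi_B-I)$ to get explicit formulas for the $\pm\Delta(U)$-phase eigenvectors in the singular-value basis of the discriminant $D=\Pi_A\Pi_B$, and then exhibit a specific linear combination of $\ket{\Delta_+}$ and $\ket{\Delta_-}$ that lies in $A$.

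\textbf{First step.} Write $D=\sum_j \cos\theta_j\ket{u_j}\bra{v_j}$ with $\theta_j\in[0,\pi/2)$, $\ket{u_j}\in A$, and $\ket{v_j}\in B$. By \cref{thm:sze}, the $e^{\pm 2i\theta_j}$-eigenvectors of $U$ are $\ket{v_j}-e^{\pm i\theta_j}\ket{u_j}$, and the remaining eigenvalues $\pm 1$ come from $A\cap B$, $A^\perp\cap B^\perp$, $A\cap B^\perp$, $A^\perp\cap B$. Note that each 2-dimensional subspace $W_j:=\mathrm{span}\{\ket{u_j},\ket{v_j}\}$ is invariant under $U$ and contains one eigenvector from each of the $e^{\pm 2i\theta_j}$-eigenspaces.

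\textbf{Second step.} Since $\Delta(U)$ is the smallest nonzero phase, $\theta_\star:=\Delta(U)/2$ lies in $(0,\pi/2)$ and equals some subset of $\theta_j$'s; let $J=\{j:\theta_j=\theta_\star\}$. Then
\begin{equation*}
\ket{\Delta_+}=\sum_{j\in J} c_j\bigl(\ket{v_j}-e^{i\theta_\star}\ket{u_j}\bigr),\qquad \ket{\Delta_-}=\sum_{j\in J} d_j\bigl(\ket{v_j}-e^{-i\theta_\star}\ket{u_j}\bigr),
\end{equation*}
for some coefficients $c_j,d_j$. Setting $\ket{u^\pm}=\sum_{j\in J}c^{\pm}_j\ket{u_j}\in A$ and $\ket{v^\pm}=\sum_{j\in J}c^{\pm}_j\ket{v_j}\in B$ (with the obvious identification $c^+_j=c_j$, $c^-_j=d_j$), any $(a,b)$ gives
\begin{equation*}
a\ket{\Delta_+}+b\ket{\Delta_-}=\bigl(a\ket{v^+}+b\ket{v^-}\bigr)-e^{i\theta_\star}a\ket{u^+}-e^{-i\theta_\star}b\ket{u^-}.
\end{equation*}
Since $\ket{u^\pm}\in A$, it suffices to choose $(a,b)$ nontrivial with $a\ket{v^+}+b\ket{v^-}=0$. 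The resulting nonzero vector lies in $A$, and is nonzero because $\sin\theta_\star\neq 0$.

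\textbf{Third step: specialization.} Note that $U(P,x)=(2\Pi_{\ker A}-I)(2\Pi_{H(x)}-I)=-(2\Pi_{\mathrm{row}(A)}-I)(2\Pi_{H(x)}-I)$, so $U(P,x)$ shares eigenvectors with $U':=(2\Pi_{\mathrm{row}(A)}-I)(2\Pi_{H(x)}-I)$ (their eigenvalues differ by $-1$). Applying the generic statement with subspaces $\mathrm{row}(A)$ and $H(x)$ then produces a vector $\ket{u}\in\mathrm{row}(A)\cap\mathrm{span}\{\ket{\Delta_+},\ket{\Delta_-}\}$, as claimed.

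\textbf{Main obstacle.} The delicate point is degeneracy: if $|J|>1$, an arbitrary pair $(\ket{\Delta_+},\ket{\Delta_-})$ has coefficient vectors $(c_j)_{j\in J}$ and $(d_j)_{j\in J}$ that may be linearly independent, in which case no nontrivial $(a,b)$ kills $a\ket{v^+}+b\ket{v^-}$ once $\ket{v_j}\notin A$. The resolution is to use the freedom in choosing the pair: within each Szegedy invariant subspace $W_j$, the two eigenvectors $\ket{v_j}-e^{\pm i\theta_\star}\ket{u_j}$ are canonically paired, so for any $\ket{\Delta_+}$ one can pick the matched $\ket{\Delta_-}$ (i.e.\ with $d_j$ proportional to $c_j$), ensuring $\ket{v^+}$ and $\ket{v^-}$ are proportional and hence cancellable. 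This matched choice is exactly what the downstream algorithm accesses when amplitude-estimating the $\pm\Delta(U)$-eigenspaces.
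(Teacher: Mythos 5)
Your first two steps recover the heart of the paper's argument: apply \cref{thm:sze}, observe that for each $j$ the vector $\ket{u_j}\in A$ satisfies $\ket{u_j}\propto\ket{\theta_j^+}-\ket{\theta_j^-}$ (indeed $\ket{\theta_j^+}-\ket{\theta_j^-}=(e^{-i\theta_j}-e^{i\theta_j})\ket{u_j}=-2i\sin\theta_j\,\ket{u_j}$), and then specialize to a suitable $j$. Your ``main obstacle'' paragraph is a useful clarification that the paper leaves implicit: the statement must be read existentially (there \emph{exist} eigenvectors $\ket{\Delta_\pm}$ whose span meets $A$), and the right choice is the canonically matched Szegedy pair $\ket{\theta_j^\pm}$ inside a single two-dimensional invariant block.

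The gap is in your third step. You write that because $U(P,x)=-U'$ with $U'=(2\Pi_{\mathrm{row}(A)}-I)(2\Pi_{H(x)}-I)$, the two operators share eigenvectors, and then apply the generic statement to $U'$. But flipping the sign shifts every eigenphase by $\pi$, and the phase gap does \emph{not} commute with this shift: writing the nonzero Szegedy angles of $U'$ as $\theta_j\in(0,\pi/2)$, one has $\Delta(U')=2\min_j\theta_j$, whereas $\Delta(U(P,x))=\pi-2\max_j\theta_j$ (equivalently $\cos\theta_{j^*}=\sigma_{\min}(D)$, which is exactly how \cref{lem:sze1} is proved). So the $\pm\Delta(U')$-phase eigenvectors of $U'$ live in the Szegedy block indexed by the \emph{minimizer}, while the $\pm\Delta(U(P,x))$-phase eigenvectors of $U(P,x)$ live in the block indexed by the \emph{maximizer}; unless all $\theta_j$ coincide, these are different subspaces, so the vector produced by quoting the generic statement for $U'$ does not lie in $\mathrm{span}\{\ket{\Delta_+},\ket{\Delta_-}\}$. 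The repair is to avoid quoting the $\Delta(U')$-statement and instead invoke the ``for any $j$'' identity directly: apply \cref{thm:sze} to $U'$, take the $j^*$ with $\theta_{j^*}$ maximal (so $\cos\theta_{j^*}=\sigma_{\min}(D)$ and $\pi-2\theta_{j^*}=\Delta(U(P,x))$), note $\ket{u_{j^*}}\in\mathrm{row}(A)$, and observe that $\ket{\theta_{j^*}^\pm}$, being $e^{\pm 2i\theta_{j^*}}$-eigenvectors of $U'$, are $e^{\mp i\Delta(U(P,x))}$-eigenvectors of $U(P,x)=-U'$; then $\ket{u_{j^*}}\propto\ket{\theta_{j^*}^+}-\ket{\theta_{j^*}^-}$ finishes the argument. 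For what it is worth, the paper's own proof is equally terse on this point---it records the for-any-$j$ identity and says ``the statement follows,'' while naming only the $\theta_j$-\emph{minimizer}, which is the right choice for the generic $U$ but the wrong one for $U(P,x)$---so your instinct to be more explicit was right, but the sign flip needs to be handled at the level of which $j$ is selected, not papered over by citing the generic statement verbatim.
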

\begin{proof}
Let $\theta_j$, $\ket{u_j}$, $\ket{\theta_j^+}$ and $\ket{\theta_j^-}$ be as in \cref{thm:sze}, so in particular, $\ket{u_j}$ is in the support of $\Pi_A$. 
Note that for any $j$, we have 
\begin{equation}
\ket{u_j} = \frac{1}{2i\sin\theta_j}\left(\ket{\theta_j^+} - \ket{\theta_j^-}\right).
\end{equation}
In particular, this is true for the $j$ such that $\theta_j$ is minimized, i.e. such that $\theta_j=\Delta(U)$. The statement follows. 
\end{proof}

\paragraph{Algorithm for estimating $\Delta(U(P,x))$} We will actually estimate the value $\tau=\Delta(U(P,x))/\pi$, getting an estimate in $[0,1]$, which we will then transform into an estimate of $\lambda_2(G(x))$. At every iteration,
$c$ will denote a lower bound for $\tau$ and $C$ will denote the current
upper bound. At the beginning of the algorithm we have
$c=0$, $C=1$, and every iteration will result in updating either $C$ or $c$ in
such a manner that the new interval for $\tau$ is reduce by a
fraction of $2/3$. The algorithm is described in \cref{alg:est-alg}.

\begin{algorithm}\label{alg:est-alg} To begin, let $c=0$ and $C=1$.
\begin{enumerate}
\item Set $\varphi=\frac{C-c}{3}$, $\eps=\frac{1}{\sqrt{2n}}$, $\delta=c+\varphi$.
\item For $j=1,\dots, 4\log (n/\varepsilon)$:
\begin{enumerate}
\item Prepare $\sum_{i=1}^{n-1}\frac{1}{\sqrt{n-1}}\ket{i}\ket{\psi_i}\ket{0}_C\ket{0}_P$.
\item Perform the gapped phase estimation algorithm $GPE(\varphi,\eps,\delta)$ of \cref{thm:gapped-phase-estimation} applying $U(P,x)$ on the second register.
\item Use amplitude estimation (see \cref{thm:amplitude-estimation}) to distinguish between the case when the amplitude on $\ket{0}_C$ is $\geq \frac{1}{\sqrt{n}}$, in which case output ``$a_j=0$'', and the case where the ampltiude is $\leq \frac{1}{\sqrt{2n}}$, in which case, output ``$a_j=1$''.
\end{enumerate}
\item Compute $\tilde{a}=Maj(a_1,\dots,a_{4\log (n/\varepsilon)})$.  If the result is $0$, set $C= \delta+\varphi$. If the result is $1$, set $c=\delta$. If $C-c\leq 2\varepsilon c$, then output $n\sin^2\left(\frac{\pi(C+c)}{4}\right)$. Otherwise, return to Step 1.
\end{enumerate}
\end{algorithm}

\paragraph{Analysis of the algorithm} 

We say an iteration of the algorithm \emph{succeeds} if $\tilde a = Maj(a_1,\dots,a_{4\log(n/\varepsilon)})$ correctly indicates whether the amplitude on $\ket{0}_C$ is $\geq \frac{1}{\sqrt{n}}$ or $\leq \frac{1}{\sqrt{2n}}$. This happens with probability $\Omega(1-(\varepsilon/n)^4)$. Since we will shortly see that the algorithm runs for at most $\tO\left(\frac{n}{\varepsilon\sqrt{\lambda_2(G(x))}}\right)\leq \tO\left(\frac{n^2}{\varepsilon}\right)$ steps, the probability that every iteration succeeds is at least 
\begin{equation}
\Omega\left( \left(1-(\varepsilon/n)^4\right)^{(n/\varepsilon)^2} \right) = \Omega\left(1-(\varepsilon/n)^4(n/\varepsilon)^2\right)=\Omega\left(1-(\varepsilon/n)^2\right).
\end{equation}
It is therefore reasonable to assume that every iteration succeeds, since this happens with high probability.
We first prove that if every iteration succeeds, throughout the algorithm we have $\tau=\Delta(U(P,x))/\pi \in [c,C]$. 

\begin{lemma}
Let $\tau=\Delta(U(P,x))/\pi$.  For any $\varphi$ and $\delta$, if $\tau\geq \delta+\varphi$, applying $GPE(\varphi,\epsilon,\delta)$ to $\ket{\psi_{\mathrm{init}}}$ results in a state with amplitude at most $\frac{1}{\sqrt{2n}}$ on $\ket{0}_C$ in register $C$; and if $\tau\leq \delta$, this results in a state with amplitude at least $\frac{1}{\sqrt{n}}$ on $\ket{0}_C$ in register $C$. Thus, if every iteration succeeds, at every iteration, we have $\tau\in [c,C]$. 
\end{lemma}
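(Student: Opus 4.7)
The plan is to decompose $\ket{\psi_{\mathrm{init}}}$ in an orthonormal eigenbasis $\{\ket{v_\theta}\}$ of $U(P,x)$ with $U(P,x)\ket{v_\theta}=e^{i\pi\theta}\ket{v_\theta}$, write $\ket{\psi_i}=\sum_\theta\alpha_{i,\theta}\ket{v_\theta}$, and track $GPE(\varphi,\eps,\delta)$ eigenvector-by-eigenvector via \cref{thm:gapped-phase-estimation}. The identity I will exploit is that, since $\{\ket{\psi_i}\}_{i=1}^{n-1}$ is an orthonormal basis of $\mathrm{row}(A)$, $\sum_i|\alpha_{i,\theta}|^2=\bra{v_\theta}\Pi_{\mathrm{row}(A)}\ket{v_\theta}=\|\Pi_{\mathrm{row}(A)}\ket{v_\theta}\|^2$. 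Using this and the orthogonality of $\ket{v_\theta}$ across distinct $\theta$, a direct norm computation will yield that the squared amplitude on $\ket{0}_C$ after GPE equals
\begin{equation*}
\frac{1}{n-1}\sum_\theta |\beta_0^{(\theta)}|^2\,\|\Pi_{\mathrm{row}(A)}\ket{v_\theta}\|^2.
\end{equation*}

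For the upper bound ($\tau\geq\delta+\varphi$), I will use that in the estimation setting $G(x)$ is implicitly connected, so \cref{lem:kappa} rules out any $0$-phase component in $\mathrm{row}(A)$. Every remaining $\theta$ contributing to the sum then satisfies $|\theta|\geq\tau\geq\delta+\varphi$, hence $|\beta_0^{(\theta)}|\leq\eps$ by \cref{thm:gapped-phase-estimation}. Combined with $\sum_\theta\|\Pi_{\mathrm{row}(A)}\ket{v_\theta}\|^2=\dim\mathrm{row}(A)=n-1$, the squared amplitude is at most $\eps^2=1/(2n)$, giving amplitude $\leq 1/\sqrt{2n}$.

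For the lower bound ($\tau\leq\delta$), I will invoke \cref{lem:sze2} to produce a unit vector $\ket{u}\in\mathrm{row}(A)\cap\mathrm{span}\{\ket{\Delta_+},\ket{\Delta_-}\}$, where $\ket{\Delta_\pm}$ are $\pm\pi\tau$-phase eigenvectors, and include these in the eigenbasis. The main obstacle will be establishing $\|\Pi_{\mathrm{row}(A)}\ket{\Delta_+}\|^2+\|\Pi_{\mathrm{row}(A)}\ket{\Delta_-}\|^2\geq 1$; my plan is to rewrite this sum as $\mathrm{Tr}(P_V\Pi_{\mathrm{row}(A)}P_V)$, where $P_V$ projects onto $\mathrm{span}\{\ket{\Delta_+},\ket{\Delta_-}\}$, then note that this PSD operator has $\ket{u}$ as a $1$-eigenvector (since $P_V\ket{u}=\ket{u}$ and $\Pi_{\mathrm{row}(A)}\ket{u}=\ket{u}$), so its trace as a sum of nonnegative eigenvalues is at least $1$. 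Retaining only the $\theta=\pm\tau$ terms and using $|\beta_0^{(\pm\tau)}|^2\geq 1-\eps^2$ from \cref{thm:gapped-phase-estimation} then gives squared amplitude at least $(1-\eps^2)/(n-1)$, which a brief computation will show is $\geq 1/n$ for $\eps=1/\sqrt{2n}$, yielding amplitude $\geq 1/\sqrt{n}$.

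The final claim that $\tau\in[c,C]$ at every iteration will follow by induction. The base case $c=0,C=1$ trivially contains $\tau\in(0,1]$. For the inductive step, if $\tilde a=0$ then with high probability the measured amplitude on $\ket{0}_C$ exceeds $1/\sqrt{2n}$, which by the upper bound just proved rules out $\tau\geq\delta+\varphi$, so updating $C\leftarrow\delta+\varphi$ preserves $\tau\leq C$; dually, $\tilde a=1$ rules out $\tau\leq\delta$ via the lower bound, so $c\leftarrow\delta$ preserves $\tau\geq c$.
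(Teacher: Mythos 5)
Your proof is correct, and it takes a genuinely different route from the paper's. You decompose $\ket{\psi_{\mathrm{init}}}$ in an orthonormal eigenbasis $\{\ket{v_\theta}\}$ of $U(P,x)$ and derive the clean closed form
$\frac{1}{n-1}\sum_\theta|\beta_0^{(\theta)}|^2\,\|\Pi_{\mathrm{row}(A)}\ket{v_\theta}\|^2$
for the squared amplitude on $\ket{0}_C$, then bound the overlap of $\mathrm{row}(A)$ with eigenspaces of $U(P,x)$. The paper instead exploits the unitary freedom in the maximally entangled state $\ket{\psi_{\mathrm{init}}}$: since \emph{any} orthonormal basis $\{\ket{u_j}\}$ of $\mathrm{row}(A)$ gives a decomposition $\ket{\psi_{\mathrm{init}}}=\frac{1}{\sqrt{n-1}}\sum_j\ket{\tilde j}\ket{u_j}$ with $\{\ket{\tilde j}\}$ orthonormal, it rotates to a convenient basis in each direction. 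For the upper bound it uses linearity over any such basis (since, by \cref{lem:kappa} combined with \cref{thm:sze}, the $1$-eigenspace lies in $\ker A$, every $\ket{u_j}$ is supported on phases $\geq\delta+\varphi$); for the lower bound it puts the vector $\ket{u_1}$ from \cref{lem:sze2} first and simply discards the other $n-2$ terms. Your trace argument for $\|\Pi_{\mathrm{row}(A)}\ket{\Delta_+}\|^2+\|\Pi_{\mathrm{row}(A)}\ket{\Delta_-}\|^2\geq 1$ is a bit more machinery than the paper needs (the paper's basis rotation sidesteps this overlap bound entirely), but it is sound and yields the same $(1-\eps^2)/(n-1)\geq 1/n$. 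Your inductive step via the two contrapositives is logically equivalent to, and arguably cleaner than, the paper's explicit three-way case split on whether $\tau$ is below $\delta$, above $\delta+\varphi$, or in between. One small gap worth flagging: \cref{lem:kappa} alone says there is no vector in $\mathrm{row}(A)$ \emph{fixed} by $U(P,x)$; to conclude that every $0$-phase \emph{eigenvector} is orthogonal to $\mathrm{row}(A)$ (i.e., $\|\Pi_{\mathrm{row}(A)}\ket{v_0}\|^2=0$), you still need \cref{thm:sze}'s characterization of the $1$-eigenspace as $(\ker A\cap H(x))\oplus(\mathrm{row}(A)\cap H(x)^\bot)$ together with \cref{lem:kappa} killing the second summand. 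The paper spells this out; you should too.
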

\begin{proof}
First, suppose $\tau\geq \delta+\varphi$. By \cref{lem:kappa}, when $G(x)$ is connected there is no vector in $\mathrm{row}(A)$ in the $1$-eigenspace of $U(P,x)$, so the 1-eigenspace of $U(P,x)$ is contained in $\mathrm{ker}(A)\cap H(x)\subseteq \ker A$. Thus each $\ket{u_j}$ is in the span of $e^{i\pi\theta}$-eigenvectors of $U(P,x)$ with $|\theta|\geq \delta+\varphi$. 
Thus, applying $GPE(\varphi,\epsilon,\delta)$ will map each $\ket{u_j}_R\ket{0}_C\ket{0}_P$ to a state $\beta_0\ket{0}_C\ket{\gamma_0}_{PR}+\beta_1\ket{1}_C\ket{\gamma_1}_{PR}$ such that $|\beta_0|\leq \eps$. Then, by linearity, the total amplitude on $\ket{0}_C$ in register $C$ will be at most $\epsilon=\frac{1}{\sqrt{2n}}$. 

On the other hand, suppose $\tau\leq \delta$. By \cref{lem:sze2}, there exists a vector $\ket{u_1}\in\mathrm{row}(A)$ such that $\ket{u_1}$ is in the span of the $e^{\pm i\pi\tau}$-eigenvectors of $U(P,x)$. Applying $GPE(\varphi,\epsilon,\delta)$ will map $\ket{u_1}_R\ket{0}_C\ket{0}_P$ to a state $\beta_0\ket{0}_C\ket{\gamma_0}_{PR}+\beta_1\ket{1}_C\ket{\gamma_1}_{PR}$ such that $|\beta_1|\leq \eps$, so $|\beta_0|\geq \sqrt{1-\eps^2}$. 
Let $\ket{u_2},\dots,\ket{u_{n-1}}$ be any orthonormal set such that $\ket{u_1},\dots,\ket{u_{n-1}}$ is an orthonormal basis for $\mathrm{row}(A)$. 
Then there exists some (unknown) orthonormal set $\{\ket{\tilde j}\}_{j=1}^{n-1}$ such that 
\begin{equation}
\ket{\psi_{\mathrm{init}}}=\frac{1}{\sqrt{n-1}}\sum_{j=1}^{n-1}\ket{\tilde{j}}\ket{u_j}\ket{0}_C\ket{0}_P.
\end{equation}
So after applying $GPE(\varphi,\epsilon,\delta)$ to $\ket{\psi_{\mathrm{init}}}$, the amplitude on $\ket{0}_C$ in register $C$ will be at least $\sqrt{\frac{1-\eps^2}{n-1}}\geq \frac{1}{\sqrt{n}}$. This proves the first part of the statement. 

By \cref{cor:amplitude-estimation}, we can distinguish the case when the amplitude on $\ket{0}_C$ is at least $\frac{1}{\sqrt{n}}$ or at most $\frac{1}{\sqrt{2n}}$ with
bounded error using $O(\frac{\sqrt{p_0}}{p_0-p_1})=O(\sqrt{n})$ calls to
$GPE(\varphi,\frac{1}{\sqrt{2n}},\delta)$ where $p_0:=\frac{1}{n}$
and $p_1:=\frac{1}{2n}$. Thus, by repeating the procedure $4\log(n/\varepsilon)$ times and taking the majority, we succeed at every iteration with high probability.

We now prove by induction that we always have $\tau\in [c,C]$, as long as every iteration succeeds. At the beginning of the first iteration, we have $[c,C]=[0,1]$. Since $\Delta(U(P,x))\in [0,\pi]$, $\tau\in [0,1]$. Next, suppose in some arbitrary iteration, we have $\tau\in [c,C]$. If $\tau\geq \delta+\varphi$, then there will be amplitude at most $\frac{1}{\sqrt{2n}}$ on $\ket{0}_C$, and assuming the iteration succeeds, we will have $\tilde{a}=1$. In that case, we will set $c=\delta\leq \delta+\varphi\leq \tau$, so we will still have $\tau\in [c,C]$. If $\tau\leq \delta$, then there will be amplitude at least $\frac{1}{\sqrt{n}}$ on $\ket{0}_C$, and assuming the iteration succeeds, we will have $\tilde{a}=0$. In that case, we will set $C=\delta+\varphi\geq \delta\geq \tau$, so we will still have $\tau\in [c,C]$. 

We finally consider what happens if $\delta\leq \tau\leq \delta +\varphi$. In that case, there is no guarantee on the output of amplitude estimation; it can either output 0 or 1. However, we can still use the result to update our bounds for $\tau$. If we get $\tilde a=1$, and set $c=\delta$, we have $\delta\leq \tau$, so $\tau\in [c,C]$. If we get $\tilde a=0$, and set $C=\delta+\varphi$, we have $\delta+\varphi\geq \tau$, so $\tau\in [c,C]$. 
\end{proof}

Next, we analyze the running time of \cref{alg:est-alg}. 

\begin{theorem}
With probability $\Omega(1-(\varepsilon/n)^2)$, \cref{alg:est-alg} will terminate after time $\widetilde{O}\left(\frac{n}{\varepsilon\sqrt{\lambda_2(G(x))}}\right)$.
\end{theorem}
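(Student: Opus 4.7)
The total running time has two factors: the cost per iteration, and the number of iterations, so I will analyze these in turn. For the cost of a single iteration with parameters $(\varphi,\eps,\delta)$, observe that $GPE(\varphi,\eps,\delta)$ makes $O(\varphi^{-1}\log\eps^{-1})$ calls to $U(P,x)$ by \cref{thm:gapped-phase-estimation}, and for $G=K_n$ each call to $U(P,x)$ costs $\widetilde O(1)$ by \cite{JK2017}. The amplitude estimation step in 2(c) distinguishes amplitudes $\sqrt{p_0}=1/\sqrt{n}$ and $\sqrt{p_1}=1/\sqrt{2n}$, which by \cref{cor:amplitude-estimation} costs $O(\sqrt{n})$ calls to the GPE procedure. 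The $O(\log(n/\eps))$ majority voting adds only a logarithmic factor. Hence iteration $k$ costs $\widetilde O(\sqrt n/\varphi_k)$, together with $O(\log n)$ space for preparing the initial state $\sum_i\ket{i}\ket{\psi_i}/\sqrt{n-1}$ (any orthonormal basis of $\mathrm{row}(A)$ works, e.g.\ the Fourier basis of \cref{sec:Cayley-graph} restricted to the complete graph).

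Next I analyze the number of iterations. An easy induction shows that after $k$ iterations of Step 3 the length of the interval satisfies $C_k-c_k=(2/3)^k$, so $\varphi_k=(2/3)^{k-1}/3$. The stopping criterion is $C-c\le 2\eps c$; combined with the invariant $\tau\in[c,C]$ (established in the preceding lemma, conditional on every iteration succeeding), at the iteration $K$ of termination we have $(2/3)^K\le 2\eps c_K\le 2\eps\tau$, and by minimality of $K$ the preceding iteration did not terminate, giving $(2/3)^{K-1}>2\eps c_{K-1}$. Since $\tau\le C_{K-1}=c_{K-1}+(2/3)^{K-1}$, we obtain $c_{K-1}\ge \tau - (2/3)^{K-1}$, and elementary manipulation yields $K=O(\log(1/(\eps\tau)))$, so the total running time, being a geometric series dominated by its last term, is
\begin{equation*}
\sum_{k=1}^K \widetilde O\!\left(\frac{\sqrt n}{\varphi_k}\right)=\widetilde O\!\left(\frac{\sqrt n}{\varphi_K}\right)=\widetilde O\!\left(\frac{\sqrt n}{\eps\tau}\right).
\end{equation*}

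To convert this bound to the one in the statement, I invoke \cref{lem:sze1}, which gives $\lambda_2(G(x))=n\sin^2(\pi\tau/2)$, equivalently $\tau=(2/\pi)\sin^{-1}\!\sqrt{\lambda_2(G(x))/n}$. Since $\sqrt{\lambda_2(G(x))/n}\le 1$ and $\sin^{-1}(y)\ge y$ for $y\in[0,1]$, we have $\tau\ge (2/\pi)\sqrt{\lambda_2(G(x))/n}$, and therefore $1/\tau = O(\sqrt{n/\lambda_2(G(x))})$. Plugging into the previous bound gives the claimed time $\widetilde O\bigl(n/(\eps\sqrt{\lambda_2(G(x))})\bigr)$.

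Finally, the probability bound: Chernoff on the $4\log(n/\eps)$ independent amplitude-estimation trials inside a single iteration makes the majority $\tilde a$ correct with probability at least $1-(\eps/n)^4$. Since the number of iterations is at most $K=O(\log(n/(\eps\sqrt{\lambda_2(G(x))})))\le \widetilde O(n^2/\eps)$, a union bound over all iterations gives overall success probability at least $1-(\eps/n)^4\cdot \widetilde O(n^2/\eps)=\Omega(1-(\eps/n)^2)$, conditional on which the invariant $\tau\in[c,C]$ holds throughout and the above time analysis applies. The main technical care point is getting the multiplicative (rather than additive) precision guarantee right: this is why the termination rule uses $2\eps c$ and why the relation $\tau\ge c$ inherited from the invariant is crucial; everything else is essentially bookkeeping on a geometric sum.
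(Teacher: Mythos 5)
Your proposal is correct and follows the same high-level structure as the paper's proof: bound the per-iteration cost by $\widetilde O(\sqrt n/\varphi_k)$, sum the geometric series (dominated by the last term), then translate $\tau$ to $\lambda_2(G(x))$ via $\lambda_2(G(x))=n\sin^2(\pi\tau/2)$. Where you diverge is in bounding the number of iterations. The paper fixes $T$ to be the least integer with $(2/3)^T\le\tau\varepsilon/2$, and then shows by contradiction that the stopping test must pass by iteration $T$: the key observation is that once $c$ first becomes nonzero it is set to $C/3$, so the ratio $C/c\le 3$ is established and thereafter can only decrease, which is incompatible with $C-c\ge 2\varepsilon c$ and $(2/3)^T\le\tau\varepsilon/2$ both holding. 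You instead condition on the termination iteration $K$ and derive $K=O(\log(1/(\varepsilon\tau)))$ from $(2/3)^{K-1}>2\varepsilon c_{K-1}$ together with $c_{K-1}\ge\tau-(2/3)^{K-1}$. Your manipulation is sound — it gives $(2/3)^{K-1}>2\varepsilon\tau/(1+2\varepsilon)$ — and is arguably more direct than the paper's ratio invariant, but as written it presupposes that the algorithm halts. To close the loop you should note that the same inequality, read contrapositively, shows the stopping condition must be triggered once $(2/3)^k\le 2\varepsilon\tau/(1+2\varepsilon)$, so termination is guaranteed; alternatively, the paper's $C/c\le 3$ invariant does this work for you. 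Two small presentation points: you write ``$O(\log n)$ space'' for the initial-state cost where you mean time (the paper cites \cref{lem:cayley-init} and \cref{cor:complete-graph} for an $O(\log n\log(n/\varepsilon))$ preparation cost), and ``elementary manipulation yields $K=O(\log(1/(\varepsilon\tau)))$'' deserves the one-line algebra above rather than a wave of the hand.
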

\begin{proof}
With probability $\Omega(1-(\varepsilon/n)^2)$, each of the first $(n/\varepsilon)^2\geq \widetilde{O}\left(\frac{n}{\varepsilon\sqrt{\lambda_2(G(x))}}\right)$ iterations of the algorithm will succeed, so we assume this to be the case. 
We first bound the number of (successful) iterations before the algorithm terminates. 
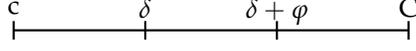
\begin{figure}[h]
\centering
\begin{tikzpicture}[scale=7]
    
\draw[-, thick] (0,0) -- (.75,0);
    \draw[thick] (0,0.5pt) -- (0,-0.5pt);
    \draw[thick] (0,1.2pt) node{c};
    \draw[thick] (0.25,0.5pt) -- (0.25,-0.5pt);
    \draw[thick] (0.25,1.2pt) node{$\delta$};
            \draw[thick] (0.5,0.5pt) -- (0.5,-0.5pt);
    \draw[thick] (0.5,1pt) node{$\delta+\varphi$};
    \draw[thick] (.75,0.5pt) -- (.75,-0.5pt);
    \draw[thick] (.75,1.2pt) node{C};
\end{tikzpicture}
\caption{The interval $[c,C]$}
\label{ConnEstimation}
\end{figure}
 \cref{ConnEstimation} shows the interval $[c,C]$, which represents the algorithm's current state of knowledge of where $\tau=\Delta(U(P,x))/\pi$ lies. The values $\delta$ and $\delta+\varphi$ are at $1/3$ and $2/3$ of the interval, respectively.  
it is not difficult to convince
ourselves that at every iteration, the interval $[c,C]$ will be $2/3$
the size it had in the previous iteration. Hence, since the interval initially has length 1, after $k$ iterations, we will have
an interval of size $\left(\frac{2}{3}\right)^k$. The execution terminates when the interval becomes sufficiently small. Specifically, let $T$ be the smallest integer such that $T\geq\frac{\log\frac{2}{\tau\varepsilon}}{\log\frac{3}{2}}$, and let $[c,C]$ be the interval after $T$ steps, so $C-c=(2/3)^T\leq \tau\varepsilon/2$. Suppose $(2/3)^T=C-c\geq 2\varepsilon c$, so $\tau\geq 4c$. This implies that $C\geq 4c$, so $C/c\geq 4$. We will argue that this is a contradiction. 

First, suppose $c=0$. That means that 
$$\tau\leq C=(2/3)^T\leq \tau\varepsilon/2\leq \tau/2,$$
which is a contradiction, since $\tau>0$. Thus, we must have $c>0$. Consider the first setting of $c$ and $C$ such that $c\neq 0$. Since the previous value of $c$ was 0, we set the new value as $c=\delta = 0+\varphi=(C-0)/3=C/3$, so the ratio $C/c$ satisfies $C/c=3$. This ratio can only decrease, because subsequent steps either decrease $C$, or increase $c$. Thus, after T steps, $C/c\leq 3$. Thus, $C-c\geq 2\varepsilon c$ leads to a contradiction, so we can conclude that after $T$ steps, $C-c\leq 2\varepsilon c$, so the algorithm terminates in at most $T$ steps.  

We can now analyze the total running time by adding up the cost of all iterations. 
Step 1 of the algorithm is defining the variables $\varphi=\frac{C-c}{4}$, $\delta=c+\varphi$ and
$\eps=\frac{1}{\sqrt{2n}}$, which will contribute negligibly to the complexity. 

Step 2(a) begins by constructing the initial state $\ket{\psi_{\mathrm{init}}}=\sum_{i=1}^{n-1}
\frac{1}{\sqrt{n-1}}\ket{i}\ket{\psi_i}\ket{0}_P\ket{0}_C$ where
$\{\ket{\psi_i}\}_{i=1}^{n-1}$ is a basis of $\mathrm{row}(A)$. Because
$G=K_n$ can be seen as a particularly simple kind of Cayley graph with group
$\Gamma=\mathbb{Z}/n\mathbb{Z}$ and $S=\Gamma\setminus\{0\}$, we can use the construction of
\cref{sec:Cayley-graph} to generate  $\ket{\psi_{\mathrm{init}}}$. In fact,
combining the remarks in the proof of \cref{cor:complete-graph} with 
 \cref{lem:cayley-init} it follows that this state can be constructed in time $O(\log n
\log \frac{n}{\varepsilon})$ with success probability $1-(\varepsilon/n)^4$. 

Step 2(b) consists of applying the unitary procedure $GPE(\varphi,\eps,\delta)$
described in \cref{thm:gapped-phase-estimation} on the last three registers
with $\varphi$, $\eps=\frac{1}{\sqrt{2n}}$, $\delta$ defined in Step 1. By \cref{thm:gapped-phase-estimation}, this makes $O(\varphi^{-1}\log\eps^{-1})=O(\varphi^{-1}\log n)$ calls to $U(P,x)$, for a total query complexity of $O(\varphi^{-1}\log^2n)$.

Step 2(c) then uses amplitude estimation, repeating Steps 2(a) and 2(b) $O(\sqrt{n})$ times, by \cref{cor:amplitude-estimation}. 
Let $\varphi^{(i)}$ denote the value of $\varphi$ at the $i\tth$ iteration of the algorithm. Neglecting polylog$(n/\varepsilon)$ factors, the running time of the $i\tth$ iteration is 
\begin{equation}
Q_i:=\frac{\sqrt{n}}{\varphi^{(i)}}.
\end{equation}
During the $i\tth$ iteration, we begin with $C-c=(2/3)^{i-1}$, and so $\varphi^{(i)}=\frac{1}{3}(2/3)^{i-1}$. Thus, we can compute the total complexity of the algorithm as (neglecting polylogarithmic factors):
\begin{equation}
\sum_{i=1}^TQ_i=\sqrt{n}\sum_{i=1}^{T}3(3/2)^{i-1} 
=3\sqrt{n}\frac{(3/2)^T-1}{3/2-1}
=\widetilde{O}\left(\sqrt{n}(3/2)^{\frac{\log(2/(\tau\varepsilon))}{\log(3/2)}}\right)
=\widetilde{O}\left(\frac{\sqrt{n}}{\tau\varepsilon}\right).
\end{equation}
By \cref{lem:sze1}, we have $\lambda_2(G(x))/n = \sin^2(\Delta(U(P,x))/2)\leq \Delta(U(P,x))^2/4$, so 
Filling in $\tau=\Delta(U(P,x))/\pi \geq \sqrt{\lambda_2(G(x))/(2n)}$, we get a total query complexity of $\widetilde{O}\left(\frac{n}{\varepsilon\sqrt{\lambda_2(G(x))}}\right)$.  
\end{proof}

Finally, we prove that the algorithm outputs an estimate that is within $\varepsilon$ multiplicative error of $\lambda_2(G(x))$.

\begin{theorem}[Correctness]
With probability at least $\Omega(1-(\varepsilon/n)^2)$, \cref{alg:est-alg} outputs an estimate $\tilde\lambda$ such that $\abs{\lambda_2(G(x))-\tilde\lambda}\leq \frac{\pi^23}{4}\varepsilon\lambda_2(G(x))$.
\end{theorem}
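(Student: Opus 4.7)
Conditioning on the event that every iteration of \cref{alg:est-alg} succeeds (which, by the previous running-time theorem, has probability $\Omega(1-(\varepsilon/n)^2)$), the invariant $\tau:=\Delta(U(P,x))/\pi\in[c,C]$ is maintained throughout, and the algorithm exits with an interval $[c,C]$ satisfying $C-c\le 2\varepsilon c\le 2\varepsilon\tau$. The first step of the plan is therefore to observe that the midpoint $\bar\tau:=(C+c)/2$ approximates $\tau$ to absolute error $(C-c)/2\le \varepsilon\tau$.

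Next, I will invoke \cref{lem:sze1} to write
$\lambda_2(G(x))=n\sin^2(\pi\tau/2)$ and the algorithm's output as $\tilde\lambda=n\sin^2(\pi\bar\tau/2)$. The whole statement thus reduces to a quantitative comparison of $\sin^2(\pi\bar\tau/2)$ and $\sin^2(\pi\tau/2)$ given that $|\bar\tau-\tau|\le\varepsilon\tau$ and $\tau\in[0,1]$. Setting $f(y)=\sin^2(\pi y/2)$, a second-order Taylor expansion of $f$ around $\tau$ with remainder bound, using $f'(y)=(\pi/2)\sin(\pi y)=\pi\sin(\pi y/2)\cos(\pi y/2)$ and $|f''|\le \pi^2/2$, gives
\begin{equation*}
|f(\bar\tau)-f(\tau)|\le \pi|\bar\tau-\tau|\sin(\pi\tau/2)\cos(\pi\tau/2)+\tfrac{\pi^2}{4}(\bar\tau-\tau)^2.
\end{equation*}

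The key analytic input is the elementary bound $\sin(\pi\tau/2)\ge\tau$ for $\tau\in[0,1]$, which lower bounds $f(\tau)$ by $\tau^2$. Dividing the two sides of the Taylor bound by $f(\tau)\ge\tau^2$ and substituting $|\bar\tau-\tau|\le\varepsilon\tau$ yields
\begin{equation*}
\frac{|\tilde\lambda-\lambda_2(G(x))|}{\lambda_2(G(x))}\le \pi\varepsilon+\tfrac{\pi^2}{4}\varepsilon^2\le \tfrac{3\pi^2}{4}\varepsilon,
\end{equation*}
where the last inequality holds for $\varepsilon\le 1$, yielding the claimed multiplicative error.

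The main potential obstacle I foresee is the blow-up of the relative error as $\tau\to 0$, i.e., as $G(x)$ becomes barely connected; this is precisely why the stopping condition is the \emph{multiplicative} $C-c\le 2\varepsilon c$ rather than an additive one, and why the argument hinges on the $\sin(\pi\tau/2)\ge\tau$ estimate combined with $|\bar\tau-\tau|\le\varepsilon c\le\varepsilon\tau$. Once that lower bound is in hand, the rest is a routine Taylor-remainder computation, and the failure probability bound follows directly from the previous theorem.
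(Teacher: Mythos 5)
Your proposal is correct and follows the same structural route as the paper: condition on success, reduce to comparing $\sin^2(\pi\bar\tau/2)$ with $\sin^2(\pi\tau/2)$ given $|\bar\tau-\tau|\le\varepsilon\tau$, and control the relative error using the key lower bound $\sin(\pi\tau/2)\ge\tau$ (equivalently $\lambda_2(G(x))\ge n\tau^2$). The only real difference is technical: you bound the $\sin^2$ increment via a first-order Taylor expansion with a Lagrange remainder ($|f''|\le\pi^2/2$), whereas the paper uses the exact product-to-sum identity $\sin^2 a-\sin^2 b=\sin(a+b)\sin(a-b)$ to write the increment as $|\sin(\pi\tau+\delta)\sin\delta|\le|\delta(\pi\tau+\delta)|$. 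Both routes produce an $O(\varepsilon\tau^2)$ bound on the absolute $\sin^2$ error, and after dividing by $\lambda_2(G(x))/n\ge\tau^2$ both land within the same target constant $\tfrac{3\pi^2}{4}$; your intermediate bound $\pi\varepsilon+\tfrac{\pi^2}{4}\varepsilon^2$ is in fact slightly sharper than the paper's. So this is not a genuinely different approach, just an alternate (and perfectly valid) way to do the elementary trigonometric estimate.
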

\begin{proof}
We will assume that all iterations succeed, which happens with probability at least $\Omega(1-(\varepsilon/n)^2)$.
Then the algorithm outputs $\tilde{\lambda}:=n\sin^2\left(\frac{\pi(C+c)}{4}\right)$ for some $c$ and $C$ such that $c\leq \tau\leq C$, and $C-c\leq 2\varepsilon c \leq 2\varepsilon \tau$. Using $\tau=\Delta(U(P,x))/\pi$ and $\lambda_2(G(x))=n\sin^2\left(\Delta(U(P,x))/2\right)$, we have:
\begin{equation}
\abs{\lambda_2(G(x))-\tilde\lambda} = \abs{n\sin^2\left(\pi\tau/2\right) - n\sin^2\left(\pi(C+c)/4\right)}.\label{eq:diff}
\end{equation}
From $c\leq \tau\leq C$ and $C-c\leq 2\varepsilon\tau$, we have 
\begin{equation}
\abs{\frac{\pi(C+c)}{4}-\frac{\pi\tau}{2}}\leq \frac{\pi\varepsilon\tau}{2}.
\end{equation}
Let $\delta = \pi(C+c)/4 - \pi\tau/2$, so $\pi(C+c)/4 = \pi\tau/2+\delta$. Then we have:
\begin{eqnarray}
\abs{\sin^2(\pi\tau/2) - \sin^2(\pi\tau/2 + \delta)}
& = & \abs{\frac{1-\cos(\pi\tau)}{2} - \frac{1-\cos(\pi\tau+2\delta)}{2}}\nonumber\\
&=& \frac{1}{2}\abs{\cos(\pi\tau+2\delta) - \cos(\pi\tau)}\nonumber\\
&=& \abs{\sin(\pi\tau+\delta)\sin(-\delta)}\nonumber\\
&\leq & \abs{\delta(\pi\tau+\delta)}
\; \leq \; \pi^2\tau^2\frac{\varepsilon}{2}\left(1+\frac{\varepsilon}{2}\right)
\;\leq \; \frac{3\varepsilon}{4}\pi^2\tau^2
\end{eqnarray}
where we used $|\delta|\leq \pi\varepsilon\tau/2$. Then, plugging this into Eq.~\eqref{eq:diff}, we have:
\begin{eqnarray}
\abs{\lambda_2(G(x))-\tilde\lambda} &\leq & \frac{3\varepsilon}{4}n\pi^2\tau^2\nonumber\\
&=& \frac{3\varepsilon}{4}n\pi^2\frac{\Delta(U(P,x))^2}{\pi^2} 
\;\leq \; \frac{3\varepsilon}{4}n\pi^2\sin^2\left(\frac{\Delta(U(P,x))}{2}\right)
\;=\; \pi^2\frac{3\varepsilon}{4}\lambda_2(G(x)),
\end{eqnarray}
using the fact that $\frac{x^2}{\pi^2}\leq\sin^2(x/2)$ when $x\in [-\pi,\pi]$. 
\end{proof}

\cref{thm:Connectivity-estimation} now follows, restated below for convenience.

\ConnEstimation*

\section{Acknowledgments} 
SJ is supported by an NWO WISE Grant and NWO Veni Innovational Research Grant under project number 639.021.752.
SK completed some of this work while at the Joint Center for Quantum Information and Computer Science (QuICS) at the
University of Maryland. This research was supported in part by Perimeter
Institute for Theoretical Physics. Research at Perimeter Institute is
supported by the Government of Canada through Industry Canada and by the
Province of Ontario through the Ministry of Economic Development and
Innovation.

\bibliographystyle{alpha}
\bibliography{capbib}

\appendix

\section{Effective Capacitance and Effective Conductance}\label{ap:effCapDef}

In this section, we provide more intuition for \cref{def:effCap} based on the
definition of effective capacitance and effective conductance. This section
roughly follows the explication of \cite[IX.1] {bollobas2013modern}.

Let $G$ be a graph with implicit weights $c$. As in \cref{sec:graphPrelim}, we
associate a subgraph $G(x)$ of $G$ with an electrical circuit in the following
way: we put a $0$-resistance wire at every edge $e\in E(G(x))$, and at every
edge $e\in E(G)\setminus E(G((x))$, we put a capacitor with capacitance
$c(e)$. Then we consider the effective capacitance of this circuit when a
voltage source is connected between $s$ and $t.$

When a voltage is applied between $s$ and $t$, some amount of charge flows
from the initially uncharged part of the circuit connected to $t$ to the
initially uncharged part of the circuit connected to $s$. Eventually, some
steady-state accumulation of charge on the capacitors is reached. Then the
effective capacitance is given by $Q/\sop E$, where $Q$ is the total amount of
charge that moves from the $t$ component to the $s$ component, and $\sop E$ is
the voltage applied to the battery.

We will show that $Q/\sop E$ is equal to the definition of effective
capacitance given in \cref{def:effCap}. Without loss of generality, we set
$\sop E=1.$ To determine $Q$, we first use the fact that charge is conserved.
We define a function $q:\overrightarrow{E}(G)\rightarrow
\mathbb{R}$, that tracks how charge moves in the circuit. For $(u,v,l)\in \overrightarrow{E}(G(x))$, we define 
$q(u,v,\edgeL)$ to be the amount of charge that is shifted through the edge from
$u$ to $v$ from the time that the battery is connected until a steady state
is reached. For  $(u,v,\edgeL)\in E(G)\setminus E(G(x))$, we define
$q(u,v,\edgeL)$ to be the amount of charge that accumulates on the
capacitor across $(\{u,v\},\edgeL)$, specifically on the side of the capacitor closest to vertex $u$.

Then by conservation of charge, we have
\begin{enumerate}
\item $\forall (u,v,\edgeL)\in \overrightarrow{E}(G)$, $q(u,v,\edgeL )=-q(v,u,\edgeL)$;
\item $\sum_{v,\edgeL:(s,v,\edgeL)\in \overrightarrow{E}(G)}q(s,v,\edgeL)=\sum_{v,\edgeL:(v,t,\edgeL)\in \overrightarrow{E}(G)}q(v,t,\edgeL)=Q$; and 
\item $\forall  u\in V(G)\setminus\{s,t\}$, $\sum_{v,\edgeL:(u,v,\edgeL )\in \overrightarrow{E}(G)}q(u,v,\edgeL )=0$. 
\end{enumerate}

We define $\overline{\sop V}:V(G)\rightarrow \mathbb{R}$ to be the voltage at
each vertex in the circuit at steady state, where without loss of generality,
we set $\overline{\sop V}(t)=0$. Then $\overline{\sop V}$ is a unit
$st$-potential. To see this, note first that $\overline{\sop V}(s)=1$ because
the voltage difference between $s$ and $t$ must be 1, and the voltage
difference across edges in $E(G(x))$ must be zero because these vertices are
connected by $0$-resistance wires.

However, $\overline{\sop V}$ must also satisfy the capacitance ratio across
each individual edge with a capacitor. That is, for each edge $(u,v,\edgeL)\in
\overrightarrow{E}(G)\setminus \overrightarrow{E}(G(x))$,
\begin{align}\label{eq:smallCap}
c(u,v,\edgeL)=\frac{q(u,v,\edgeL)}{\overline{\sop V}(u)-\overline{\sop V}(v)}.
\end{align}

Rearranging terms, applying the conservation of charge condition, and using the
fact that for $(u,v,\edgeL)\in \overrightarrow{E}(G(x))$, we have $\overline{\sop
V}(u)-\overline{\sop V}(v)=0$, we find that for each $u\in V(G)\setminus\{s,t\}$,
\begin{align}\label{eq:conserve}
\sum_{v,\edgeL:(u,v,\edgeL )\in \overrightarrow{E}(G)}\left(\overline{\sop V}(u)-\overline{\sop V}(v)\right)c(u,v,\edgeL)=0. 
\end{align}
Looking at \cref{eq:conserve}, we see that
\begin{align}\label{eq:CappPre}
\overline{\sop V}=\argmin_{\sop V}\frac{1}{2}\sum_{(u,v,\edgeL)\in \overrightarrow{E}(G)}\left(\sop V(u)-\sop V(v)\right)^2c(u,v,\edgeL),
\end{align} 
where the minimization is over unit $st$-potentials on $G(x)$.
To see this, note that the minimum occurs when the derivative with respect to $\sop V(v)$ is zero. 

Therefore
\begin{align}\label{eq:Capp2}
\min_\sop V\frac{1}{2}\sum_{(u,v,\edgeL)\in \overrightarrow{E}(G)}\left(\sop V(u)-\sop V(v)\right)^2c(u,v,\edgeL)&=\frac{1}{2}\sum_{(u,v,\edgeL)\in \overrightarrow{E}(G)}\left(\overline{\sop V}(u)-\overline{\sop V}(v)\right)^2c(u,v,\edgeL)
\nonumber\\
&=\frac{1}{2}\sum_{(u,v,\edgeL)\in \overrightarrow{E}(G)}\left(\overline{\sop V}(u)-\overline{\sop V}(v)\right)q(u,v,\edgeL).
\end{align}
Using conservation of charge, \cref{eq:Capp2} becomes
\begin{align}
\sop V(s)\sum_{v,\edgeL:(s,v,\edgeL)\in \overrightarrow{E}(G)}q(s,v,\edgeL)-\sop V(t)\sum_{v,\edgeL:(v,t,\edgeL)\in \overrightarrow{E}(G)}q(v,t,\edgeL)=Q.
\end{align}
Thus
\begin{align}
\min_\sop V\frac{1}{2}\sum_{(u,v,\edgeL)\in \overrightarrow{E}(G)}\left(\sop V(u)-\sop V(v)\right)^2c(u,v,\edgeL)=Q/\sop E,
\end{align}
so the two notions of capacitance (from electrical circuits and \cref{def:effCap}) coincide. (Recall we have set $\sop E=1$.)

\begin{figure}[h!]
\centering
\begin{tikzpicture}
\node at (0,0) {\begin{tikzpicture}
\draw[dashed, red] (1,0) -- (.25,1) -- (1,.5) -- (1.1,1.2);

\draw[dashed, green] (1.1,1.2) -- (2,.7);

\draw[dashed, blue] (1.2,1.7) -- (2.2,1.7) -- (1.1,1.2);

\draw[dashed, purple] (2.2,1.7) -- (2.9,1);

\draw[dashed, orange] (2,.7) -- (1.5,.4) -- (2.4,.1);

\filldraw (.4,1.6) circle (.1);
\filldraw (1.2,1.7) circle (.1);
\filldraw (1.1,1.2) circle (.1);
\filldraw (.25,1) circle (.1);
	\draw[thick] (.25,1) -- (.4,1.6) -- (1.2,1.7) -- (1.1,1.2) -- (.4,1.6);

\filldraw (2.2,1.7) circle (.1);

\filldraw (2,.7) circle (.1);
\filldraw (2.5,.7) circle (.1);
\filldraw (2.9,1) circle (.1);
\filldraw (2.4,.1) circle (.1);
	\draw[thick] (2,.7) -- (2.5,.7)--(2.9,1);
	\draw[thick] (2.4,.1) -- (2.5,.7);

\filldraw (1,.5) circle (.1); 	\filldraw (1.5,.4) circle (.1);
\filldraw (1,0) circle (.1);
	\draw[thick] (1,.5) -- (1.5,.4) -- (1,0) -- (1,.5);
	\node at (1.5,-.7) {$G(x)$};
\end{tikzpicture}};

\node at (5,0) {\begin{tikzpicture}
\draw[blue] plot[smooth] coordinates {(0,1) (.5,1.1) (1,1)};
\draw[blue] plot[smooth] coordinates {(0,1) (.5,.9) (1,1)};

\draw[red] plot[smooth] coordinates {(0,1) (-.2,.5) (0,0)};
\draw[red] (0,1) -- (0,0);
\draw[red] plot[smooth] coordinates {(0,1) (.2,.5) (0,0)};

\draw[green] (0,1) -- (1,0);

\draw[purple] (1,1) -- (1,0);

\draw[orange] plot[smooth] coordinates {(0,0) (.5,.1) (1,0)};
\draw[orange] plot[smooth] coordinates {(0,0) (.5,-.1) (1,0)};

\filldraw (0,1) circle (.1); \filldraw (1,1) circle (.1);
\filldraw (0,0) circle (.1); \filldraw (1,0) circle (.1);
	\node at (.5,-1.2) {$G(x)^\complement$};
\end{tikzpicture}};

\end{tikzpicture}
\caption{Edges in $G(x)$ are shown using solid lines, while edges in $G\setminus G(x)$ are shown using dashed lines. In $G(x)^\complement$, each of the four connected components of $G(x)$ becomes a vertex, and the number of edges between vertices depends on the number of edges in $G\setminus G(x)$ connecting one component to another.}\label{fig:capacitance-graph}
\end{figure}
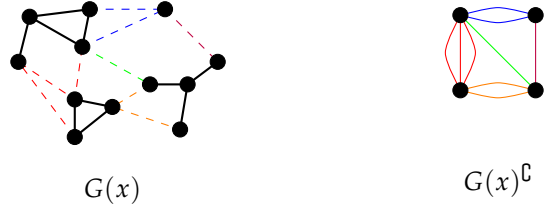

Other readers may recognize \cref{def:effCap} as the formula for effective
conductance of a network (see e.g. \cite{bollobas2013modern}), where the
effective conductance is the inverse of the effective resistance of a graph.
However, the network for which \cref{def:effCap} is the effective conductance
is a bit strange. Given a network $\sop N=(G,c)$ and a subgraph $G(x)$, create
a network $(G(x)^\complement, c^\complement)$ where each connected component
$w$ in $G(x)$ corresponds to a vertex $v_w$ in $G(x)^\complement$. Then for
every edge $(u,v,\edgeL)\in E(G)\setminus E(G(x))$ such that $u$ is in one
connected component $w$ in $G(x)$, and $\eta$ is in another connected
component $y$ in $G(x)$, create an edge in $G(x)^\complement$ between $v_w$
and $v_y$ with weight
$c^\complement(\{v_w,v_y\},\edgeL)=c(\{u,\eta\},\edgeL).$ Let $v_s$ be the
connected component containing $s$ and $v_t$ be the connected component
containing $t$. Then the effective conductance of $G(x)^\complement$ between
$v_s$ and $v_t$ is given by \cref{def:effCap}. \cref{fig:capacitance-graph}
shows the correspondence between a graph $G(x)$ and $G(x)^\complement$.

%%%%%%%%%%%%%%%%%%%%%%%%%%%%%%%%%%%%%%%%%%%%%%%%%%%%%%%%%%%%%%%%%%%%%%%%%%%%%%%%

\end{document}